\newtheorem{theorem}{Theorem}[section]
\newtheorem{definition}[theorem]{Definition}
\newtheorem{remark}[theorem]{Remark}
\newtheorem{proposition}[theorem]{Proposition}
\newtheorem{lemma}[theorem]{Lemma}
\newtheorem{claim}[theorem]{Claim}
\newtheorem{example}[theorem]{Example}
\newtheorem{corollary}[theorem]{Corollary}
\newcommand{\un}{\mathsf{u}}
\newcommand{\fmux}{\mathsf{MUX}}
\newcommand{\wfmux}{\mathsf{\widehat{MUX}}}
\newcommand{\fmaj}{\mathsf{MAJ}}
\newcommand{\fplu}{\mathsf{P}}
\newcommand{\fand}{\mathsf{AND}}
\newcommand{\fcm}{\mathsf{CM}}
\newcommand{\size}{\mathsf{size}}
\newcommand{\depth}{\mathsf{depth}}
\newcommand{\cc}{\mathsf{cc}}
\newcommand{\sen}{\mathsf{s}}
\newcommand{\bsen}{\mathsf{bs}}
\newcommand{\zuo}{\{0,\un,1\}}
\newcommand{\zuon}{\zuo^n}
\newcommand{\zo}{\{0,1\}}
\newcommand{\zon}{\zo^n}
\newcommand{\bnot}[1]{\overline{#1}}
\newcommand{\hfs}{\sen_u}
\newcommand{\hfdt}{\depth_u}
\newcommand{\hfcc}{\cc_u}
\newcommand{\hfbs}{\bsen_u}
\newcommand{\stabs}{\textrm{\sf stabs}}
\newcommand{\slys}{\textrm{\sf slys}}
\begin{document}

\title{Hazard-free Decision Trees}
\author{Deepu Benson\footnote{CVV Institute of Science and Technology, Ernakulam, bensondeepu@gmail.com}, Balagopal Komarath\footnote{IIT Gandhinagar, bkomarath@rbgo.in}, Jayalal Sarma\footnote{IIT Madras, jayalal@cse.iitm.ac.in}, Nalli Sai Soumya\footnote{Microsoft India, cs20b053@smail.iitm.ac.in}}

\maketitle

\begin{abstract}
Decision trees are one of the most fundamental computational models for computing Boolean functions $f : \{0, 1\}^n \mapsto \{0, 1\}$. It is well-known that the depth and size of decision trees are closely related to time and number of processors respectively for computing functions in the CREW-PRAM model. For a given $f$, a fundamental goal is to minimize the depth and/or the size of the decision tree computing it. 

In this paper, we extend the decision tree model to the world of hazard-free computation. We allow each query to produce three results: zero, one, or unknown. The output could also be: zero, one, or unknown, with the constraint that we should output "unknown" only when we cannot determine the answer from the input bits. This setting naturally gives rise to ternary decision trees computing functions, which we call \emph{hazard-free decision trees}. We prove various lower and upper bounds on the depth and size of hazard-free decision trees and compare them to their Boolean counterparts:
\begin{itemize}
    \item We show that there can be an exponential gap between Boolean decision tree depth (resp. size) denoted $\depth(f)$(resp. $\size(f)$) and hazard-free decision tree depth (resp. size) denoted by $\depth_u(f)$ ($\size_u(f)$) resp.). In particular, the multiplexer function needs only logarithmic depth in the Boolean model but needs full (linear) depth in the hazard-free model. Also, the AND function can be computed in linear size in the Boolean model but requires exponential size in the hazard-free model.
    \item We demonstrate that there are non-trivial functions that are easy in this setting, by constructing an explicit non-degenerate function (family) that can be computed by logarithmic depth hazard-free decision trees. 
    \item We present general methods to construct Hazard-free decision trees from Boolean decision trees. From a Boolean decision tree of size $s$, we construct a hazard-free decision tree of size at most $2^s - 1$ and this is optimal. We also parameterize this result by the maximum number of unknown values in the input. Given a Boolean decision tree of size $s$ and depth $d$, we construct a hazard-free decision tree of size at most $s^{2^{k+1}-1}$ and depth at most $2^k\cdot d$ that will output the correct value as long as the input has at most $k$ unknown values.
    \item On the lower bound side, we show that $\size_u(f) \ge m+M$ where $m$ is the number of prime implicants and $M$ is the number of prime implicates. Using this, we derive explicit functions with close to asymptotically maximum hazard-free decision tree complexity. We also show that $\size_u(f) \geq 2\size(f) - 1$. This lower bound is tight.
\end{itemize}

Viewing the hazard-free decision tree complexity of the hazard-free extension of the function as a complexity measure, we now explore its connection to hazard-free counterparts of other well-studied structural measures in the Boolean world. For example, in the Boolean world, the celebrated sensitivity theorem shows that the sensitivity of a function, a structural parameter, is polynomially equivalent to the (Boolean) decision tree depth.

\begin{itemize}
    \item  We show that the analogues of sensitivity, block sensitivity, and certificate complexity for hazard-free functions are all polynomially equivalent to each other and to hazard-free decision tree depth. i.e., we prove the sensitivity theorem in the hazard-free model. 
    \item We then prove that hazard-free sensitivity satisfies an interesting structural property that is known to hold in the Boolean world. Hazard-free functions with small hazard-free sensitivity are completely determined by their values in any Hamming ball of small radius in $\{0, \un, 1\}^n$. 
\end{itemize}

\end{abstract}

\tableofcontents

\section{Introduction}

A \emph{decision tree} models computation in the following setting. There are two parties: the \emph{querier} and the \emph{responder}. They share a Boolean function $f : \{0, 1\}^n \mapsto  \{0, 1\}$. There is an $x\in \{0, 1\}^n$ that is only known to the responder. The  querier's goal is to determine $f(x)$ by issuing queries $x_i$, asking for the $i^\text{th}$ bit of the input, to the responder. The \emph{worst-case time} taken by the querier is the maximum number of queries required to find the answer, where the maximum is taken over all $x$. 
The querier's aim is to minimize the number of queries required to determine the value of the function, in all cases, even if the function depends on all its inputs.

\begin{example}
    Consider the function $\fmux_1(s_0, x_0, x_1) := \bnot{s_0}x_0 + s_0x_1$ \footnote{Following standard notation in electronic circuit design, we use $+$ to denote Boolean OR, juxtaposition or multiplication for AND, and overline for negation.}. This function clearly depends on all three inputs. However, there is a strategy where the time taken is at most two. First, the querier asks for the value of $s_0$. If $s_0 = 0$, then the querier asks for $x_0$ and the answer to this query is the output. If $s_0 = 1$, then the querier asks for $x_1$ and the answer to this query is the output.
\end{example}

A \emph{decision tree} represents the strategy of the querier using a tree structure. The root of the tree represents the first query. The downward (upward) arrow is the decision tree followed by the querier if the reply to the first query is $0$ ($1$, resp.). Finally, the leaves of the tree are labeled with $0$ or $1$ and represent the answer determined by the querier. The \emph{depth} of the decision tree represents the worst-case time.

\begin{figure}[h]
\centering
\begin{subfigure}[b]{0.35\textwidth}
\begin{tikzpicture}[grow=right]
  \tikzset{level distance=1.25cm}
  \Tree [.{$s_0$} [.{$x_0$} [.0 ] [.1 ] ] [.{$x_1$} [.0 ] [.1 ] ] ]
\end{tikzpicture}
\caption{Boolean decision tree for $\fmux_1$\label{fig:mux1}}
\end{subfigure}
\begin{subfigure}[b]{0.5\textwidth}
\begin{tikzpicture}[grow=right]
    \tikzset{level distance=2cm}
    \Tree [.{$s_0$}
      [.{$x_0$} [.0 ] [.{$\un$} ] [.1 ] ]
      [.{$x_0$} [.{$x_1$} [.0 ] [.{$\un$} ] [.{$\un$} ] ]
           [.{$\un$} ]
           [.{$x_1$} [.{$\un$} ] [.{$\un$} ] [.1 ] ] ]
      [.{$x_1$} [.0 ] [.{$\un$} ] [.1 ] ]
    ]
\end{tikzpicture}
    \caption{Hazard-free decision tree for $\fmux_1$\label{fig:hfmux1}}
\end{subfigure}
\caption{Decision trees for $\fmux_1$}
\end{figure}

In this paper, we consider decision trees in the presence of \emph{uncertainty}. Uncertainty is modeled using Kleene's logic of uncertainty following its use in various computational models for the same purpose \cite{Huff,ikenmeyer2019complexity,IKS}. This algebra is over a three element set $\{ 0, \un, 1 \}$ where $0$ and $1$ are \emph{certain} or \emph{stable} values and $\un$ represents all \emph{uncertain} values and is considered the canonical \emph{unstable} value. Useful computation can happen even in the presence of uncertainty. Consider the function $\fmux_1$ and the sequence of queries and replies: $\langle s_0, \un \rangle, \langle x_0, 1 \rangle, \langle x_1, 1 \rangle$. The querier can determine that the output of the function is $1$ even though the value of one of the queried bits is $\un$. For any Boolean function, there is a unique extension to the domain $\zuon$ that outputs stable values whenever possible. This function is called the hazard-free extension of $f$ (See Definition~\ref{def:hfe}).

\begin{example}\label{ex:hfmux1}
    Consider the hazard-free extension of $\fmux_1$. We will prove that three queries are required by forcing the querier to query all inputs. Suppose the querier asks for $s_0$ first, then we reply $\un$ and now the querier is forced to query both $x_0$ and $x_1$. Suppose the querier asks for $x_0$ first, then we reply $1$. Now, if the second query is $x_1$, we reply $0$ and now the querier is forced to ask for $s_0$. If the second query is $s$, we reply $1$ and force the querier to ask for $x_1$. Similarly, we can argue that querier has to ask for all the bits if the first query is for $x_1$.
\end{example}

\begin{figure}

\end{figure}

We can represent decision trees in the presence of uncertainty using ternary trees. The tree in Figure~\ref{fig:hfmux1} represents the querier strategy given in Example~\ref{ex:hfmux1}. The above example shows that the time taken or the depth of a decision tree may increase in the presence of uncertainty. How worse can it get?

The Boolean function
$$\fmux_n(s_0, s_1, \dotsc, s_{n-1}, (x_{(b_0, \dotsc, b_{n-1})})_{b_i\in\{0,1\}}) = x_{(s_0, \dotsc, s_{n-1})}$$
is a function on $n+2^n$ inputs that depends on all its inputs and has a Boolean decision tree of depth $n+1$ that computes it. The bits $s_i$ are called the \emph{selector} bits and $x_j$ are called \emph{data} bits. Notice that any decision tree that computes a function that depends on all its $N$ inputs must have logarithmic (in $N$) depth. Therefore, $\fmux_n$ is one of the easiest functions to compute in the Boolean decision tree model. Notice that the Boolean decision tree for $\fmux_1$ in Figure~\ref{fig:mux1} has depth two. In contrast, the hazard-free decision tree for $\fmux_1$ in Figure~\ref{fig:hfmux1} has depth three. We generalize this observation by showing that $\fmux_n$ requires full ($2^n+n$) depth in the presence of uncertainty. This shows that the time taken can be exponentially more in the presence of uncertainty.

\begin{restatable}{theorem}{muxdepth}
    \label{thm:muxdepth}
    $\depth_\un(\fmux_n) = 2^n+n$
\end{restatable}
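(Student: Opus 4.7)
The upper bound $\depth_\un(\fmux_n) \le 2^n + n$ is trivial, since one can always query every one of the $n + 2^n$ input bits. For the matching lower bound I plan to describe an adversary that, against any hazard-free decision tree computing $\fmux_n$, forces a root-to-leaf path of length at least $2^n + n$.

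The adversary, motivated by Example~\ref{ex:hfmux1}, answers every selector query with $\un$, answers each of the first $2^n - 1$ distinct data bit queries with $0$, and answers the $2^n$-th data bit query with $1$. These replies are consistent with the input $x^*$ that assigns $\un$ to every selector, assigns $1$ to the data bit $x_{j^*}$ at whichever index $j^*$ happens to be queried last, and assigns $0$ to every other data bit; in particular $f_\un(x^*) = \un$.

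The heart of the argument will be showing that whenever fewer than $n + 2^n$ queries have been made against this adversary, the set of inputs consistent with the replies so far realizes at least two different hazard-free values, so no correct leaf of the tree can be placed at that node. Let $Q_s$ and $Q_d$ denote the queried selector and data indices. If $|Q_d| < 2^n$, I will compare the input $x^{(0)}$—agreeing with the replies, all unqueried selectors set to $\un$, all unqueried data bits set to $0$—against its modification $x^{(\un)}$ in which one unqueried data bit is instead set to $1$: the first has $f_\un = 0$ because every data bit is stable $0$, while the second has $f_\un = \un$ because the data is now mixed with every selector still $\un$. If $|Q_d| = 2^n$ but $|Q_s| < n$, the adversary has by then committed to a specific $j^*$; leaving every unqueried selector $\un$ again yields $f_\un = \un$, whereas stably setting each unqueried selector $s_i$ to the bit $1 - (j^*)_i$ makes every stable selector pattern differ from $j^*$ in at least one coordinate, so $j^*$ is never selected in any stable completion and every selected data bit is $0$, giving $f_\un = 0$.

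The subtlety I would take most care over is the second case: I need to verify that the proposed stable completion of the unqueried selectors genuinely lies in the consistency set (it does, since unqueried selectors are still free to take any value), and that this completion is available precisely when $|Q_s| < n$, which is exactly when we need to switch from the $f_\un = \un$ witness to the $f_\un = 0$ witness. Combining the two cases, the adversary's run can terminate only once $|Q_s| = n$ and $|Q_d| = 2^n$ are both reached, so any correct hazard-free decision tree has a root-to-leaf path of length at least $2^n + n$.
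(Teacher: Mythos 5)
Your proof is correct and follows essentially the same adversarial strategy as the paper, which answers all selector queries with $\un$ and distinguishes the last-queried data bit (you merely swap the roles of $0$ and $1$ in the data replies and argue indistinguishability node-by-node rather than via the paper's end-of-game invariants). The only difference of substance is that the paper proves the more general statement for $k$-bit hazard-free decision trees, obtaining depth exactly $2^k + n$ for every $k \in [0,n]$, of which your argument is the $k = n$ specialization.
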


\paragraph{Proof idea. } The depth lower bound is proved using a standard adversarial argument that extends the argument in Example~\ref{ex:hfmux1}. \qed

Nisan showed that functions with decision tree of depth $d$ can be evaluated by CREW-PRAM\footnote{Consecutive Read Exclusive Write Parallel RAM (CREW-PRAM) model is a collection of synchronized processors computing in parallel with access to a shared memory with no write conflicts.} in $\log(d)$ time. The number of processors required by this algorithm is at most the \emph{size} of the decision tree. The size of a decision tree is the number of leaves in the tree\footnote{Since decision trees are complete binary trees, the number of leaves is always exactly one more than the number of internal nodes. Therefore, this measure is at most a factor of two less than the total number of nodes.}. Observe that the Boolean decision tree for $\fmux_1$ in Figure~\ref{fig:mux1} only has four leaves whereas the tree in Figure~\ref{fig:hfmux1} has thirteen leaves. We prove that a quadratic blowup in size is necessary in general.

\begin{restatable}{theorem}{muxsize}
    \label{thm:muxsize}
    $2\cdot 4^n \leq \size_\un(\fmux_n) \leq 4^{n+1} - 3^n$.
\end{restatable}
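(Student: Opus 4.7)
For the upper bound, my plan is a recursive construction parameterized by a family of ``agreement'' problems. Let $C_k(m)$ denote the HFDT size for deciding whether $2^k$ copies of $\fmux_m$ (sharing the same $m$ selector bits, but each having its own $2^m$ data bits) all evaluate to the same stable value; in particular $\size_u(\fmux_n) = C_0(n)$. The strategy queries one of the shared selectors $s_1$ first. On $s_1 \in \{0,1\}$, each copy of $\fmux_m$ restricts to $\fmux_{m-1}$ on the corresponding half of its data, giving a $C_k(m-1)$ sub-instance. On $s_1 = \un$, each copy must agree on both halves simultaneously with every other copy, giving a $C_{k+1}(m-1)$ sub-instance. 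The base case $C_k(0)$ decides stable equality of $2^k$ bits and admits a size-$(4\cdot 2^k - 1)$ tree that queries the bits in sequence and short-circuits to $\un$ at the first disagreement. Solving $C_k(m) = 2 C_k(m-1) + C_{k+1}(m-1)$ with $C_k(0) = 4\cdot 2^k - 1$ gives $C_k(m) = 2^k \cdot 4^{m+1} - 3^m$, and $C_0(n) = 4^{n+1} - 3^n$ is the desired upper bound.

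For the lower bound, I would prove by induction on $n$ the stronger statement $\size_u(C_k(n)) \geq 2^{k+1} \cdot 4^n$ for every $k \geq 0$; setting $k = 0$ yields $\size_u(\fmux_n) \geq 2 \cdot 4^n$. The base $n = 0$ reduces to a HFDT lower bound for $2^k$-bit equality, easily giving $\geq 2^{k+1}$. For the inductive step, take any HFDT $T$ for $C_k(n)$ and case on the root query. If the root queries a selector $s_i$, the three subtrees compute $C_k(n-1)$, $C_k(n-1)$, and $C_{k+1}(n-1)$ respectively, by the same reductions used in the upper bound. The induction hypothesis bounds these by $2^{k+1} \cdot 4^{n-1}$, $2^{k+1} \cdot 4^{n-1}$, and $2^{k+2} \cdot 4^{n-1}$, summing to $2 \cdot 2^{k+1} \cdot 4^{n-1} + 2^{k+2} \cdot 4^{n-1} = 2^{k+1} \cdot 4^n$.

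The main obstacle will be the case where the root of $T$ queries a data bit rather than a selector. Restricting a single data bit of one copy does not reduce the hierarchy index $k$, and the restricted function is not a clean $C_{k'}(n')$ instance. My plan is to show that each of the three subtrees (for the data bit taking value $0$, $1$, or $\un$) still contains an embedded $C_k(n-1)$ sub-instance---for example, by further restricting the selectors so that the fixed bit is never the one selected, leaving an untouched $C_k(n-1)$ computation---and to account for the extra contribution of the $\un$-subtree, where fixing a data bit to $\un$ introduces sensitivity across every selector value that could select it, forcing a $C_{k+1}$-flavored subproblem. As a fallback I would pursue a direct distinguishing-input argument: the $2 \cdot 3^n$ inputs indexed by $(\tau, c) \in \{0, \un, 1\}^n \times \{0,1\}$ (with selectors set to $\tau$, data set to $c$ on indices compatible with $\tau$'s stable coordinates, and $\un$ elsewhere) can be shown to reach pairwise distinct leaves via a short inconsistency check on any proposed shared leaf, and the remaining task is to enrich this family by a data-bit variation to recover the missing factor of $(4/3)^n$.
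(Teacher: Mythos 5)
Your upper bound is correct and, despite the recursive packaging, it produces essentially the paper's tree: unwinding the recursion $C_k(m) = 2C_k(m-1) + C_{k+1}(m-1)$ queries all $n$ selector bits before any data bit, and each base case $C_k(0)$ is exactly the short-circuiting scan of the $2^k$ data bits indexed by resolutions of the selector outcome $\alpha$, contributing $2^{k+2}-1$ leaves per $\alpha$ --- the same per-$\alpha$ count the paper sums directly to obtain $4^{n+1}-3^n$. The recurrence and its solution $C_k(m)=2^k\cdot 4^{m+1}-3^m$ check out, so this half is fine.

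The lower bound has a genuine gap, and it sits exactly where you flagged it: the data-bit root case. Your induction needs the three subtrees to contribute $(a,a,2a)$ with $a=2^{k+1}4^{n-1}$, but when the root queries a data bit, the restriction argument you sketch (fix a selector bit to exclude the queried index and embed a $C_k(n-1)$ instance) yields only $(a,a,a)$, a loss of a factor $3/4$ that compounds if the adversary tree queries several data bits near the root, so the induction cannot close. The hoped-for ``$C_{k+1}$-flavored subproblem'' in the $\un$-branch does not materialize: the index $k$ doubles only when a \emph{shared selector} becomes $\un$, splitting every copy into two half-copies, whereas fixing a single data bit of a single copy to $\un$ affects only that copy and only the selector patterns covering that index; indeed it forces that copy's output to $\un$ on many selector patterns, which permits \emph{early termination} rather than forcing extra work, so there is no a priori doubling. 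Your fallback family of $2\cdot 3^n$ inputs is likewise far short of $2\cdot 4^n$, and ``enriching it'' is precisely the missing content. The paper closes this gap with a global counting argument that avoids any case analysis on the root query: it partitions the leaves whose root-to-leaf path queries all selector bits into classes $L_\alpha$, $\alpha\in\zuon$, and shows $|L_\alpha|\geq 2^{k+1}$ when $\alpha$ has $k$ unstable bits, using two hard families --- selectors set to $\alpha$, indexed data bits all $1$ except a single $0$ (the family $I_\alpha$), or all $1$ except a single $\un$ (the family $J_\alpha$). The key forcing lemma is that each such input forces every selector bit \emph{and} its unique special data bit to be queried (any unqueried one could be changed to alter the output), so the $2^{k+1}$ inputs per $\alpha$ reach pairwise distinct leaves, and $\sum_{k=0}^{n} 2^{k+1}\binom{n}{k}2^{n-k}=2\cdot 4^n$. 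This forcing lemma --- in particular the $\un$-data-bit family $J_\alpha$, whose leaves are automatically distinct from those of $I_\alpha$ --- is the idea your proposal is missing.
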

\paragraph{Proof idea. } The size upper bound is obtained by a straight-forward generalization of the tree in Figure~\ref{fig:hfmux1}. The proof of the size lower bound is trickier and involves carefully crafting a set of hard inputs that must reach distinct leaves in the tree. \qed

We also present a general construction of hazard-free decision trees from Boolean decision trees. In fact, our upper bound construction for $\fmux_n$ can be viewed as an optimization of the upper bound construction in the following theorem.
\begin{restatable}{theorem}{dthfdt}
    \label{thm:dthfdt}
    $2\size(f)-1 \leq \size_\un(f) \leq 2^{\size(f)} - 1$.
\end{restatable}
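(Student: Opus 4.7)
The plan is to prove the two inequalities separately: the upper bound by a recursive construction and the lower bound by pruning the $\un$-branches of an optimal hazard-free decision tree.

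For the upper bound $\size_\un(f) \leq 2^{\size(f)} - 1$, I would induct on $s := \size(f)$. The base case $s=1$ is trivial since $f$ is constant. For $s \geq 2$, let $T$ be an optimal Boolean decision tree whose root queries some bit $x_i$ and has subtrees $T_0, T_1$ of sizes $s_0, s_1$ with $s_0 + s_1 = s$; these compute $f|_{x_i=0}$ and $f|_{x_i=1}$. The inductive hypothesis yields hazard-free decision trees $H_0, H_1$ of sizes at most $2^{s_0}-1$ and $2^{s_1}-1$. Build $H$ by querying $x_i$ at the root, routing response $0$ into $H_0$, response $1$ into $H_1$, and response $\un$ into a \emph{product tree}: attach a copy of $H_1$ to each leaf of $H_0$, and relabel each combined leaf $(b_0, b_1)$ with $b_0$ when $b_0 = b_1$ and with $\un$ otherwise. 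Correctness follows from the semantic identity $f_\un(\un, x') = f_\un(0, x') \sqcap f_\un(1, x')$, where $\sqcap$ is the ``agree-or-$\un$'' meet, which is exactly what the product leaves compute. The size then satisfies $|H| = |H_0| + |H_1| + |H_0|\cdot|H_1|$, and via the identity $(a-1)+(b-1)+(a-1)(b-1) = ab-1$ with $a = 2^{s_0}, b = 2^{s_1}$ we obtain $|H| \le 2^s - 1$.

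For the lower bound $\size_\un(f) \geq 2\size(f) - 1$, I take an optimal hazard-free decision tree $H$ with $L = \size_\un(f)$ leaves and delete every $\un$-child subtree at every internal node, producing a binary tree $T$ with $I'$ internal nodes and $L' = I' + 1$ leaves. The key observation is that every $\un$-labelled leaf of $H$ must lie below at least one $\un$-edge on its root-to-leaf path: otherwise a Boolean input would reach it and force $f_\un$ to evaluate to $\un$ on a Boolean input, contradicting correctness. Hence every surviving leaf of $T$ is labelled in $\{0, 1\}$ and $T$ is a valid Boolean decision tree for $f$. To relate $L$ and $L'$, note that each of the $I'$ pruned $\un$-subtrees contains at least one leaf of $H$ and that these subtrees are pairwise disjoint, so $L \geq L' + I' = 2L' - 1$; this yields $\size(f) \le L' \le (L+1)/2$ as desired.

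The delicate step is the correctness of the product construction in the upper bound: I must check that sequentially composing $H_1$ after $H_0$ on a shared sub-input and combining leaf labels via $\sqcap$ really does implement $f_\un(\un, x')$. This uses both that $H_0, H_1$ never query $x_i$ (they decide the restrictions) and that the hazard-free extension decomposes as stated, which I would verify directly from the definition of $f_\un$. Everything else, including the algebraic collapse to $2^s - 1$ and the disjointness counting in the lower bound, is routine bookkeeping.
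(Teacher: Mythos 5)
Your proof is correct and follows essentially the same approach as the paper: the lower bound by pruning all $\un$-subtrees from an optimal hazard-free tree and counting one extra leaf per surviving internal node, and the upper bound by the same product construction for the $\un$-branch with the recurrence $(|H_0|+1)(|H_1|+1)-1$. The only differences are organizational — you induct on $\size(f)$ top-down via the explicit identity $\widetilde{f}(\un,x') = \widetilde{f}(0,x') \sqcap \widetilde{f}(1,x')$, whereas the paper builds $\un$-subtrees bottom-up on the fixed tree and additionally prunes branches inconsistent with the path (an optimization irrelevant to the stated bound).
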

\paragraph{Proof idea. } The lower bound is easy. The upper bound is proved by recursively constructing a hazard-free decision tree from a Boolean decision tree. We construct the $\un$ subtree of the root node in the hazard-free decision tree from the $0$ and $1$ subtrees that have been constructed recursively. The $\un$ subtree construction can be viewed as a product construction where we replace each leaf in a copy of the $0$ subtree with a modified copy of the $1$ subtree. This product construction works because for every input that reaches the $\un$ subtree, the output value can determined by the output values when that bit is $0$ and when it is $1$, which is exactly what the $0$ and $1$ subtrees compute. \qed

The lower bound in this proof is tight as witnessed by the parity function on $n$ inputs.
Notice that the size blowup when constructing hazard-free decision trees using the above procedure can be exponential. We can show that this is unavoidable by considering the $\fand_n$ function. This function has a Boolean decision tree of size $n+1$.
\begin{restatable}{theorem}{andsize}
    \label{thm:andsize}
    $\size_\un(\fand_n) = 2^{n+1}-1$.
\end{restatable}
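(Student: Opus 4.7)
The upper bound is immediate from Theorem~\ref{thm:dthfdt}: the standard Boolean decision tree for $\fand_n$ (query the bits one by one and halt as soon as a $0$ is seen) has exactly $n+1$ leaves, so $\size_\un(\fand_n) \leq 2^{\size(\fand_n)}-1 = 2^{n+1}-1$.

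For the lower bound, the plan is to set up a coupled induction involving an auxiliary ``detect-a-zero'' function. Define $g_k : \zuo^k \to \zuo$ by $g_k(x) = 0$ if some coordinate of $x$ equals $0$ and $g_k(x) = \un$ otherwise; this is precisely the function computed in the $\un$-subtree of the root of any hazard-free decision tree for $\fand_{k+1}$, since fixing one input of AND to $\un$ yields output $0$ whenever any remaining bit is $0$ and $\un$ in every other case. The first step is to show $\size_\un(g_k) \geq 2^{k+1}-1$ by induction on $k$, with base case $g_0 \equiv \un$ handled by a single leaf. For $k \geq 1$ the function $g_k$ is non-constant, so the root of any hazard-free decision tree for it must query some variable, which by symmetry I take to be $x_1$. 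The $0$-subtree outputs $0$ throughout and contributes at least one leaf. The key observation is that both the $1$-subtree \emph{and} the $\un$-subtree restrict to $g_{k-1}$ on the remaining coordinates, each contributing at least $2^k-1$ leaves by the inductive hypothesis. This gives $\size_\un(g_k) \geq 1 + 2(2^k-1) = 2^{k+1}-1$.

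With this lemma in hand, I run the analogous induction on $n$ for $M_n := \size_\un(\fand_n)$. The root of any tree queries some $x_i$, WLOG $x_1$. Its $0$-subtree contributes at least $1$ leaf; its $1$-subtree computes $\fand_{n-1}$ on the remaining coordinates and contributes at least $M_{n-1}$ leaves; and its $\un$-subtree computes $g_{n-1}$ and contributes at least $2^n - 1$ leaves by the preceding claim. Summing yields $M_n \geq M_{n-1} + 2^n$, which together with the base case $M_0 = 1$ unrolls to $M_n \geq 2^{n+1} - 1$, matching the upper bound. The conceptual obstacle to watch for is that the $\un$-branch at the root does \emph{not} recurse on $\fand_{n-1}$: it reduces instead to the superficially easier function $g_{n-1}$ (which has only two possible outputs), and the crux of the argument is establishing via the auxiliary induction that $g_{n-1}$ is nonetheless no easier in the hazard-free size measure than $\fand_{n-1}$ itself.
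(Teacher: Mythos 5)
Your proof is correct, but it takes a genuinely different route from the paper's. The paper gives a direct counting argument: it shows that each of the $2^n$ inputs in $\{1,\un\}^n$ must reach a distinct leaf at which \emph{all} $n$ variables have been queried (an unqueried variable could be set to $0$, making the correct output $0$ while the tree's answer stays $1$ or $\un$), so the binary tree $T_{1\un}$ obtained by retaining only the $1$- and $\un$-branches has at least $2^n$ leaves, hence at least $2^n-1$ internal nodes, each of which contributes at least one additional leaf through its $0$-branch in the full ternary tree, for a total of $2^{n+1}-1$. Your argument instead recurses at the root via the auxiliary ``detect-a-zero'' function $g_k$, and the claimed reductions all check out: the $x_1=1$ and $x_1=\un$ restrictions of $g_k$ are both $g_{k-1}$, the $\un$-branch of $\fand_n$ computes $g_{n-1}$, and the recurrences $\size_\un(g_k)\ge 1+2\,\size_\un(g_{k-1})$ and $M_n\ge M_{n-1}+2^n$ unroll to the stated bound. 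Two points deserve explicit care in a final write-up. First, $g_k$ is \emph{not} the hazard-free extension of any Boolean function (it outputs $\un$ on the stable input $1^k$), so $\size_\un(g_k)$ is an abuse of notation; you should define the measure as the minimum leaf count of a ternary decision tree computing a given ternary function, under which your induction goes through verbatim. This mirrors the paper's own move to weakly hazard-free functions in the proof of Theorem~\ref{thm:dtcc}, forced there by the same phenomenon, namely that restrictions of hazard-free extensions need not be hazard-free extensions. Second, passing from a tree to subtrees for the restricted functions implicitly assumes $x_1$ is not re-queried; repeated queries have forced answers along a branch and can be pruned without increasing the leaf count, so this is harmless but worth a sentence. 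Comparatively, the paper's fooling-set style argument parallels its $\fmux_n$ size lower bound and extends naturally to parameterizations by the number of unstable bits, whereas your coupled recursion is more self-contained and isolates exactly where the constant $2^{n+1}-1$ comes from: the doubling in the $g_k$ recurrence, i.e., the fact that the $\un$-branch is no cheaper than the $1$-branch even though it computes a seemingly simpler two-valued function.
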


We can do better than the construction in Theorem~\ref{thm:dthfdt} if we know that the number of unstable values in the input is bounded. Define $k$-bit hazard-free decision tree as the decision tree which hazard-free for input settings with at most $k$ number of $u$'s in them. The following is an analogue of Corollary~{1.10} by Ikenmeyer et.~al.~\cite{ikenmeyer2019complexity}.

\begin{restatable}{theorem}{kbitdthfdt}
    Let $T$ be a Boolean decision tree of size $s$ and depth $d$ for $f$. Then, there exists a $k$-bit hazard-free decision tree of size at most $s^{2^{k+1} - 1}$ and depth at most $2^k \cdot d$ for $f$.
\end{restatable}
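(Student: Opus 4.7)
The plan is a recursive construction on the Boolean decision tree $T$, with a nested recursion on $k$. Let the root of $T$ query some bit $x_i$, and let $T_0, T_1$ be its Boolean sub-trees for $f|_{x_i=0}$ and $f|_{x_i=1}$, of sizes $s_0, s_1$ with $s_0+s_1=s$ and depths at most $d-1$. The hazard-free tree $T'$ also queries $x_i$ at its root; on reply $0$ (resp.\ $1$) it descends into a $k$-bit hazard-free tree built recursively from $T_0$ (resp.\ $T_1$). The only interesting case is the $\un$-branch: because the overall input carries at most $k$ unknowns, once the queried bit itself turns out to be $\un$ the remaining bits carry at most $k-1$ unknowns, so each restriction only needs a $(k-1)$-bit hazard-free computation.

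For the $\un$-branch I would use a product construction. Build $(k-1)$-bit hazard-free trees $T'_0$ and $T'_1$ for $f|_{x_i=0}$ and $f|_{x_i=1}$ recursively from $T_0, T_1$. Then form their product by running $T'_0$ first and, at every leaf of $T'_0$ labelled $b_0 \in \zuo$, grafting a copy of $T'_1$ in which every leaf previously labelled $b_1$ is relabelled by $b_0$ if $b_0 = b_1$ and by $\un$ otherwise. The value the combined subtree outputs is therefore the common stable value of the two hazard-free extensions of $f|_{x_i=0}$ and $f|_{x_i=1}$ on the remaining input, falling back to $\un$ when they disagree; this is precisely the hazard-free value of $f$ at an input whose $i$th bit is $\un$. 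Together with the straightforward correctness of the $0$- and $1$-branches, this shows $T'$ is $k$-bit hazard-free for $f$.

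It remains to bound size and depth. Writing $s(k, s)$ and $d(k, d)$ for the worst-case size and depth produced by the construction, the base cases ($T$ a single leaf, or $k = 0$ where $T$ itself can be used, as the $\un$ branches are never reached) both trivially satisfy the claimed bounds. The depth recursion is
\begin{equation*}
d(k, d) \leq 1 + \max\bigl(d(k, d-1),\; d(k-1, d-1) + d(k-1, d-1)\bigr),
\end{equation*}
from which a double induction on $(k, d)$ gives $d(k, d) \leq 2^k d$ with one unit of slack to spare. The size recursion is
\begin{equation*}
s(k, s) \leq s(k, s_0) + s(k, s_1) + s(k-1, s_0) \cdot s(k-1, s_1),
\end{equation*}
and the main step of the proof is to verify by induction that $s(k, s) \leq s^{2^{k+1}-1}$. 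The inductive step reduces to the single algebraic inequality
\begin{equation*}
s_0^{2^{k+1}-1} + s_1^{2^{k+1}-1} + s_0^{2^k-1}\, s_1^{2^k-1} \leq (s_0+s_1)^{2^{k+1}-1},
\end{equation*}
which I would prove by expanding the right-hand side binomially, peeling off the two pure terms $s_0^{2^{k+1}-1}$ and $s_1^{2^{k+1}-1}$, and observing that among the remaining terms the single monomial $\binom{2^{k+1}-1}{2^k-1}\, s_0^{2^k-1}\, s_1^{2^k}$ already dominates the product term $s_0^{2^k-1} s_1^{2^k-1}$ (using $s_1 \geq 1$, which holds because both children of an internal node contain at least one leaf). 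The main obstacle I anticipate is exactly this inequality: a naive AM-GM estimate loses constants, whereas the binomial-peeling argument matches the exponents on the nose.
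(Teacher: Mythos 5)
Your proposal is correct and takes essentially the same approach as the paper: the $\un$-branch is the same product construction of the $0$- and $1$-subtrees, and your key observation that a $\un$ reply consumes one of the $k$ allowed unknowns (so the product only needs $(k-1)$-bit hazard-free subtrees) is exactly how the paper recurses from $T^{k+1}$ down to $T^k$. The only difference is bookkeeping: you close a local per-node size recursion with the binomial inequality $s_0^{2^{k+1}-1}+s_1^{2^{k+1}-1}+s_0^{2^k-1}s_1^{2^k-1}\le(s_0+s_1)^{2^{k+1}-1}$, whereas the paper uses the cruder global bound $L(T^{k+1})\le L(T)\,L(T^k)^2$; both land on the same $s^{2^{k+1}-1}$ and $2^k\cdot d$ bounds.
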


Note that due to the exponential (in $k$) blowup of the exponent, this construction is mainly interesting for small values of $k$. If $k$ is a constant, then the size blowup is only polynomial, and the depth blows up only by a constant factor.

The hardness of $\fmux_n$ and $\fand_n$ begs the question: are there functions that are easy in the presence of uncertainty? It is easy to construct such functions if they are degenerate\footnote{A function $f$ \emph{depends on input bit $i$} if there are two inputs $x$ and $y$ that differ only at the $i^\text{th}$ position and $f(x) \neq f(y)$. A function that does not depend on at least one of its input bits is called \emph{degenerate}.}. Consider $f(x_1,\dotsc,x_n) = x_1$. This function has a hazard-free decision tree of depth $1$. We want functions that are non-degenerate.
\begin{restatable}{theorem}{smalldepth}
    \label{thm:smalldepth}
    For any $n \geq 1$, there is a non-degenerate Boolean function on $n+2^{n+1}-1$ inputs that has a hazard-free decision tree of depth $2n+1$.
\end{restatable}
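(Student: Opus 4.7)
The plan is to construct an explicit family of non-degenerate Boolean functions $f_n$ on $N = n+2^{n+1}-1$ inputs, each admitting a hazard-free decision tree of depth $2n+1$. I organize the inputs as $n$ selector bits $s_0,\ldots,s_{n-1}$ together with $2^{n+1}-1$ data bits $x_v$ indexed by the nodes $v$ of a complete binary tree of depth $n$ (which has exactly $\sum_{i=0}^{n} 2^i = 2^{n+1}-1$ nodes), so that the selectors pick out a root-to-leaf path $p(s) = (\epsilon, s_0, s_0 s_1, \ldots, s_0\cdots s_{n-1})$ of length $n+1$. The function $f_n(s,x)$ is then defined as a formula over the $n+1$ data bits along this path.

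The hazard-free decision tree I use is the natural interleaved one: read $x_{\epsilon}$, and for $i = 0, 1, \ldots, n-1$ query $s_i$ and then the data bit at the appropriate child of the current node on the path. When every selector reply is stable, this fixes the path and reads exactly $1 + 2n = 2n+1$ bits, which is enough to evaluate $f_n$. Non-degeneracy is handled directly: for each $s_i$, flipping it reroutes the path from depth $i+1$ onward and one can choose the remaining bits to flip the formula; for each data bit $x_v$, one picks selectors putting $v$ on the path and sets the other path bits so that the formula reduces to $\pm x_v$.

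The main obstacle is controlling the case where some selector returns $\un$ mid-traversal. The naive response of recursing into both children of the unstable selector doubles the depth budget at every $\un$ reply, breaking the $2n+1$ bound. To avoid this, $f_n$ must be designed so that as soon as an $\un$ selector is revealed, the data bits already read along the path either commit $\hat f_n$ to a stable value or force it to $\un$; the strategy then continues down an arbitrary child purely to keep the accounting aligned and still emits the correct hazard-free value. Carrying out this case analysis against the Kleene semantics of the gates used in the formula --- and thereby identifying a formula that has this ``one-sided'' hazard-free behaviour --- is the technical core of the proof.
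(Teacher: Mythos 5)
Your scaffolding matches the paper's construction almost exactly: the paper's $f_n$ is $\fmux_n$ composed with functions $g_{\mathbf{v}}$ that are formulas over precisely the data bits along the root-to-leaf path $(), (v_0), (v_0,v_1), \dotsc$ of a complete binary tree, and its decision tree interleaves selector and path-bit queries for the same $2n+1$ count. But the core of your proof is missing, and worse, the property you stipulate for $f_n$ is provably unattainable. You require that the moment a selector returns $\un$, the bits already read determine $\widetilde{f_n}$. Apply this at the very first selector: when $s_0=\un$ arrives, only $x_\epsilon$ has been read, so for each $a\in\{0,1\}$ there must be a constant $c_a$ with $\widetilde{f_n}(x_\epsilon{=}a, s_0{=}\un, \ast)\equiv c_a$ over all settings $\ast$ of the remaining bits. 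If $c_a$ is stable, then by definition of the hazard-free extension every resolution agrees, i.e.\ $f_n$ is constant on the slice $x_\epsilon=a$. If $c_a=\un$, then for every \emph{stable} setting of the rest the two resolutions of $s_0$ must disagree, i.e.\ $f_n(a,1,\text{rest}) = \overline{f_n(a,0,\text{rest})}$ pointwise. But in your own framework, for stable selectors $f_n(a,0,\cdot)$ depends only on data bits in the subtree under child $0$, and $f_n(a,1,\cdot)$ only on those under child $1$ --- disjoint sets --- so pointwise complementarity forces both to be independent of every data bit (given the selectors). Either way, $f_n$ restricted to $x_\epsilon=a$ ignores all data bits below the root, for both $a\in\{0,1\}$, so $f_n$ does not depend on $x_{(0)}$ and is degenerate. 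Hence the ``one-sided'' formula whose identification you defer as the technical core does not exist, and your dummy-query continuation can never be correct for a non-degenerate function.

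The paper succeeds because it demands something strictly weaker: an $\un$ reply need not \emph{determine} the value, it only needs to reduce the problem to a \emph{single} child-subproblem, so the depth recurrence stays $D(n)=2+D(n-1)$, $D(1)=3$, giving $2n+1$. Concretely, it takes $g_{\mathbf{v}} = y_{()}\ \mathsf{op}_0\ (y_{(v_0)}\ \mathsf{op}_1\ (\cdots))$ with $\mathsf{op}_j = +$ if $v_j=0$ and $\cdot$ if $v_j=1$, and queries $s_0$ then the shared root bit $y_{()}$. If $s_0=\un$ and $y_{()}=\un$, the output is immediately $\un$ (the $+$-half cannot be $0$, the $\cdot$-half cannot be $1$); if $s_0=\un$ and $y_{()}=b$ is stable, absorption ($1$ for $+$, $0$ for $\cdot$) kills one entire half, leaving one copy of the $(n-1)$-instance with its output alphabet coarsened to $\{b,\un\}$ (relabel the $\overline{b}$-leaves by $\un$); if $s_0=b$ and $y_{()}=\un$, one similarly recurses into a single relabeled copy. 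Genuine queries continue after an $\un$ --- no accounting tricks --- and correctness is verified case-by-case against the semantics of the hazard-free extension, which you would need anyway: even in the all-stable-selector case, outputting the Kleene evaluation of a formula over the path bits is not automatically the same as outputting $\widetilde{f_n}$, since formulas can have hazards. To repair your write-up, replace your stipulated property with this reduction-to-one-child property and verify it for the alternating $+/\cdot$ formulas.
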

We believe this theorem was necessary to establish that the theory of decision tree depth in the presence of uncertainty is as interesting as Boolean decision tree depth.

\paragraph{Proof idea. } We construct this function by composing $\fmux_n$ with a set of carefully chosen functions. The function $\fmux_n(s_0,\dotsc,s_{n-1}, x_{(0,\dotsc,0)}, \dotsc, x_{(1,\dotsc,1)})$ has low Boolean decision tree depth because for any values of $s_i$, $2^n-1$ input bits of the form $x_j$ become insignificant. Therefore, once the $n$ selector bits are queried, only one more bit is relevant. However, in the hazard-free setting, the selector bits could have value $\un$. In the worst-case, if all selector bits are $\un$, we will be forced to query all the remaining bits because all of them are \emph{independent} bits. We could make all of them dependent by considering $\fmux_n(s_0,\dotsc,s_{n-1}, y, \dotsc, y)$ where $y$ is a new input. But this function only depends on $y$. In other words, we have made them too dependent. We build our easy, non-degenerate function by composing $\fmux_n$ with a set of functions that are dependent enough so that we can determine the output with a small number of queries, for any values of selector bits, but on the other hand, are independent enough so that the function depends on all the inputs. \qed

Any function on $N$ inputs has a decision tree of size at most $2^N$. However, the hazard-free decision tree size could be as big as $3^N$ as it is a ternary tree of depth at most $N$. However, it is not immediately apparent (using a counting argument, for example) whether there are functions that need $3^N$ size. This is because there are only $2^{2^N}$ hazard-free extensions of Boolean functions, which is much smaller than the number of ternary functions, which is $3^{3^N}$. The following theorem helps us to lower bound the size of hazard-free decision trees (See Definition~\ref{def:pi} for the definition of prime implicants and implicates).
\begin{restatable}{theorem}{imp}
    \label{thm:imp}
    Let $f$ be any function. Then, $\size_\un(f) \geq m+M$ where $m$ is the number of prime implicants of $f$ and $M$ is the number of prime implicates of $f$.
\end{restatable}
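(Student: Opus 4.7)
My plan is to exhibit two injections: one from the prime implicants of $f$ into the leaves of the hazard-free decision tree labeled $1$, and one from the prime implicates into the leaves labeled $0$. Since these two sets of leaves are disjoint, summing the two bounds yields $\size_\un(f) \geq m + M$, with any $\un$-leaves giving slack in the inequality.

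To set this up, I would first associate to each leaf $\ell$ its \emph{path assignment} $q_\ell \in \zuon$: on indices queried along the root-to-leaf path, $q_\ell$ equals the edge-label followed there; on unqueried indices, $q_\ell$ is $\un$. The correctness condition for the hazard-free decision tree is that every input $x \in \zuon$ reaching $\ell$ must have $\hat f(x)$ equal to the leaf label $b$. The "least specified'' such input is $q_\ell$ itself, and by monotonicity of $\hat f$ in the information order (where $\un \le 0$ and $\un \le 1$), the condition is equivalent to $\hat f(q_\ell) = b$ for any leaf labeled $b \in \zo$.

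Now, given a prime implicant $p$, viewed as a minimal (in the information order) $p \in \zuon$ with $\hat f(p) = 1$, I run the decision tree on $p$: at an internal node querying index $i$, follow the child indexed by $p_i$. Let $\Phi(p)$ be the leaf reached. By construction $q_{\Phi(p)} \le p$: on queried indices the two agree, and on unqueried indices $q_{\Phi(p)}$ is $\un$ which is below any value of $p$. Since $p$ reaches $\Phi(p)$ and $\hat f(p) = 1$, the leaf is labeled $1$, so $\hat f(q_{\Phi(p)}) = 1$. Hence $q_{\Phi(p)}$ is itself an implicant below $p$, and by primeness $q_{\Phi(p)} = p$. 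This gives injectivity of $\Phi$: if $\Phi(p_1) = \Phi(p_2) = \ell$, then $p_1 = q_\ell = p_2$. The dual argument, taking prime implicates as minimal $y \in \zuon$ with $\hat f(y) = 0$, supplies an injection into $0$-labeled leaves by the same simulation.

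The main obstacle is sanity-checking the equivalence between the classical (literal-based) definition of prime implicants/implicates used in Definition~\ref{def:pi} and the reformulation above in terms of minimal elements of $\zuon$ under $\hat f$; once this equivalence is in hand, the information-order inequality $q_{\Phi(p)} \le p$ combined with minimality finishes the argument with no further work. A minor auxiliary point worth verifying is that the simulation really does reach a unique leaf when the path queries bits on which $p$ evaluates to $\un$, which is immediate since hazard-free decision trees branch on all three values $\{0, \un, 1\}$ at every internal node.
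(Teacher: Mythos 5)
Your proposal is correct and takes essentially the same approach as the paper: the paper also represents each prime implicant as a string $x_P \in \zuon$, runs the tree on it, and shows every stable bit of $x_P$ must be queried on the path---otherwise the leaf's path assignment would itself be an implicant strictly covering $P$, contradicting primeness---which is precisely your argument that $q_{\Phi(p)} \le p$ together with minimality forces $q_{\Phi(p)} = p$, giving injectivity into the $1$-labeled (dually, $0$-labeled) leaves. Your explicit check that the path assignment $q_\ell$ itself reaches $\ell$ and hence satisfies $\widetilde{f}(q_\ell) = b$ is a slightly cleaner packaging of a step the paper leaves implicit, but the underlying argument is identical.
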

Using Theorem~\ref{thm:imp}, we show that there are functions on $N$ inputs that require hazard-free decision trees of size close to $3^N$, the maximum possible.
\begin{corollary}
    Let $N = 3n$ for $n \geq 1$. Let $\fcm_N$ be the $N$-input Chandra-Markowsky function \cite{CM}. Then, $\size_\un(\fcm_N) \geq \binom{N}{N/3}\binom{2N/3}{N/3} \sim 3^N/N^2$.
\end{corollary}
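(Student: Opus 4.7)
The plan is to apply Theorem~\ref{thm:imp} directly: it suffices to exhibit $\binom{N}{N/3}\binom{2N/3}{N/3}$ many prime implicants (or, by symmetry, prime implicates) of $\fcm_N$. My first step is to recall from Chandra and Markowsky~\cite{CM} the explicit structure of their function. On $N = 3n$ variables, $\fcm_N$ is constructed so that every partial assignment that sets exactly $n$ of the variables to $1$ and another $n$ of the variables to $0$ (leaving the remaining $n$ free) is a minimal implicant of $\fcm_N$. Counting these partial assignments is a straightforward combinatorial exercise: first pick which $n$ of the $N$ variables to fix to $0$, then pick which $n$ of the remaining $2n$ variables to fix to $1$. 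This yields
$$\binom{N}{n}\binom{2n}{n} \;=\; \binom{N}{N/3}\binom{2N/3}{N/3}$$
distinct prime implicants.

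With the count in hand, Theorem~\ref{thm:imp} immediately gives $\size_\un(\fcm_N) \geq m + M \geq m \geq \binom{N}{N/3}\binom{2N/3}{N/3}$, which is the desired lower bound. The final asymptotic claim follows from Stirling's approximation: expanding the central trinomial $\binom{3n}{n,n,n} = (3n)!/(n!)^3$ shows that it is of order $\Theta(3^N/N)$, which in particular dominates $3^N/N^2$ as $N \to \infty$, matching the bound stated in the corollary.

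The only non-routine ingredient is importing the combinatorial properties of $\fcm_N$ that justify the prime-implicant count, namely, that each of the described partial assignments is an implicant of $\fcm_N$ and that no proper sub-assignment of any of them is an implicant. Both facts are inherent to the Chandra-Markowsky construction and are established in~\cite{CM}, so my write-up would cite their paper for this structural fact and then perform the mechanical application of Theorem~\ref{thm:imp} followed by the Stirling estimate. The main obstacle, therefore, is purely expository: stating precisely which property of $\fcm_N$ from \cite{CM} is being invoked, so that the reader can verify the prime-implicant count without re-deriving the construction from scratch.
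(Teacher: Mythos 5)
Your proposal coincides with the paper's (implicit) argument: the corollary is stated as an immediate consequence of Theorem~\ref{thm:imp} together with the known structure of $\fcm_N$ (the disjunction of all terms with exactly $N/3$ positive and $N/3$ negative literals, equivalently the indicator of Hamming weight lying in $[N/3,2N/3]$, whose prime implicants are exactly those $\binom{N}{N/3}\binom{2N/3}{N/3}$ terms), which is precisely your route of citing \cite{CM} for the count and then applying $\size_\un(\fcm_N) \ge m + M \ge m$. Your Stirling estimate is in fact slightly more careful than the statement itself: the central trinomial coefficient is $\Theta(3^N/N)$, so the paper's ``$\sim 3^N/N^2$'' should be read only as a weaker consequence, exactly as you observe.
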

Note that the size lower bound given in Theorem~\ref{thm:muxsize} cannot be obtained by applying Theorem~\ref{thm:imp} because $\fmux_n$ has only $3^n$ prime implicants and $3^n$ prime implicates. Theorem~\ref{thm:imp} only counts the number of leaves labeled $0$ and $1$ in the decision tree. However, as can be observed in the upper bound construction in Theorem~\ref{thm:muxsize}, the number of leaves in the decision tree is dominated by those labeled $\un$.

\paragraph{Function complexity parameters:}
A crucial direction in Boolean function complexity is to relate the computational complexity to structural parameters (parameters that do not refer to a notion of computation, such as CREW-PRAM model) of the Boolean function. In Boolean function complexity theory, various parameters such as sensitivity, block sensitivity, and certificate complexity are studied extensively for this purpose. Sensitivity, denoted $\sen(f)$, in particular, originated from the works of \cite{CD82,CDR86,R82}, where they showed a lower bound of logarithm of sensitivity, for the number of steps required to compute a Boolean function on a CREW-PRAM model. A few years later, Nisan~\cite{Nis89} showed a characterization of the time required to compute a function by CREW-PRAM model, by a variant of the sensitivity called {\em block sensitivity}, denoted $\bsen(f)$. More precisely, it was shown that the time taken by CREW-PRAM model is $\Theta(\log \bsen(f))$. After a series of works spanning over three decades, and culminating with the celebrated proof of Huang~\cite{Hua19} of the sensitivity conjecture, the two parameters $\bsen(f)$ and $\sen(f)$ are polynomially related - that is $\bsen(f) \le s(f)^4$. This finally established that decision tree depth, certificate complexity, block sensitivity, and sensitivity are all polynomially equivalent.

We will define analogues of these parameters for hazard-free functions. We will prove that they are all polynomially equivalent to hazard-free decision tree depth, i.e., the sensitivity theorem holds in the hazard-free world as well.

\emph{Certificate complexity} is a measure of Boolean function complexity in the following setting: There is a prover and a verifier who share knowledge of a function $f:\{0, 1\}^n \mapsto \{0, 1\}$. The prover knows an $x\in \{0, 1\}^n$ and wants to convince the verifier that the output of $f$ on $x$ is $f(x)$ by revealing the \emph{the minimum number of bits} in $x$. For example, let $f$ be $\fmux_n$. Then, for any input, prover can reveal all $n$ selector bits and the data bit indexed by the selector bits. Therefore, the certificate complexity of $\fmux_n$ is at most $n+1$. Decision tree depth is at most the square of certificate complexity and at least the certificate complexity for all Boolean functions. An alternate definition of certificate complexity is that it is the smallest $k$ such that the function can be written as a $k$-CNF \emph{and} a $k$-DNF.

We extend certificate complexity to the hazard-free setting and define \emph{hazard-free certificate complexity}. For $f:\{0, \un, 1\}^n \mapsto \{0, \un, 1\}$ and $x\in \{0, \un, 1\}^n$, it is defined as the minimum number of values in $x$ the prover should reveal to communicate $f(x)$ to the verifier. As a simple example, consider the function $\fmux_2$, its hazard-free certificate complexity is $4$. If the selector bits are not $\un\un$, then the prover can reveal the selector bits and all the indexed data bits. This is at most $4$ values in the input. Otherwise, selector bits are $\un\un$. If the output is a stable value, the prover can reveal all four data bits.  Otherwise, the prover can reveal the selector bits and two data bits that have distinct values or a data bit that is $\un$ to convince the verifier. Recall that the hazard-free decision tree depth of $\fmux_2$ is six. Therefore, even this simple example shows that hazard-free certificate complexity can be strictly smaller than hazard-free decision tree depth. However, just like in the Boolean setting, we establish that hazard-free decision tree depth and hazard-free certificate complexity are polynomially related.

\begin{restatable}{theorem}{dtcc}
    \label{thm:dtcc}
    $\cc_\un(f) \leq \depth_\un(f) \leq 4\cc_\un^3(f)$.
\end{restatable}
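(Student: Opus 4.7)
The lower bound $\cc_\un(f) \le \depth_\un(f)$ is immediate. Given any hazard-free decision tree $T$ of depth $\depth_\un(f)$ computing $f$, fix any $x \in \zuon$ and trace the root-to-leaf path followed by $x$. The at most $\depth_\un(f)$ (query, answer) pairs along this path form a partial ternary assignment $\rho$; every $x'$ extending $\rho$ follows the same path to the same leaf and so has $f(x') = f(x)$. Hence $\rho$ is a hazard-free certificate of $x$ of size at most $\depth_\un(f)$, giving $\cc_\un(f,x) \le \depth_\un(f)$ for every $x$.

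For the upper bound $\depth_\un(f) \le 4\cc_\un^3(f)$, I would mimic the classical Boolean proof $\depth(f) = O(\cc(f)^2)$ (Blum--Impagliazzo), routed through the hazard-free block sensitivity measure $\bsen_\un(f)$ whose definition the paper introduces in its later sensitivity section. The first ingredient is a hazard-free analogue of $\bsen \le \cc$: fix an input $x$ together with pairwise disjoint sensitive blocks $B_1,\ldots,B_k$ witnessing $\bsen_\un(f,x) = k$, and let $C$ be a minimum hazard-free certificate for $x$; if some $B_j$ were disjoint from $C$, a sensitising modification of $x$ on $B_j$ would leave the values on $C$ intact, so the certificate would still force output $f(x)$, contradicting the sensitivity of $B_j$, and hence $|C| \ge k$. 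The second ingredient is a recursive tree construction: at a node with current partial assignment $\rho$, pick any input $x$ extending $\rho$, compute a certificate $C$ for $x$ in $f|_\rho$ of size at most $\cc_\un(f|_\rho) \le \cc_\un(f)$, and query every position in $C$; if all answers match $C$, output $f(x)$, otherwise recurse on the refined restriction $\rho'$. The crucial invariant, analogous to the Boolean proof of $\depth(f) \le \bsen(f)\cdot\cc(f)$, is that every non-matching round strictly decreases $\bsen_\un(f|_\rho)$ along at least one of the three output classes.

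Because the ternary output forces one to iterate this argument separately for the $0$-, $1$-, and $\un$-input classes (one cannot in general certify all three output possibilities in a single certificate-query phase), a careful accounting contributes an extra factor of $\cc_\un(f)$ and a small absolute constant: roughly, each of the three output classes requires at most $\bsen_\un(f)\cdot \cc_\un(f) \le \cc_\un(f)^2$ queries, and summing over the three classes with the slack needed to handle overlaps between output types yields $\depth_\un(f) \le 4\cc_\un(f)^3$. The main obstacle will be establishing the block-sensitivity-decrease invariant in the ternary setting: in the Boolean world each sensitive block admits a single flip, and a non-matching round removes an entire block from contention, but a hazard-free block supports up to three sensitising modifications (into $0$, $\un$, or $1$), and restricting by $\rho$ may kill only some of them. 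Tracking which blocks remain sensitive after restriction and showing that each non-matching round destroys a whole block (rather than just one of its flip directions), together with the extra bookkeeping required to interleave the three output classes without double-counting, is where the cube (rather than square) in $\cc_\un$ and the explicit constant $4$ arise.
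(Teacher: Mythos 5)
The lower-bound half of your proposal is correct and matches the paper exactly: the (query, answer) pairs along the path traced by $x$ form a certificate of size at most $\depth_\un(f)$.

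The upper-bound half has a genuine gap, in two places. First, your recursion invokes $\cc_\un(f|_\rho)$ and hazard-free decision trees for the restricted functions, but restrictions of hazard-free extensions are in general \emph{not} hazard-free extensions of Boolean functions, so these quantities are not well-defined objects in your induction. The paper runs into exactly this obstacle and resolves it by introducing \emph{weakly hazard-free} functions, a strictly larger class that \emph{is} closed under restriction, and carrying out the entire induction for that class; without some such device your recursive step does not type-check. Second, and more centrally, the engine of your argument --- that every non-matching round strictly decreases $\bsen_\un(f|_\rho)$, or in the Blum--Impagliazzo form that the queried certificates assemble into disjoint sensitive blocks at a surviving input --- is precisely what you flag as ``the main obstacle'' and never establish. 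In the ternary world a block admits sensitizing modifications into up to three values, a restriction may kill only some of those directions, and a certificate for a $\un$-output must simultaneously rule out both $0$ and $1$; it is not at all clear the invariant survives. Your own accounting also does not reach the claimed bound: ``each output class needs at most $\bsen_\un(f)\cdot\cc_\un(f) \le \cc_\un^2(f)$ queries, summed over three classes'' gives $3\cc_\un^2(f)$, and the promotion to $4\cc_\un^3(f)$ is asserted rather than derived --- a sign that the interleaving of the three output classes is exactly where the proof is missing.

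For contrast, the paper avoids block sensitivity entirely. It proves by induction on the number of variables, for weakly hazard-free $f$, that $\depth_\un(f) \le \cc_\un^{(\un)}(f)\bigl(\cc_\un^{(0)}(f)+1\bigr)\bigl(\cc_\un^{(1)}(f)+1\bigr)$: query all variables of a minimum certificate $Y$ for the all-$\un$ input (a full ternary tree on $Y$), then recurse on each restriction $f_a$. The key combinatorial fact is that every $0$-certificate and every $1$-certificate of $f$ must contain at least one variable of $Y$ (otherwise all of $Y$ could be $\un$, which would force output $\un$), so $\cc_\un^{(0)}$ and $\cc_\un^{(1)}$ each drop by one per round while $\cc_\un^{(\un)}$ does not increase, and the product telescopes to $4\cc_\un^3(f)$. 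That intersection property of the $\un$-certificate with the stable-output certificates is the ternary replacement for the block-sensitivity-decrease invariant your plan lacks; if you want to salvage your route, that is the lemma to prove.
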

\paragraph{Proof idea. } The first inequality is easy to prove. The prover simply reveals values that would be obtained by the querier. The proof of the second inequality is similar to the proof of the analogous theorem in the Boolean world, which is by induction on the number of inputs to the function. However, restrictions of Boolean functions are Boolean. But, restrictions of hazard-free extensions of Boolean functions are not necessarily hazard-free extensions of Boolean functions. We introduce the notion of \emph{weakly hazard-free functions}, which are closed under restrictions and is a larger class of functions than hazard-free functions, in the induction to avoid this roadblock. \qed

There is also a quadratic separation between certificate complexity and decision tree depth in this setting. The same function used to achieve this separation in the Boolean case works (See Chapter~14, Page~408, \cite{BFC}) when combined with Theorem~\ref{thm:sbscc}.

\emph{Sensitivity} of a function is the number of bits that can be changed (individually) to change the output of the function. \emph{Block sensitivity} is similar but allows flipping blocks of bits to change the output of the function. Both parameters are polynomially equivalent to decision tree depth in the Boolean setting. Therefore, to show that a function has small depth, it suffices to show that it has small sensitivity. A natural question is whether or not such relations hold for hazard-free functions as well. We answer this question positively by using hazard-free DNF and CNF formulas as an intermediate step.

\begin{restatable}{theorem}{sbscc}
    \label{thm:sbscc}
    Let $f$ be a function. Let $k_1$ and $k_2$ be such that $f$ has a hazard-free $k_1$-DNF formula and a hazard-free $k_2$-CNF formula. Then, we have:
    \begin{equation*}
        \max\{k_1, k_2\} \leq \sen_\un(f) \leq \bsen_\un(f) \leq \cc_\un(f) \leq k_1+k_2-1.
    \end{equation*}
\end{restatable}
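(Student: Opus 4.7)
The plan is to prove the four inequalities in the chain separately. The two middle bounds are direct adaptations of standard Boolean arguments, while the two extremes use the combinatorial structure of prime implicants and prime implicates. For the leftmost bound $\max\{k_1,k_2\}\leq \sen_\un(f)$, I would note that any hazard-free DNF of $f$ must contain every prime implicant, so there is a prime implicant $p$ with $|p|=k_1$. Define $x\in\zuon$ by assigning each position in $p$ the value demanded by its literal in $p$, and $\un$ everywhere else. Every stabilization of $x$ satisfies $p$, so $f(x)=1$. For each variable $v$ of $p$, let $x'$ agree with $x$ except that $x'(v)=\un$. Primality of $p$ implies that $p$ with the literal at $v$ deleted is not an implicant, so some total assignment agrees with $p$ on the remaining variables, sets $v$ opposite to $p$'s demand, and evaluates to $0$; this is a stabilization of $x'$, while the stabilization restoring $v$ to $p$'s demand gives $f=1$, so $f(x')=\un\neq f(x)$. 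Hence $v$ is $\un$-sensitive at $x$, yielding $\sen_\un(f,x)\geq k_1$; a symmetric construction using a widest prime implicate gives $\sen_\un(f)\geq k_2$.

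The two middle inequalities are quick. Singleton blocks give $\sen_\un(f)\leq \bsen_\un(f)$. For $\bsen_\un(f)\leq \cc_\un(f)$, take $x$ with disjoint blocks $B_1,\dotsc,B_t$ each admitting a neighbor $y^{(i)}$ that differs from $x$ only on $B_i$ and satisfies $f(y^{(i)})\neq f(x)$; let $S$ be any hazard-free certificate of $x$. If some $B_i$ were disjoint from $S$, then $y^{(i)}|_S=x|_S$ would contradict the certificate property, so every $B_i$ meets $S$ and disjointness gives $|S|\geq t$.

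The main work is $\cc_\un(f)\leq k_1+k_2-1$. For $x$ with $f(x)\in\{0,1\}$, revealing a prime implicant (resp.\ prime implicate) contained in $x$ gives a certificate of size at most $k_1$ (resp.\ $k_2$). The delicate case is $f(x)=\un$: choose stabilizations $y^1,y^0$ of $x$ with $f(y^1)=1$ and $f(y^0)=0$, witnessed by a prime implicant $p$ satisfied by $y^1$ and a prime implicate $c$ falsified by $y^0$. The key structural fact I would invoke is that $p$ and $c$ share a literal: if they shared none, one could set the variables of $p$ to satisfy $p$ and the variables of $c$ to falsify $c$ without conflict, simultaneously forcing $f=1$ and $f=0$, a contradiction. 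Let $v$ be the shared variable; the stabilizations $y^1,y^0$ disagree at $v$, so $x(v)=\un$. Reveal $S=(\mathrm{vars}(p)\setminus\{v\})\cup\mathrm{vars}(c)$, of size at most $(k_1-1)+k_2=k_1+k_2-1$. The main obstacle is verifying that $S$ is a valid $\un$-certificate for $x$: for any $y$ with $y|_S=x|_S$, one must exhibit a stabilization of $y$ that satisfies $p$ (giving $f=1$) and another that falsifies $c$ (giving $f=0$), so $f(y)=\un$. Some revealed positions of $x$ are themselves $\un$ (notably $v$), but since a stabilization is free to replace any $\un$ with any bit, both constructions still complete and the certificate is valid.
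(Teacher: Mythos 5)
Your proposal is correct and takes essentially the same route as the paper: the outer bounds via prime implicants/implicates of width $k_1$, $k_2$ (the paper proves $\max\{k_1,k_2\}\leq \sen_\un(f)$ by the same primality argument, phrased as a bit-flip/subsumption step rather than your stable-to-$\un$ perturbation, and proves $\cc_\un(f)\leq k_1+k_2-1$ by revealing a term and a clause that share a variable), with the middle inequalities handled by the standard singleton-block and block-hitting arguments. In particular, your handling of the $\un$-certificate through the resolutions $y^1,y^0$ and their prime implicant/implicate witnesses matches the paper's combinatorial proof in \cref{sec:CombProof} almost verbatim.
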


The second and third inequalities are true by definition. The first and last inequalities follow from properties of hazard-free CNF and DNF formulas established by Huffman \cite{Huff}. This proof also shows that the quadratic separation between sensitivity and block sensitivity as shown by \cite{Rubinstein} does not hold in the hazard-free world.

Recall that certificate complexity of Boolean functions is defined in the following manner too: It is the smallest $k$ such that $f$ can be written as a $k$-CNF \emph{and} a $k$-DNF formula (not necessarily hazard-free). Theorem~\ref{thm:sbscc} hints that the certificate complexity for hazard-free functions could be larger than even the width of hazard-free CNF and DNF formulas. That is, it is sometimes strictly harder to prove that the output is $\un$ than proving a $0$ or a $1$ output. Consider the majority function on $2n+1$ input bits. For any input where the output is $0$ or $1$, the prover can reveal exactly $n+1$ values to convince the verifier. Consider the input $0^n1^n\un$ for which the output is $\un$. We claim that the prover must reveal all values to convince the verifier that the output is $\un$. Indeed, changing any $0$ to a $1$ or a $1$ to a $0$ changes the output from $\un$ to a stable value. Therefore, the prover must reveal all stable values in this input. Moreover, changing the $\un$ to a stable value makes the output stable as well. So the prover must reveal the solitary $\un$ in the input as well. Notice that this also shows the last inequality in Theorem~\ref{thm:sbscc} is tight.

We note that the connection between hazard-free computation and prime implicants/prime implicates have been established by Jukna \cite{Jukna21}. In \cref{sec:CombProof}, we establish \cref{thm:sbscc} by using just the structural properties of the hypercube. We use the natural the notion of the subcube representing an input $x \in \zuon$ that contains exactly the resolutions of $x$ in the proof and also extensively use implicant and implicate subcubes. This alternate proof reveals more insights about the structure of inputs at which the maximumum sensitivity and certificate complexity occurs in the hazard-free world.

\paragraph{Applications to learning:} The function learning problem is as follows: We are provided a few input-output pairs and a guarantee that the function is from some family. The goal is to learn the function from as few samples as possible. The sensitivity of a Boolean function plays an important role in this context. It is known that a function with sensitivity $s$ is completely specified by its values on a Hamming ball of radius $2s$~\cite{GNSTW16}. We prove an analogue for hazard-free extensions of Boolean functions.

\begin{restatable}{theorem}{senball}
\label{thm:senball}
A hazard-free extension $f$ that has $\sen_u(f) \le s$ is specified by its values on any Hamming ball of radius $4s$ in $\zuon$.
\end{restatable}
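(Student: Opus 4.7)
The plan is to combine the hazard-free sensitivity theorem (Theorem~\ref{thm:sbscc}) with a ``surrogate input'' trick that transports an arbitrary input into the ball while preserving the function value.

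First, I apply Theorem~\ref{thm:sbscc} to the hypothesis $\sen_u(f) \le s$. Taking $k_1, k_2$ to be the minimum widths of any hazard-free DNF and CNF for $f$, the left-hand inequality of the chain gives $k_1, k_2 \le \sen_u(f) \le s$, and the right-hand inequality then yields
\[
\cc_u(f) \;\le\; k_1 + k_2 - 1 \;\le\; 2s - 1 \;\le\; 4s.
\]
In particular, every input in $\zuon$ admits a hazard-free certificate of size at most $4s$.

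Next, fix any center $z \in \zuon$ and let $B = \{y \in \zuon : d_H(z, y) \le 4s\}$. I show that for every $x \in \zuon$ there exists $y \in B$ with $f(x) = f(y)$, which implies that $f$ is determined by its restriction to $B$. Given $x$, pick a hazard-free certificate $C_x \subseteq [n]$ at $x$ with $|C_x| \le 4s$, and define
\[
y_i \;=\; \begin{cases} x_i & \text{if } i \in C_x, \\ z_i & \text{if } i \notin C_x. \end{cases}
\]
By construction $y$ agrees with $x$ on every position of $C_x$, so the certificate property forces $f(y) = f(x)$; and $d_H(z, y) \le |C_x| \le 4s$, so $y \in B$.

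The heavy lifting is packaged into Theorem~\ref{thm:sbscc}; once the certificate bound is in hand, the rest is a one-step reduction, and the argument in fact yields the slightly stronger radius $2s - 1$. The main delicate point is verifying that the surrogate construction respects the hazard-free certificate definition: certificates in $\zuon$ may reveal $\un$ entries, so one must confirm that replacing the entries of $x$ outside $C_x$ by those of $z$ still produces an input in $\zuon$ on which the certificate at $x$ dictates the value $f(x)$. This is immediate from the definition of a hazard-free certificate, but it is the place where the argument most visibly departs from the Boolean analogue of~\cite{GNSTW16}, since there the certificates contain only stable values.
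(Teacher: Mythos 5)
Your certificate bound and your surrogate construction are both sound: Theorem~\ref{thm:sbscc} (with $k_1,k_2$ the minimum hazard-free DNF/CNF widths) does give $\cc_\un(f) \le 2s-1$, and the point $y$ that agrees with $x$ on a certificate $C_x$ and with the center $z$ elsewhere indeed satisfies $f(y)=f(x)$ and lies in the ball. The gap is the concluding inference: the statement ``for every $x$ there exists $y \in B$ with $f(x)=f(y)$'' does \emph{not} imply that $f$ is specified by its restriction to $B$. The witness $y$ is manufactured from a certificate for $f$ at $x$, i.e., from global knowledge of the very function being reconstructed; someone who knows only $f|_B$ has no way to identify which $y \in B$ carries the value $f(x)$, and the bare existence statement is vacuous as soon as $f$ takes more than one value on $B$. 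What ``specified'' requires is uniqueness: any two hazard-free extensions $f,g$ with sensitivity at most $s$ that agree on $B$ must agree everywhere. Your argument only yields $f(x)=f(y)=g(y)$ for $f$'s surrogate and $g(x)=g(y')=f(y')$ for $g$'s surrogate, with no link between the two, so no contradiction is reached when $f(x)\neq g(x)$.

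The repair stays entirely within your framework but needs \emph{both} functions' certificates at once: given $f \neq g$ of sensitivity at most $s$ agreeing on $B(z,4s)$, pick $x$ with $f(x)\neq g(x)$, a certificate $C_x$ for $f$ at $x$ and a certificate $C'_x$ for $g$ at $x$, each of size at most $2s-1$, and let $w$ agree with $x$ on $C_x \cup C'_x$ and with $z$ elsewhere. Then $f(w)=f(x)$ and $g(w)=g(x)$, while $D(z,w) \le |C_x \cup C'_x| \le 4s-2 \le 4s$ places $w$ in the ball, contradicting $f|_B = g|_B$. With this fix your route is genuinely different from the paper's, which proves a plurality-of-neighbors lemma (any $4s+1$ neighbors of $x$ determine $f(x)$ by plurality, via a pigeonhole count over at least $2s+1$ distinct coordinate directions) and then propagates values outward sphere by sphere; the certificate route is shorter once Theorem~\ref{thm:sbscc} is in hand and even sharpens the radius to $4s-2$. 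Note, however, that your claimed strengthening to radius $2s-1$ does not survive the repair: it is the \emph{union} of the two certificates that must fit inside the ball, which costs $4s-2$, not $2s-1$.
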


\section{Preliminaries}

For the set $\{0, \un, 1\}$, we say that $0$ and $1$ are stable and $\un$ is unstable. The partial order $\{\un < 0, \un < 1\}$ is the natural partial order associated with this set, called the \emph{instability partial order}. We extend this partial order to strings of fixed length over this set by saying $x\leq y$ if and only if $x_i \leq y_i$ for all $i$.
A string $x \in \zuo^n$ is a resolution of $y \in \zuo^n$ if $y\leq x$. $x$ is a proper resolution of $y$ if in addition $x\neq y$.
A function $f:\zuon \to \zuo$ is \emph{natural} if for all $x, y$ such that $x \leq y$, we have $f(x) \leq f(y)$ \emph{and} $f(x)$ is stable when all bits of $x$ are stable.
Notice that the partial order can be extended to natural functions as $f \leq g$ if and only if $f(x) \leq g(x)$ for all $x$.

We call elements in the set $f^{-1}(b)$ as $b$-inputs for any $b$.

\begin{definition}[Hazard-free Extensions]
    \label{def:hfe}
    For a Boolean function $f$, we define its hazard-free extension $\widetilde{f}$ as: $\forall y \in \zuo^n$:
$$
\widetilde{f}(y) = \begin{cases}
 	b & \textrm{$\exists b \in \zo$ such that 
 	for all resolutions $x \in \zon$ of $y$, $f(y)=b$.}  \\
 	u & \textrm{ otherwise }
 	\end{cases}
$$ 
\end{definition}

When it is clear from the context, we replace $\widetilde{f}$ by $f$ to denote the hazard-free extension of $f$. Notice that the hazard-free extension is the unique maximal natural function that is an extension of the Boolean function. 

A \emph{literal} is either $x_i$ (positive) or $\bnot{x_i}$ (negative) for some variable $x_i$. We consider the two standard Boolean algebra operations called AND, or conjunction, denoted by $\cdot$ or simple juxtaposition; and OR or disjunction, denoted by $+$. Boolean formulas are constructed by composition of these operations in the natural fashion. Given a formula $F$ over these operations and an input $x\in\{0,\un,1\}^n$, we use $F(x)$ to denote the value of $F$ on $x$, where the AND and OR are now replaced with their hazard-free extensions. Ikenmeyer~et~al.~\cite{ikenmeyer2019complexity} showed that the class of functions thus computed is exactly the class of natural functions.

\begin{definition}[Implicants and Implicates]\label{def:pi}
    A \emph{term} is a conjunction of literals and a \emph{clause} is a disjunction of literals. A term $t$ is an \emph{implicant} if $t(x)=1$ implies that $f(x)=1$. A clause is an \emph{implicate} if $c(x)=0$ implies that $f(x)=0$. An implicant or implicate is \emph{prime} if it is minimal in the number of literals. The \emph{size} of an implicant or implicate is the number of literals in it.
\end{definition}

An implicant or an implicate for an $n$-input Boolean function can be represented by a string in $\{0,\un,1\}^n$ such that bit $i$ is $1$ if $x_i$ occurs positively, $0$ if it occurs negatively, and $\un$ otherwise. Given this representation, we also note that the hazard-free extension of $f$ is the unique natural function that is an extension of $f$ that outputs $1$ on all the prime implicants and $0$ on all the prime implicates.

A formula is in Conjunctive Normal Form (CNF) if it is an AND of clauses and is in Disjunctive Normal Form (DNF) if it is an OR of terms. A CNF formula is a $k$-CNF if each clause has size at most $k$ and a DNF formula is a $k$-DNF if each term has size at most $k$. A circuit or a formula for a function $f$ is hazard-free if it computes the hazard-free extension of $f$ assuming the operations AND, OR, and negation in the circuit or formula compute the hazard-free extensions of AND, OR, and negation.

\section{Depth Bounds for Hazard-free Decision Trees}

In this section, we prove various lower and upper bounds for depth of hazard-free decision trees.
\subsection{$\fmux$ is Evasive in the Hazard-free World}

A function is \emph{evasive} if the decision tree depth is the maximum possible. We first prove Theorem~\ref{thm:muxdepth}, i.e., $\fmux_n$ is evasive for all $n$. In fact, we prove a more general theorem. We consider hazard-free decision trees for $\fmux_n$ that are guaranteed to produce the correct output when the number of unstable values in the input is at most $k$, for an arbitrary $k\in[0,2^n+n]$. We call such decision trees, \emph{$k$-bit hazard-free decision trees}.

\begin{theorem}
For $k \in [0, n]$, any optimal depth decision tree that correctly computes the hazard-free extension of $\fmux_n$ on inputs with at most $k$ unstable values has depth $2^k + n$.
\end{theorem}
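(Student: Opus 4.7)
The plan is to prove matching upper and lower bounds of $2^k+n$ on the depth.

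\emph{Upper bound.} I construct a decision tree that first queries all $n$ selector bits. If the number $j$ of selectors that reply $\un$ exceeds $k$, the input violates the $k$-bit promise and any output suffices. Otherwise $j \leq k$, so the $2^j$ data bits whose indices are resolutions of the selector string are all queried, and the tree outputs their common stable value if they agree, or $\un$ otherwise. The worst-case depth is $n + 2^k$.

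\emph{Lower bound.} I adapt the adversary argument from Example~\ref{ex:hfmux1} and Theorem~\ref{thm:muxdepth}. The adversary replies $\un$ to the first $k$ distinct selector queries and $0$ to all subsequent selector queries. For each data query $x_j$, with $Z_t$ denoting the set of selectors already replied $0$ at that time, it replies $1$ if the bits of $j$ at positions in $Z_t$ are all $0$, and $0$ otherwise. The adversary's final committed ``plan'' sets the first $k$ queried selectors to $\un$, the remaining queried selectors and all unqueried selectors to $0$, and each queried data bit to its replied value. Given a decision tree $T$ of depth $d < n + 2^k$, I run $T$ against this adversary and produce two inputs A and B, each with at most $k$ unstable values and consistent with all replies, such that $\fmux_n(A) \neq \fmux_n(B)$. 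In the first case some selector $s_i$ is unqueried: I take A to be the committed plan (with unqueried plan-addressed data bits filled in as $1$), which has output $1$, and B to be A with $s_i$ flipped to $1$, which shifts the addressed set to a disjoint collection of $2^k$ new-addressed bits. The constraint $d < n + 2^k$ implies by counting that at least one bit among the $2^{k+1}$ plan-addressed and new-addressed bits is unqueried, so I can complete B to have output $\un$ (mixed) or $0$. In the remaining case every selector is queried but fewer than $2^k$ plan-addressed data bits are queried; I flip an unqueried plan-addressed bit in B from $1$ to $0$, yielding output $\un$ instead of $1$.

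The main technical obstacle is verifying consistency of the adversary's data-bit replies as $Z$ grows, since the ``plan-addressed'' status of a queried data bit is defined relative to the final plan but the reply must be made earlier. This is handled by monotonicity: since $Z_t \subseteq Z^*$, a reply $0$ to $x_j$ witnessed by some $j_i = 1$ with $i \in Z_t$ remains compatible with the final committed value $x_j = 0$, because $j_{Z^*} \neq 0$ ensures $j$ is still non-plan-addressed at the end. In the other direction, committing $x_j = 1$ whenever the reply was $1$ is unconditionally consistent since the adversary need not distinguish the plan-addressed and non-plan-addressed cases for bits with value $1$.
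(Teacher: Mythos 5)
Your upper bound is correct and matches the paper's (brief) construction: query all selectors, then the at most $2^k$ addressed data bits. The genuine gap is in your lower bound, and it is structural: your adversary never answers $0$ on a data bit that is still consistent with the stable selector replies, so every plan-addressed data bit is answered $1$. A querier can exploit this to terminate early with a \emph{correct} leaf at depth below $n+2^k$, in which case no distinguishing pair $A,B$ exists and your argument produces nothing. Concretely, take $n=k=2$ and a querier that asks $s_0$ (reply $\un$) and then all four data bits: since no selector has been answered $0$, $Z_t=\emptyset$ throughout and every data reply is $1$. After these $5 < n+2^k = 6$ queries, \emph{every} input consistent with the transcript has all data bits equal to $1$, so outputting $1$ is correct regardless of $s_1$ and of any further $\un$'s. (More generally, for $k=n$ a querier that asks all $2^n$ data bits and no selectors gets all-$1$ replies and stops correctly at depth $2^n < 2^n + n$.) Your counting step also silently assumes $k$ selectors have been queried, so that the plan-addressed and new-addressed sets have size $2^k$ each; when the querier asks $q_s < k$ selectors these sets have size only $2^{q_s}$, and in the example above they are fully queried, so "at least one bit among the $2^{k+1}$ bits is unqueried" simply fails. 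A smaller, fixable issue in the same step: even when $k$ selectors are queried, an \emph{arbitrary} unqueried $s_i$ need not work if many selectors are unqueried and the data-query budget exceeds $2^{k+1}$; one must argue over the disjoint new-addressed sets $P_{i}$ for all unqueried $i$ to find a good choice.

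The paper's adversary avoids the fatal problem by \emph{planting a zero}: it answers $1$ to data queries except when the queried bit is the last unqueried bit of the currently addressable set $U_d$, in which case it answers $0$ and records that bit's index $r$; all subsequent stable selector queries are then answered with the bits $r_i$ of that index. This keeps alive, until the very end, the tension your adversary discards: exactly one addressed data bit is $0$, and whether the output is $1$ or $\un$ hinges on whether that bit is actually addressed, which depends on every single selector bit. That is what forces all $n$ selector queries \emph{in addition to} the at least $2^k$ data queries (the paper's invariants (1) and (2) formalize that the $0$-bit can always be included in or excluded from the addressed set by setting the unqueried selectors). With your all-ones replies on addressed bits, the output becomes independent of the unqueried selectors once the addressed data bits are exhausted, and the lower bound collapses; repairing the proof essentially requires adopting the paper's planted-$0$ device (or an equivalent one) rather than a local patch to your counting.
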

\begin{proof}
First, we prove the lower bound. Let $T$ be a $k$-bit hazard-free decision tree that computes $\fmux$. We describe an adversarial strategy, i.e., the values returned by the responder so that the querier has to query $2^k+n$ bits to deduce the correct answer. At any step in this process, based on the bits that are queried, we define $Q_s$, (resp. $Q_d$) to be the set of selector bits (resp. the data bits) that have been already been queried. We define $C_{s}$ to be the set of selector bit that were queried and were answered to be $b \in \{0,1\}$, and finally $U_{s}$ as the set of selector bits that have been either not queried so far or were queried and were answered to be $\un$. Similarly, we define $U_{d}$, a subset of data bits as $\{x_{(b_0,\dotsc,b_{n-1})} \mid b_i=r_i \text{ for all $i$ in }  C_s \}$. Note that $|U_{d}| = 2^{|U_s|}$.

The adversarial strategy is as follows: Initialize an index into data bits $r = (1, \dotsc, 1)$. When a data bit $x_i$ is queried and if $U_d \setminus Q_d = \{x_i\}$, then set $r=i$ and return $0$, else just return $1$. When a selector bit $s_i$ is queried, if $|Q_s| < k$, then return $\un$, else return $r_i$. Notice that the responder returns at most $k$ unstable values and all data bits are stable. Also, the adversary returns at most $n-k$ stable values for selector bits. This implies that $|U_d| \geq 2^k$ at all times.

The adversary maintains the following invariants: (1) There is at least one way to assign values to the unqueried selector bits so that any data bit in $U_d$ is indexed by one of the resolutions of the resulting selector bits. (2) If there are any unqueried selector bits, then for any data bit $x_i$ in $U_d$, there exists two settings for the values of the unqueried selector bits so that $x_i$ is indexed by a resolution of one of the settings and $x_i$ is not indexed by any resolution of the other setting.

Part (1) is true by definition of $U_d$. For part (2), observe that $U_d$ contains all data bits for which the indices are consistent with the values of queried selector bits with stable values. Say $s_j$ is an unqueried selector bit and $x_{(b_0,\dotsc,b_{n-1})}\in U_d$. Then, by setting $s_j=\bnot{b_j}$, we can produce a setting that avoids indexing this data bit. On the other hand, by setting all unqueried selector bits $s_j$ to $b_j$, we can ensure that this data bit is indexed by a resolution.

We claim that at the end of the game, all data bits in $U_d$ must have been queried. Suppose there is an unqueried data bit in $U_d$. By part (1) of the invariant, we know that we can set any remaining selector bits so that the unqueried data bit is indexed by some resolution. We know that the querier received a value of $1$ for all the queried data bits. So if they answer $1$, then we can say that one of the unqueried data bits is $0$ and the output cannot be $1$. If they answer $\un$ or $0$, then we can say that all unqueried data bits are $1$ forcing the output to be $1$. So at least $2^k$ data bits have to be queried.

We claim that all the selector bits must be queried. Suppose not. Then, we know that exactly one data bit in $U_d$ at the end has value $0$ and the rest have value $1$. By part (2) of the invariant, we can set the remaining selector bits in two ways so that this $0$ data bit is included, in which case the output is $\un$, or the $0$ data bit is excluded, in which case the output is $1$. So the querier has to query all $n$ selector bits.
\end{proof}

Notice that when $k=n$, we have to query all bits in the input. So for $k\in [n+1, 2^n+n]$, the depth is exactly $2^n+n$.

\subsection{Functions with Low Depth Hazard-free Decision Trees}
The theory of decision tree depth for hazard-free functions would not be very interesting if all functions were hard. We now show that there are some very easy, but non-trivial functions.
\smalldepth*
\begin{proof}
    The function has $n$ input variables $\mathbf{s} = (s_0, \dotsc, s_{n-1})$ and $2^{n+1}-1$ input variables $y = (y_{(b_1, \dotsc, b_i)})$ where $i\in [0, n+1]$ and each $b_j\in\{0,1\}$ (The index set for $y_{*}$ variables is sequences of bits of length $0$ through $n+1$):
    \begin{equation*}
        f_n(\mathbf{s}, \mathbf{y}) = \fmux_n(s_0, \dotsc, s_{n-1}, g_{(0, \dotsc, 0)}(\mathbf{y}), \dotsc, g_{(1, \dotsc, 1)}(\mathbf{y}))
    \end{equation*}
    where we define:
    \begin{equation*}
        g_{(v_0, \dotsc, v_{n-1})} = z_{\mathbf{v}0}\ \mathsf{op}_{\mathbf{v}0}\  (z_{\mathbf{v}1}\ \mathsf{op}_{\mathbf{v}1}\ (z_{\mathbf{v}2} \dotsc \mathsf{op}_{{\mathbf{v},n-1}}\ z_{\mathbf{v}n})\dotsc)
    \end{equation*}
    where $\mathbf{v} = (v_0, \dotsc, v_{n-1})$, $\mathsf{op}_{\mathbf{v}j}$ is Boolean $+$ if $v_j = 0$ and Boolean $\cdot$ otherwise. Each $z_{\mathbf{v}j}$ is the variable $y_{(v_0,\dotsc,v_{j-1})}$. In particular, $z_{\mathbf{v}0}$ is the variable $y_{()}$. i.e., it is independent of $\mathbf{v}$.

    We prove the theorem by induction on $n$. For $n=1$, the function $f_1$ is (after rewriting variables for convenience): $f_1(s, x, y, z) = \fmux_1(s, x+y, xz)$. A hazard-free decision tree of depth $3$ for $f_1$ is as follows: First, we query $s$. If it is $0$ or $1$, then two more queries suffice. Suppose $s = \un$. Then, we query $x$. If $x=0$, then $xz=0$ and the answer can be determined by only querying $y$. If $x=1$, then $x+y=1$ and the answer can be determined by only querying $z$. If $x=\un$, then we can immediately answer $\un$ since $x+y$ can only be $\un$ or $1$ and $xz$ can only be $\un$ or $0$.

\begin{figure}
\caption{Hazard-free decision tree for $f_n$\label{fig:smalldepth}}
\begin{tikzpicture}[grow=right]
    \tikzset{level distance=2cm}
    \Tree [.{$s_0$}
      [.{$y_{()}$} [.{$T_{n-1}$} ] [.{$T''_{n-1}$} ] [.1 ] ]
      [.{$y_{()}$} [.{$T'_{n-1}$} ] [.{$\un$} ] [.{$T''_{n-1}$} ]]
      [.{$y_{()}$} [.0 ] [.{$T'_{n-1}$} ] [.{$T_{n-1}$} ] ]
    ]
\end{tikzpicture}
\end{figure}
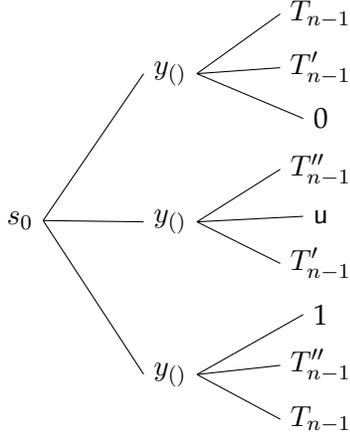

    The proof for the inductive case is similar to the base case. For $f_n$, in two queries, we either determine the answer or reduce the remaining decision to the $n-1$ case. The decision tree is given in Figure~\ref{fig:smalldepth}. We split the proof of correctness of our construction into nine different cases corresponding to the leaves of this tree.
    
    ($s_0=0$ and $y_{()}=0$) In this case, we know that the selector bits can only select functions $g_{\mathbf{v}}$ such that $\mathsf{op}_{\mathbf{v}0} = +$. Since, $y_{()}=0$, the remaining problem is isomorphic to $f_{n-1}$ on the selector bits $(s_1, \dotsc, s_{n-1})$ and the functions $g_{\mathbf{v}}$ where $v_0=0$ and the variable $y_{()}$ has been substituted with $0$.

    ($s_0=0$ and $y_{()}=\un$) We know that the selector bits can only select functions $g_{\mathbf{v}}$ such that $v_0=0$ which implies $\mathsf{op}_{\mathbf{v}0} = +$. Since $y_{()}=\un$, the function $f_n$ can only now evaluate to $\un$ or $1$. More precisely, it evalutes to $1$ if and only if all $g_{\mathbf{v}}$ that are selected evaluate to $1$ and $\un$ otherwise. Let $X_{n-1}$ be the hazard-free decision tree for the $(n-1)$ case where the selector bits are $(s_1,\dotsc,s_{n-1})$ and they select functions $g_{\mathbf{v}}$ where $v_0=0$ and $y_{()}$ and the first $+$ are removed from the functions $g_{\mathbf{v}}$. It is easy to see that this is isomorphic to $f_{n-1}$. The tree $T''_{n-1}$ is obtained by relabeling leaves labeled $0$ in $X_{n-1}$ with $\un$.

    ($s_0=0$ and $y_{()}=1$) The output of $f_n$ must be $1$. So we can output $1$ immediately.

    ($s_0=\un$ and $y_{()}=0$) We know that selector bits must select some $g_{\mathbf{v}}$ where $v_0=0$ and some $g_{\mathbf{w}}$ where $w_0=1$. All such $g_{\mathbf{w}}$ must evaluate to $0$ as $\mathsf{op}_{\mathbf{w}0} = \cdot$. So the value of $f_n$ is $0$ or $\un$ and it is $0$ if and only if all such $g_{\mathbf{v}}$ evaluate to $0$. We can determine this using a decision tree that is isomorphic to that for $f_{n-1}$ where all leaves labeled $1$ are relabeled with $\un$.

    ($s_0 = y_{()} = \un$) In this case, we select both $g_{\mathbf{v}}$ where $\mathsf{op}_{\mathbf{w}0} = +$ and $\mathsf{op}_{\mathbf{w}0} = \cdot$. Since $y_{()} = \un$, the output of $g_{\mathbf{v}}$ where $\mathsf{op}_{\mathbf{w}0} = +$ cannot be $0$ and the output of $g_{\mathbf{v}}$ where $\mathsf{op}_{\mathbf{w}0} = \cdot$ cannot be $1$. Therefore, the output has to be $\un$.

    ($s_0 = \un$ and $y_{()} = 1$) We can deduce that $\mathsf{op}_{\mathbf{w}0}$ for selected $g$ could be $+$ or $\cdot$. Since $y_{()} = 1$. The functions $g$ where this operation is $+$ must evaluate to $1$. Therefore, the remaining decision tree only has to deal with the other half.

     ($s_0 = 1$ and $y_{()} = 0$) Similar to $s_0 = 0$ and $y_{()} = 1$.

     ($s_0 = 1$ and $y_{()} = \un$)  Similar to $s_0 = \un$ and $y_{()} = 0$.

     ($s_0 = 1$ and $y_{()} = 1$)  Similar to $s_0 = 0$ and $y_{()} = 0$.
\end{proof}
To the best of our knowledge, the functions $f_n$ we define in the above proof has not been described anywhere else. It is unknown whether these functions have any other applications. Our interest in this function in this work is purely to show that easy functions exist in the hazard-free decision tree model.

\section{Size Bounds for Hazard-free Decision Trees}

\subsection{Bounds for $\fmux$ function}
We have seen that the depth of any hazard-free decision tree for {\sf MUX} has a depth of $2^n + n$, in contrast to the Boolean decision tree depth of $n+1$. The size of the decision tree in the Boolean case is $2^{n+1}$, which is linear in the number of inputs. In this subsection, we shall bound the size of the hazard-free decision tree for $\fmux_n$ by $\Theta(4^n)$, which is quadratic in the input size.

\muxsize*
\begin{proof}{(of lower bound)}

Consider any hazard-free decision tree $T$ computing $\fmux_n$. We only consider leaves in $T$ such that on the path from the root to the leaf, all selector bits are queried. We partition this set of leaves into $L_\alpha$ for $\alpha\in\zuon$ such that any leaf in $L_\alpha$ is only reached by inputs where the selector bits have value $\alpha$. Since we only consider leaves where all selector bits are queried, this is a partition of such leaves.

\begin{lemma}\label{lem:lalpha}
    If $\alpha$ contains exactly $k$ unstable bits, then $|L_\alpha| \geq 2^{k+1}$.
\end{lemma}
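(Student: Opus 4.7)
The plan is to exhibit $2^{k+1}$ inputs with selector bits equal to $\alpha$ that reach pairwise distinct leaves of $T$, while simultaneously verifying that every selector bit is queried on each of the corresponding root-to-leaf paths (thereby ensuring each such leaf belongs to $L_\alpha$).

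Let $D_\alpha$ denote the set of $2^k$ data bits indexed by the Boolean resolutions of $\alpha$; these are the relevant data bits for selector setting $\alpha$. For each $d_i \in D_\alpha$, I would define two inputs, both sharing selector values $\alpha$ and having every data bit outside $D_\alpha$ set to $0$:
\begin{itemize}
\item $y_i$: set $d_i = 1$ and every other bit of $D_\alpha$ to $0$;
\item $y'_i$: set $d_i = \un$ and every other bit of $D_\alpha$ to $0$.
\end{itemize}
This gives $2 \cdot 2^k = 2^{k+1}$ inputs. For $k \ge 1$, each of these inputs has hazard-free value $\un$ (the relevant bits are a mix of $0$ and $1$ for $y_i$, and include a $\un$ for $y'_i$). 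For $k = 0$, the single pair $y_1, y'_1$ already has distinct outputs $1$ and $\un$, so they trivially reach distinct leaves.

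For pairwise distinctness when $k \ge 1$, I would use the all-zero input $z_0$ (selectors $\alpha$ and every data bit $0$), whose correct output is $0$. Each $y_i$ and each $y'_i$ agrees with $z_0$ at every position except $d_i$. Hence if two of the chosen inputs reached the same leaf $\ell$, the (at most two) positions in $D_\alpha$ at which they differ must be unqueried on the path to $\ell$; but then $z_0$ agrees with both on every queried position and also reaches $\ell$, contradicting $z_0$'s correct output $0$ against $\ell$'s label $\un$.

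To show that every such leaf lies in $L_\alpha$, I would argue that every selector bit is queried on the corresponding path. Suppose $s_j$ were unqueried on the path to the leaf $\ell$ reached by $y_i$ (or $y'_i$). Modify $\alpha_j$ so that $d_i$ is no longer relevant for the new selector tuple: if $\alpha_j \in \{0,1\}$, flip it; if $\alpha_j = \un$, resolve it to the complement of the $j$-th coordinate of $d_i$'s index. Then every data bit relevant for the new selector tuple lies outside $D_\alpha$ and is therefore $0$, so the modified input has hazard-free output $0$; yet it still reaches $\ell$ because $s_j$ was unqueried, contradicting $\ell$'s label. The main obstacle in carrying out this plan is the careful bookkeeping of these two witness inputs ($z_0$ and the selector-modified input), both of which rely on the delicate choice of setting all data bits outside $D_\alpha$ and all but one bit of $D_\alpha$ to $0$, so that the forced output of the witness is $0$ while the leaf insists on a non-$0$ label.
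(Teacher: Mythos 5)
Your proof is correct, and at heart it is the mirror image of the paper's argument: where the paper builds the hard family $I_\alpha \cup J_\alpha$ of inputs having exactly one $0$ (resp.\ exactly one $\un$) among all-$1$ indexed data bits, you take exactly one $1$ (resp.\ one $\un$) among all-$0$ data bits, and your ``deselect $d_i$'' argument for why every selector bit must be queried is the same move the paper makes (flip a stable $\alpha_j$, or resolve an unstable one to the complement of the corresponding coordinate of $d_i$'s index, so that the special bit is no longer indexed and the tree's non-$0$ leaf label contradicts the forced output). The one genuinely different ingredient is how you get pairwise distinctness of leaves. The paper proves a stronger pointwise claim --- on each hard input the tree must also query that input's special data bit --- and derives distinctness from it, needing the extra observation that $I_\alpha$-leaves and $J_\alpha$-leaves cannot coincide because no data bit of an $I_\alpha$-input is $\un$. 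You never show the special bit is queried; instead you run a single collision argument in which any two colliding inputs force their (at most two) differing positions to be unqueried, letting the all-zero witness $z_0$ (correct output $0$) slide into the same $\un$-labeled leaf. This handles all pairs, including the cross pairs $y_i$ versus $y'_j$, uniformly in one stroke, which is arguably cleaner than the paper's two-family bookkeeping; the trade-off is that you rely on correctness at the auxiliary input $z_0$ rather than obtaining the reusable structural fact that the special bit itself is read. One small slip worth fixing: in the $\alpha_j = \un$ case of your selector argument, the data bits indexed by resolutions of the modified selector tuple do \emph{not} lie outside $D_\alpha$ --- they lie in $D_\alpha \setminus \{d_i\}$ --- but since your construction sets those bits to $0$ as well, the forced output of the modified input is still $0$ and the contradiction goes through unchanged.
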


We complete the proof assuming the above lemma. The calculation is shown below:
\begin{align*}
    |L| &\geq \sum_{\alpha \in S}|L_\alpha| \\
    &\geq \sum_{k=0}^n 2^{k+1} |\{\alpha \mid \alpha \text{ has exactly $k$ unstable values.}\}|\ \text{Lemma~\ref{lem:lalpha}}\\
    &= \sum_{k=0}^n2^{k+1}{n \choose k}2^{n-k}\\
    &= 2 \cdot 2^n \sum_{k=0}^n {n \choose k} = 2 \cdot 4^n
\end{align*}

\begin{proof}(of Lemma~\ref{lem:lalpha})
Define $I_\alpha$ to be the set of inputs consistent with $\alpha$ on the selector bits such that exactly one data bit indexed by the resolutions of $\alpha$ is $0$ and all other data bits is $1$. These are all $\un$-inputs. Note that $|I_\alpha| = 2^k$. We claim that for any input $z\in I_\alpha$, the tree $T$ must query all selector bits and the data bit (say $x_{(b_0,\dotsc,b_{n-1})}$) with the value $0$ of $z$. Suppose a selector bit $s_j$ is not queried and $\alpha_j=\un$. Consider $z'$ that is the same as $z$ except $s_j=\bnot{b_j}$. The output on $z'$ for $T$ will also be $\un$ when it should be $1$. Suppose $\alpha_j$ is stable. Consider $z'$ such that it differs from $z$ only by $s_j=\bnot{b_j}$. The output on $z'$ is $1$ but $T$ reaches the same leaf for $z$ and $z'$. This establishes that all inputs in $I_\alpha$ are in $L_\alpha$. We also claimed that $x_{(b_0,\dotsc,b_{n-1})}$ is also queried. Indeed, if it is not, we can construct a $z'$ that is exactly the same as $z$ where that data bit is $1$ that will follow the same path. But that input is a $1$-input.

Notice that two distinct inputs in $I_\alpha$ must reach different leaves in $L_\alpha$. This is because they have unique data bits that are $0$ and they are distinct for distinct inputs. The above claim shows that both these bits have to be queried for those inputs. Therefore, they cannot reach the same leaf. So $|L_\alpha| \geq 2^k$.

If we consider input set $J_\alpha$ which is similar to $I_\alpha$ except that exactly one indexed (by $\alpha$) data bit is $\un$ and the rest are $1$, we can argue the same for $J_\alpha$ and conclude that there must be $2^k$ inputs that reach distinct leaves. Moreover, these must reach distinct leaves from $I_\alpha$ as no data bit in an input from $I_\alpha$ has value $\un$.\end{proof}\end{proof}

We remark that the above argument also gives a lower bound on the size of $k$-bit hazard-free decision trees for $\fmux$. Note that Lemma~\ref{lem:lalpha} is already parameterized in terms of the number of unstable bits. By a similar argument, the following lower bound follows:
any $k$-bit hazard-free decision tree for $\fmux$ must have at least $2^n \left(\sum_{i=0}^k{n \choose i}\right)$.

\begin{proof}{(of upper bound)}
We can show that $\fmux_n$ has a hazard-free decision tree with $4^{n+1}-3^n$ leaves. First, we query all the selector bits in order. This results in a tree with $3^n$ leaves where each label is naturally associated with $\alpha\in\zuon$ that correspond to the values of the selector bits. For each $\alpha$, we will now proceed to query all the data bits that are indexed by resolutions of $\alpha$. If $\alpha$ has exactly $k$ unstable values, then we will query at most $2^k$ data bits. Let us call this tree (a subtree of the main decision tree) $T_\alpha$. We will have three subtrees for $T_\alpha$, which we call $T_{\alpha,0}$, $T_{\alpha,\un}$, $T_{\alpha,1}$ corresponding to the values of the first data bit. Notice that $T_{\alpha,\un}$ is just a leaf labeled $\un$. From $T_{\alpha,1}$, we query the next data bit. If it is $0$ or $\un$, we immediately output $\un$. Otherwise, we build the tree $T_{\alpha,11}$. Notice that the leaves of $T_\alpha$ correspond exactly to strings:
\begin{enumerate}
    \item $z\in\zuo^{\leq 2^k}$ such that $z=x\un$ where $x$ is either all $1$ or all $0$ (possibly empty). When $x$ is empty, these are the same string. So there are $2^{k+1}-1$ such strings.
    \item $z\in\zuo^{\leq 2^k}$ such that $z=x0$ where $x$ is non-empty and all ones or $z=x1$ and $x$ is non-empty and all zeroes. There are $2^{k+1}-2$ such strings.
    \item Two more strings $z=1^{2^k}$ or $z=0^{2^k}$.
\end{enumerate}
The total is $2^{k+2} - 1$. Summing over all $\alpha$ gives us the size upper bound of $4^{n+1} - 3^n$.
\end{proof}


\subsection{Constructing Hazard-free Decision Trees from Boolean Decision Trees}

In this subsection we consider a bottom up construction of a hazard-free decision tree from the original decision tree of a Boolean function, say $T$. In our bottom-up approach, we shall start building the $u$ subtree for every node from that last level.

\dthfdt*
\begin{proof}
(first inequality) Given a hazard-free decision tree $T'$, consider $T$ obtained by removing all subtrees from $T'$ that can only be reached by following an edge labelled $\un$. It is easy to see that $T$ is a decision tree computing $f$, and hence $\size(T) \ge \size(f)$. Since $T$ has at least one more leaf for each internal node of $T'$ (corresponding the query response of $\un$ at that node), we have that $\size(T') \ge 2\size(T)-1 \ge 2\size(f)-1$. This inequality is tight as witnessed by the parity function on $n$ inputs.

(second inequality) We describe how to construct a hazard-free decision tree $T'$ from a Boolean decision tree $T$.\\[-2mm]

\noindent{\bf Construction of $T'$}:
Let $v$ be a node querying a variable $x_v$ and its subtrees when $x_v = 0$ (resp. $x_v =1$) are $T_{v, 0}$ (resp. $T_{v, 1}$) in the original Tree $T$. Let $v_0$ and $v_1$ be the children of node $v$ in the decision tree $T$ through a bottom-up inductive approach where we add $\un$ query subtrees to every node of $T$.

Let us assume inductively that when a $\un$ query subtree is added to any node of $T$, all of the children of the node already have their $\un$ query subtrees attached. The base case would be at the leaves where you do not have to add anything since there's no variable to be queried. 

The inductive step at a node $v$ would be as follows: by induction we assume to have converted the decision tree $T_{v, b}$ rooted at $v_b$ to $T_{v, b}'$ for $b \in \{0, 1\}$. Let $\rho(v)$ be the set of tuples of variables and the values that they are set to in the path from root to the nodes $v$. For $b \in \{0,1\}$, and any leaf $\ell \in T_{v, b}'$, let $\rho_b(\ell)$, the set of tuples of variables and the values that they are set to in the path from $v_b$ to the leaf $\ell$.

We will now construct a subtree $T'_{v, \un}$ of $v$ for the case $x_v = \un$. The subtree $T'_{v, \un}$ is a copy of $T_{v, 0}'$ with the leaves labelled $0$ and $1$ of $T_{v, 0}'$ replaced with copies of $T_{v, 1}'$ appropriately modified as follows: For $b \in \{0,1\}$, let $\ell$ be the leaf in the new copy of $T_{v, 0}'$ labelled with $b$. Replace the leaf $\ell$ with a copy of $T'_{v, 1}$ with the following modifications (1)~ Remove any branches that are inconsistent with assignments in $\rho_0(\ell)$ (2)~ Replace the $\overline{b}$-leaves with $\un$.\\

\noindent \textbf{Correctness:} We now prove the correctness of the above construction as follows:
We show that for any input $x \in \zuon$ that satisfies $\rho(v)$, the subtree constructed above ensures that the leaf that the subtree reaches has the value $f(x)$. The proof is by induction on the height of $v$.

{(Base case)} The base case is when $v$ is a leaf. In this case, the construction is trivially correct.

{(Inductive case)} Consider a non-leaf $v$, For a child $w$ of $v$, we can now assume that any input $x \in \zuon$ that satisfies $\rho(w)$ reaches the value $f(x)$ in the subtree that has been constructed at $w$ (including $T_{w, 0}', T_{w, 1}', T_{w, \un}'$). Any $x$ consistent with $\rho(v)$, and $x_v$ is $0$ or $1$, the tree is lead to the subtrees rooted at $v_0$ or $v_1$, which gives the correct value by induction hypothesis.

Now we consider the case of $x$ consistent with $\rho(v)$ and $x_v = u$. 
Let $x_0$, and $x_1$ be the inputs obtained by replacing the $x_v$ with $0$ and $1$ respectively. We know that:
\[
    f(x) = \begin{cases}
        b & \textrm{ if $f(x_0) = f(x_1) = b$} \\
        u & \textrm{ otherwise} 
    \end{cases}
    \]

We now have two cases based on value of $f(x_0)$.
\begin{description}
    \item{\bf Case 1:} $f(x_0) = b$ where $b \in \{0,1\}$. We know that $x_0$ leads the tree to a leaf $\ell_0$, labelled $b$, via the assignment $\rho(v) \cup \rho_0(\ell) \cup \{(x_v,0)\}$ in the tree $T'_{v, 0}$. The input $x_1$ leads the tree to a leaf $\ell_1$ in $T_{v, 1}'$ and the corresponding path is represented by the assignment $\rho(v) \cup \rho_1(\ell_1) \cup \{(x_v,1)\}$. 
    
    Consider $x$ in $T'$. It reaches the node $v$ in $T$ (since it is consistent with $\rho$) and it further reaches $T_{v, \un}'$ since $x_v=\un$. From there, the input $x$ will follow the path to the node corresponding to the assignment $\rho(v) \cup \{(x_v,u)\} \cup \rho_0(\ell_0)$ since $T_{v, \un}'$'s first part is same as $T_{v,0}'$ by the construction. Further, since the leaf ($\ell_0$) in the copy of $T_{v,0}'$ has been replaced with a modification of $T_{v,1}'$ restricted to $\rho_0$, $x$ follows the path to the leaf indexed by $\rho(v) \cup \{(x_v,u)\} \cup \rho_0(\ell_0) \cup \rho_1(\ell_1)$ and reaches the new leaf in the modified copy of $T_{v, 1}'$. In this case, the output would be 
    \[
    T'(x) = \begin{cases}
        b & \textrm{ if $f(x_1) = b$} \\
        u & \textrm{ otherwise} 
    \end{cases}
    \] as desired.
    \item{\bf Case 2:} $f(x_0) = u$. We know from induction hypothesis that $x_0$ follows the path to a leaf ($\ell$) labelled $u$ via the assignment $\rho(v) \cup \{(x_v,0)\} \cup \rho_0(\ell)$. Therefore, the input $x$ firstly reaches $T_{v,\un}'$ and since this subtree is a copy of $T_{v, 0}'$, the input $x$ will reach a leaf labelled $\un$ by following a similar path as $x_0$.
\end{description}
\noindent \textbf{Size of $T'$}:
An easy upper bound on $\size(T'_{v, \un})$ is $\size(T'_{v, 0})\size(T'_{v, 1})$. we have:
\begin{eqnarray*}
\size(T_v') & \le & \size(T_{v, 0}')+\size(T_{v, 1}')+\size(T_{v, \un}') \\
& \le & \size(T_{v, 0}')+\size(T_{v, 1}')+\size(T'_{v, 0})~\size(T'_{v, 1}) \\
& \le & (\size(T'_{v, 0})+1)(\size(T'_{v, 1})+1)-1
\end{eqnarray*}
Hence for any node, we have the following recursive relation for $\size$ for the tree rooted at the node. $\size(T')+1 \le (\size(T_{r, 0}')+1)(\size(T_{r, 1}')+1)$. Notice that the two terms on the right-hand side has the same form as the expression on the left-hand side. Since the construction is recursive, we can keep recursively expanding all the terms on the right-hand side until we reach the leaves. At any leaf, this expression is just $(1+1)$ and there will be $\size(T)$ many of them. So $\size(T') + 1 \leq 2^{\size(T)}$ as desired.
\end{proof}

We remark that the lower bound proof is also tight as demonstrated below by $\fand_n$ function. However, there are functions in which $T_{v, u}'$ attached gets simplified significantly and leads to a reduction in size of the resulting decision tree. This is demonstrated by the construction for the hazard-free decision tree for $\fmux_n$.  

Applying the general construction above to the optimal Boolean decision tree for $\fand_n$ gives us a tree of size $2^{n+1}-1$. We will now prove that the upper bound in Theorem~\ref{thm:dthfdt} is tight. Recall that the Boolean decision tree for $\fand_n$ has size $n+1$.
\andsize*
\begin{proof}
To prove the lower bound, let $T$ be a hazard-free decision tree with optimal size computing $\fand$. Let $L$ be the set of leaves in $T$. For every leaf $\ell$ of $T$ we will associate a set $V_\ell$ which is the set of pairs $(v,\alpha)$ where $v \in V_\ell$ and $\alpha \in \zuo$ such that the computation path that leads to the leaf $\ell$ sets variable $v$ to $\alpha$ for all $(v,\alpha) \in V_\ell$.

Consider $T$ restricted to only $1, u$ which is a binary tree say $T_{1u}$. We prove that $\size(T_{1u}) \ge 2^n$. We know that there are at least $2^n$ leaves and $2^{n}-1$ internal nodes present in $T_{1u}$. Note that in $T$, every internal node will have a subtree (having at least 1 leaf) where the queried variable is set to 0. Hence there are $2^n -1$ more leaves in $T$ than in $T_{1u}$. Hence we have $\size(T) \ge 2^n + 2^n -1$ and we can use the trivial construction to get a hazard free decision tree of this size.

Similar to the size lower bound proof for $\fmux$, we shall choose a set of inputs (of size $2^n$) which should all definitely reach different leaves. Let $L$ be the set of leaves in $T_{1u}$. For every leaf $\ell$ of $T_{1u}$, consider the set of inputs $I = \{x \mid x \in \{1, u\}^n \}$ which does not have $0$. We claim that if $x \in I$ and $T$ reaches the leaf $\ell$ on input $x$, then $V_{\ell}$ has all variables of $x$. Assume this claim, the argument is complete since we know that every $x \in I$ goes to a different leaf and $|L| \ge 2^n$ thereby proving the proposition.

To prove the claim, suppose $x_i$ is not contained in $V_\ell$. We split the proof into two cases. If $T$ outputs $1$ or $u$, consider an input $x'$, such that $x'_i = 0$ and $x'_j = x_j \forall j \neq i$, note that this reaches $\ell$ as well and $\fand(x') = 0 \neq T_{1u}(x') = T(x')$, a contradiction. If $T$ outputs $0$, then observe that $\fand$ evaluates to $1$ on $x$, a contradiction.
\end{proof}

\subsection{Size bounds for $k$ Hazard-free Decision Trees}

We shall explore the size upper bounds of the $k$ hazard-free decision tree $T^{k}$ by a similar construction from the Boolean decision tree by adding $u$ subtree's at every node of $T$. A $k$ hazard-free decision tree computing $f$ should lead to the value $f(x)$ for any $x \in \zuon$ that has at most $k$ uncertain bits. 

\kbitdthfdt*
\begin{proof}
Assume that we have already constructed trees $T^0, T^1, T^2, ..., T^{k}$ where $T^i$ is the $i-$hazard-free decision tree constructed by adding $T^i_{v, \un}$ subtrees to all the nodes $v$ of T. The base step is just $T^0 = T$.

\noindent{\bf Construction of $T^{k+1}$:}
We construct a subtree $T^{k+1}_{v,\un}$ to link to the node $v$ when the query for variable $x_v$ returns $u$. We will build the subtree $T^{k+1}_{v, \un}$ from a copy of $T^k_{v, 0}$ by replacing the leaves labelled $0$ and $1$, with copies of $T^k_{v, 1}$ appropriately modified as follows: for $b \in \{0,1\}$, let $\ell$ be the leaf in the new copy of $T^k_{v, 0}$, labelled with $b$. Replace the leaf $\ell$ with a copy of $T^k_{v, 1}$ with the following modifications (1)~ remove any branches that are inconsistent with assignments in $\rho_0(\ell)$ (2)~ replace the $\overline{b}$-leaves with $u$.

\noindent \textbf{Correctness:} Assuming correctness of $T^k$ we can prove the correctness of this construction almost exactly as in the hazard-free decision tree construction.

Consider any input $x \in \zuon$ such that the number of $u's$ is $x$ is at most $k+1$. When the $T^{k+1}$ is instantiated with $x$ let $v$ be the first node that is set to $x_v = u$. If such a node is not encountered then the $x$ reaches a node of the original tree and it is easy to prove that all of it's resolutions reach that in $T$ and hence $T^{k+1}$ outputs $f(x)$. Otherwise let $x_0$ and $x_1$ be inputs obtained by replacing $x_v$ in $x$ with 0 and 1 respectively. We know that 
\[
    f(x) = \begin{cases}
        b & \textrm{ if $f(x_0) = f(x_1) = b$} \\
        u & \textrm{ otherwise} 
    \end{cases}
\]

Assuming the correctness of $T^k$ and since the number of $u'$s in both $x_0$ and $x_1$ is atmost $k$, the same case wise proof as in the correctness of the hazard-free decision tree construction goes exactly when $T'_{v, 0}$ and $T'_{v, 1}$ are replaced by $T^k_{v, 0}$ and $T^k_{v, 1}$ respectively.

Note that unlike the previous construction however, this need not be bottom-up as long the above is done for every node since we only need $T^k$ for constructing the $\un$ subtree and none of the subtrees from $T^{k+1}$ itself.

\noindent \textbf{Size Analysis:}

Note that we $L(T^{k+1}_{v, \un})$ can be upper bounded by $L(T^k_{v, 0})L(T^k_{v, 1})$ and we have 

$$L(T^{k+1}) \le L(T) + \sum_{n \in \mathcal{I}(T)}L(T^{k+1}_{n, \un})$$
where $\mathcal{I}(T)$ is the set of internal nodes of $T$ and $T^{k+1}_{n, \un}$ is the $x_N = \un$ subtree added to $T$ in order to construct $T^{k+1}$

\begin{eqnarray*}
L(T_{k+1}) & \le & L(T) + \sum_{n \in \mathcal{I}(T)}L(T^{k}_{n, 0})L(T^{k}_{n, 1}) \\
& \le & L(T) + \sum_{n \in \mathcal{I}(T)}L(T^{k}_{n, 0}) L(T^{k}_{n, 1}) \\
& \le & L(T) + \sum_{n \in \mathcal{I}(T)}{(L(T^k) - 1)}^2 \\
& \le & L(T) + \sum_{n \in \mathcal{I}(T)}({L(T^k)}^2 - 1) \\
& \le & L(T){L(T^k)}^2
\end{eqnarray*}

On recursively expanding $L(T^k)^2$ we have $L(T^{k+1}) \le L(T)^{2^{k+2} - 1}$

\noindent \textbf{Depth Analysis:} Let the leaf with the maximum depth in $T^{k+1}$ be $\ell$ with $\rho(\ell)$ its set of tuples of variables and values that they are set to in the path from the root to the leaf. Suppose $\rho(\ell)$ does not have any $\un$ that its variables are set to we are done, since then $d(T^{k+1}) = d(T)$. Else suppose $v$ is the first node whose variable is set to $\un$ in the path from root to $\ell$. We know that 
\begin{eqnarray*}
d(T_{k+1}) & \le & d(v) + d(T^{k+1}_{v, \un}) \\ & \le & d(v) + d(T^{k}_{v, 0}) + d(T^{k}_{v, 1}) \\
& \le & 2 d(T^{k})
\end{eqnarray*}

Recursively expanding gives $d(T^{k+1}) \le 2^{k+1}L(T^k)$. This completes the proof.
\end{proof}


\subsection{Lower Bound for Size in terms of Prime Implicants}

\begin{theorem}
    \label{thm:cm}
    Let $f$ be any function. Then, $\size_u(f) \geq m+M$ where $m$ is the number of prime implicants of $f$ and $M$ is the number of prime implicates of $f$.
\end{theorem}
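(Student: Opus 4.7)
The plan is to show that the $m$ prime implicants of $f$ must reach $m$ distinct leaves of any hazard-free decision tree $T$ computing $\tilde f$, each of which is labeled $1$, and analogously that the $M$ prime implicates reach $M$ distinct leaves labeled $0$. Since the labels for the two groups are different, these leaves are disjoint, so $\size_\un(f)\ge m+M$.

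For the key step I would first recall that prime implicants correspond bijectively to the strings $p\in\zuon$ that are \emph{minimal} in the instability partial order among those with $\tilde f(p)=1$: a stable bit in such a $p$ is a literal of the term, and primality (minimality of literals) is exactly minimality in the $\leq$ ordering. The same holds for prime implicates with $\tilde f(p)=0$, using the convention that a positively-occurring variable is set to $0$ and a negatively-occurring one to $1$. Thus it suffices to prove the distinctness claim for these extremal inputs.

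Now suppose two distinct prime implicants $p_1\neq p_2$ both reach some leaf $\ell$ of $T$. The path from the root to $\ell$ is described by a partial assignment $\sigma$ on the queried coordinates $S$, and reaching $\ell$ means $p_1|_S=p_2|_S=\sigma$. Consider the coordinate-wise meet $p_1\wedge p_2$ in the instability order (taking $\un$ wherever the entries differ, and the common value otherwise). On $S$ the two agree with $\sigma$, so $(p_1\wedge p_2)|_S=\sigma$ and the meet also reaches $\ell$; the tree must therefore assign $\tilde f(p_1\wedge p_2)$ the label at $\ell$, which is $1$ since $\tilde f(p_1)=1$. Distinctness and minimality of the prime implicants forces $p_1$ and $p_2$ to be incomparable (if one were below the other, the larger would violate minimality), whence $p_1\wedge p_2<p_1$ strictly, and this contradicts $p_1$ being minimal with $\tilde f=1$. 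Hence the $m$ prime implicants reach $m$ distinct leaves, all labeled $1$. The analogous argument with the same meet operation, but applied to prime implicates and leaves labeled $0$, yields $M$ more distinct leaves, and adding the two bounds gives $\size_\un(f)\geq m+M$.

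The only real subtlety is the identification between prime implicants/implicates as syntactic objects and minimal elements of $\tilde f^{-1}(1)$ and $\tilde f^{-1}(0)$ in the instability order; once this translation is in hand, the meet-closure argument is routine, and I do not anticipate any other obstacle.
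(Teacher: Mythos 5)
Your proof is correct, and it shares the overall skeleton of the paper's argument: represent each prime implicant (resp.\ implicate) by its string in $\zuon$, note that these are precisely the minimal $1$-inputs (resp.\ $0$-inputs) of $\tilde f$ under the instability order, and show that the $m+M$ resulting inputs reach pairwise distinct leaves, which are labeled $1$ and $0$ respectively and hence disjoint. Where you genuinely diverge is the distinctness step. The paper proves a stronger structural claim: on the path followed by a prime implicant input $x_P$, \emph{every} stable coordinate of $x_P$ must be queried --- otherwise the ``path string'' (queried coordinates set as along the path, all others $\un$) would be a $1$-input strictly below $x_P$, i.e., an implicant properly subsuming $P$, contradicting primality; injectivity then follows because two implicant inputs at the same leaf would both equal the path string padded with $\un$'s. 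You instead obtain injectivity directly via meet-closure: $p_1 \wedge p_2$ agrees with the path at every queried coordinate, hence reaches the same leaf, so $\tilde f(p_1\wedge p_2)=1$ while $p_1\wedge p_2 < p_1$ strictly (your incomparability observation is right: if $p_1$ and $p_2$ were comparable, the larger would fail minimality, and $p_1\wedge p_2=p_1$ would force comparability), contradicting minimality of $p_1$. The two arguments are the same observation at different granularity --- the paper's path string is the meet of \emph{all} inputs consistent with the leaf, and your $p_1\wedge p_2$ sits above it --- so your version is slightly more economical for the counting bound alone, while the paper's claim additionally records that all literals of a prime implicant are queried on its path, a query-forcing fact of the same flavor as the one driving the $\fmux$ size lower bound. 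The translation you flag as the only subtlety, between prime implicants/implicates and minimal stable-output inputs of $\tilde f$, is exactly the string representation the paper fixes after Definition~\ref{def:pi}, so nothing is missing.
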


\begin{proof}
    Note that a prime implicant $P$ exactly corresponds to the input $x_P$ where all the variables that occur positively in $P$ are set to 1, those that occur in a negated form in $P$ are set to 0 and the remaining are set to $\un$. Note that the resolutions of $x_P$ are exactly the inputs that are covered by the prime implicant. Let $\mathcal{X_P}$ be the inputs corresponding to the prime implicants. Since no two of these inputs are the same we know that $|X_P| = m$.

    The theorem now follows from the proof of this claim.
    \begin{claim}
        Suppose $T$ is a hazard-free decision tree representing the Boolean function $f$. Every $x_P \in \mathcal{X_P}$ reaches a leaf $L$ such that every index with a boolean bit in $x_P$ has been queried.
    \end{claim}

    \begin{proof}
    Suppose for the sake of contradiction, $x_P$ reach the leaf $L$, with $V(L)$ (the set of variables it has queried and the value that each has been set to) and a variable in $P$ is not queried. We know that the values of the indices in $V(L)$ have to match $x_P$ i.e ${x_Q}_i = b_i \forall (i, b_i) \in V(L)$. Consider the input $x_L$ such that ${x_L}_i = b_i \forall (i, b_i) \in V(L)$ and the remaining bits are set to $u$. Note that there exists an index $j$ where ${x_P}_j = b_j$ and ${x_L}_j = u$, and hence we know that on flipping certain $u$'s in $x_L$(at least) to Boolean bits we get $x_P$. Note that $x_L$ corresponds to an implicant $L'$ (exactly the boolean bits in $x_L$ are considered) which covers the implicant $P$ which is a contradiction as $P$ is a prime implicant.
    \end{proof}
\end{proof}

\section{Sensitivity Theorem for Hazard-free Functions}

The celebrated sensitivity theorem \cite{Hua19} in Boolean function complexity (see survey~\cite{HKP11})  proves that decision tree depth is polynomially equivalent to other Boolean function parameters such as sensitivity, block sensitivity, and certificate complexity. In this section, we prove an analogue of this theorem for hazard-free functions.

First, we define these parameters and establish that they are interesting in the hazard-free world as well.
For $B \subseteq [n]$, $x \in \{0,1,u\}^n$, define $x \oplus B = \{ y \in \{0,1,u\}^n \mid \forall i \in B, x_i \ne y_i \}$.

\begin{definition}[\textbf{Hazard-free Sensitivity}] \label{def:hfsens}
Let $f$ be the hazard-free extension of some $n$-input Boolean function. For an $x\in \{0, \un, 1\}^n$, we define the sensitivity of $f$ at $x$ as:
\begin{equation*}
    \sen_\un(f, x) = |\{i \mid \text{there is a $y$ such that $f(y) \neq f(x)$ and $y$ differs from $x$ at, and only at, position $i$}\}|
\end{equation*}
The elements $i$ of the set are called the \emph{sensitive bits of $x$ for $f$}. The \emph{hazard-free sensitivity of $f$}, denoted $\sen_\un(f)$ is the maximum sensitivity of $f$ over all $x$.
\end{definition}

As in the Boolean world, it is easy to see that hazard-free sensitivity trivially \emph{lower bounds} hazard-free decision tree depth. Take an $x$ at which sensitivity is achieved and look at the path followed by $x$ on any hazard-free decision tree. This path must read all the sensitive bits. If not, we can change the value of the bit not read and the output of the decision tree will remain the same. In the Boolean world, for a long time, it was open whether sensitivity (or a polynomial factor of it) \emph{upper bounds} decision tree depth. Nisan \cite{Nis89} introduced block sensitivity, a stronger variant of sensitivity, which was polynomially equivalent to decision tree depth. We now define block sensitivity for hazard-free functions.

\begin{definition}[\textbf{Hazard-free Block Sensitivity}]
Let $f$ be the hazard-free extension of a Boolean function. For $x \in \zuon$, the hazard-free block sensitivity of $f$ at $x$ is defined as
maximum $k$ such that there are disjoint subsets $B_1, B_2, \ldots B_k \subseteq [n]$ and for each $i \in[k]$, there is a $y$ such that $f(y) \neq f(x)$ and $y$ differs from $x$ at exactly the positions in $B_i$. The hazard-free block sensitivity of $f$, $\bsen_\un(f)$ is then defined as the maximum block sensitivity of $f$ taken over all $x$.
\end{definition}

Clearly, we have $\sen_\un(f) \leq \bsen_\un(f)$ because we can consider individual positions to be blocks of size 1. We can also see that $\bsen_\un(f) \leq \cc_\un(f)$ because the prover must exchange at least one bit from each sensitive block. If not, we can change all the value in the omitted block to change the output and therefore the verifier will not be convinced. We now prove that sensitivity, block sensitivity, certificate complexity are all equivalent up to a factor of two.

\sbscc*
\begin{proof}
    To prove the first inequality, let $k$ be the number of literals in a prime implicant. We claim that all those bits are sensitive. Consider a bit that has, say, value $0$ in the prime implicant. Then, if flipping it to a $1$ does not change the value, then we can make that bit $\un$ and obtain an implicant that subsumes this prime implicant, which is impossible. The case when a bit is $1$ is similar. Finally, a similar argument can be made for prime implicates as well.

    To prove the last inequality, observe that if the output of the function is a $0$, then the prover can pick a clause in the $k_2$-CNF that evaluates to $0$ and reveal those values to the verifier. If the output is $1$, then the prover reveals the term in the $k_1$-DNF that evaluates to $1$. If the output is $\un$, there must be a term \emph{and} a clause that evaluate to $\un$. Any prime implicant and prime implicate must have at least one common variable. Therefore, the prover can reveal those $k_1+k_2-1$ values. Why should this convince the verifier? Observe that the output cannot be $0$ as there is a term in the DNF that is $\un$, and it cannot be $1$ as there is a clause in the CNF that is $\un$. So the verifier can conclude that the output must be $\un$.
\end{proof}

\begin{remark}
    Notice that Theorem~\ref{thm:sbscc} shows that all three parameters are constant-factor equivalent to the largest prime implicant/prime implicates. In the Boolean world, for certificate complexity we get a tighter characterization in terms of CNF/DNF width, which is analogous to prime implicant/prime implicant size here. On the other hand, in the Boolean world, there is a quadratic separation between sensitivity and block sensitivity~\cite{Rubinstein}.
\end{remark}
Before proving polynomial equivalence of these parameters to hazard-free decision tree depth, we establish that these parameters are distinct from hazard-free decision tree depth. Consider the function $\fmux_n$. We have already established that its hazard-free decision tree depth is $2^n+n$. We will prove that its sensitivity, block sensitivity, and certificate complexity are all only $2^n$.
\begin{proposition}
    For $\fmux_n$ where $n\geq 2$, we have $\sen_\un = \bsen_\un = \cc_\un = 2^n$.
\end{proposition}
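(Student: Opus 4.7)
My plan is to use the chain $\sen_\un(f) \le \bsen_\un(f) \le \cc_\un(f)$ and establish the matching bounds $\sen_\un(\fmux_n) \ge 2^n$ and $\cc_\un(\fmux_n) \le 2^n$. Theorem~\ref{thm:sbscc} by itself is too loose here: the prime implicants and prime implicates of $\fmux_n$ each have width $n+1$, which only places these parameters in the interval $[n+1,\,2n+1]$, so a direct analysis is needed.

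For the lower bound I exhibit one hard input. Let $x$ be the input in which every selector is $\un$ and every data bit is $0$. Every resolution selects a $0$-valued data bit, so $\fmux_n(x)=0$. Flipping any single selector from $\un$ to a stable value leaves all data bits at $0$ and so keeps the output $0$, making selectors insensitive at $x$. Flipping any single data bit from $0$ to either $1$ or $\un$ introduces a reachable bit whose resolutions contain $1$; by resolving the selectors either to hit or to avoid that index, one obtains both $0$ and $1$ as Boolean outputs of $\fmux_n$, so the hazard-free output becomes $\un$. Hence all $2^n$ data bits are sensitive at $x$.

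For the upper bound I show each input $x$ admits a certificate of size at most $2^n$. Let $k$ be the number of $\un$ selectors in $x$ and call a data-bit index \emph{reachable} if it agrees with $x$'s stable selectors; there are exactly $2^k$ reachable indices. If $f(x) = b \in \{0,1\}$, then all reachable data bits already equal $b$; the prover reveals the $n-k$ stable selectors together with the $2^k$ reachable data bits, using $n-k+2^k$ values, and a short check shows this quantity is non-decreasing in $k$ for $k \ge 1$ and is maximized on $\{0,\ldots,n\}$ at $k=n$ with value $2^n$ whenever $n\ge 2$. If $f(x) = \un$, then either some reachable data bit equals $\un$ (reveal all $n$ selectors plus that one data bit, cost $n+1$) or all reachable data bits are stable but not unanimous (reveal all $n$ selectors plus a reachable $0$-bit and a reachable $1$-bit, cost $n+2$); both are at most $2^n$ when $n\ge 2$.

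The delicate step is justifying that the stable-output certificate really forces the output. Any $y$ agreeing with $x$ on the revealed coordinates shares $x$'s stable selectors, while each $\un$ selector of $x$ either stays $\un$ or becomes stable in $y$; in either case the reachable set of $y$ is contained in that of $x$, so every data bit reachable in $y$ has been certified as $b$ and hence $f(y)=b$. For the $\un$-output certificates, revealing all $n$ selectors fixes the reachable set to be identical in $x$ and $y$, and the revealed data bit(s) explicitly witness two resolutions of $y$ that yield distinct Boolean outputs, forcing $f(y)=\un$.
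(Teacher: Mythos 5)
Your proof is correct and takes essentially the same route as the paper: the identical certificate scheme (for a stable output, reveal the $n-k$ stable selectors plus the $2^k$ reachable data bits, maximizing $n-k+2^k$ at $2^n$ for $n\geq 2$; for output $\un$, reveal all selectors plus one unstable or two conflicting reachable data bits) together with a symmetric sensitivity witness (all selectors $\un$, all data bits constant), with your explicit soundness verification of the certificates being a welcome extra detail the paper leaves implicit. One factual slip in your preamble, though it does not affect the proof: $\fmux_n$ has prime implicants and implicates of width up to $2^n$ (e.g., the conjunction of all $2^n$ data bits is a prime implicant), not $n+1$, so Theorem~\ref{thm:sbscc} actually \emph{does} deliver the lower bound $\sen_\un(\fmux_n)\geq 2^n$; it is only the upper bound $\cc_\un(\fmux_n)\leq 2^n$ that genuinely needs your direct certificate analysis.
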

\begin{proof}
    First, we show $\cc_\un(\fmux_n) \leq 2^n$. For any input with a stable output, the prover reveals all the stable bits in the selector bits and all the data bits indexed by the selector bits. The number of bits revealed is $\max_{k\in[0,n]}(n-k+2^k)$. If the output is unstable, the prover reveals all selector bits and two data bits with distinct stable values indexed by those selector bits or one data bit that is unstable. The number of bits revealed in this case is at most $n+2$.

    We now argue that $\sen_\un(\fmux_n) \geq 2^n$. Consider the input where all selector bits are $\un$ and all data bits are $1$. All $2^n$ data bits are sensitive for this input as changing any of them to $0$ will change the output from $1$ to $\un$.
\end{proof}

We now prove that decision tree depth is polynomially bounded by certificate complexity thereby proving the sensitivity theorem for hazard-free functions. To make this proof work, we need the following definition:
\begin{definition}
    A function $f : \zuo^n \mapsto  \zuo$ is \emph{weakly hazard-free} if:
    \begin{enumerate}
        \item $f$ is monotone wrt the instability partial order.
        \item For all $x\in\zuo^n$ such that $f(y)$ is the same value $b$ for all $y > x$, we have $f(x) = f(y)$ as well.
    \end{enumerate}
\end{definition}
All hazard-free functions are weakly hazard-free. However, a weakly hazard-free function is not necessarily hazard-free or even natural. For example, the one input function $f(0) = 0$, $f(1) = \un$, and $f(\un) = \un$ is not natural but weakly hazard-free. There are also natural functions that are not weakly hazard-free. For example, $f(0) = 0$, $f(1) = 0$, and $f(\un) = \un$. We can now consider naturally extended definitions of ternary decision trees and certificate complexity for weakly hazard-free functions. In the following proof, we work with these definitions.

\dtcc*
\begin{proof}
We prove the second inequality. We make use of the following variations of $\hfcc$: For $b\in\zuo$, we define $\hfcc^{(b)}$ as the maximum certificate size required for proving an output of $b$. In particular, we define $\hfcc^{(\un)}$ formally for weakly hazard-free functions as follows:

$$
\hfcc^{(\un)}({f}) = 
\begin{cases}
0 & \textrm{ if $\nexists x, {f}(x) = u$ } \\
\max_{x | {f}(x) = u} \hfcc({f}, x) & \textrm{otherwise } \\
\end{cases}
$$

We prove the following claim that implies the theorem.
\begin{claim}
\label{claim:hfcc_vs_hfdt}
For any weakly hazard-free function $f$, we have:
\begin{equation*}
    \hfdt(f) \le \hfcc^{(u)}(f) (\hfcc^{(0)}(f) + 1) (\hfcc^{(1)}(f) + 1)
\end{equation*}
\end{claim}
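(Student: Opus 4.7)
The plan is to generalise the classical Boolean bound $\depth(f) \le \cc^{(0)}(f) \cdot \cc^{(1)}(f)$ to the ternary setting. In the classical proof one queries every bit of a minimum certificate for some chosen input and uses the fact that every certificate of the \emph{opposite} value must intersect the queried set, so the restricted function has strictly smaller certificate complexity. The same idea drives the argument here, with two differences: queries return one of three values, so the recursion branches in more directions; and certificates of $\un$-inputs form a third type, which accounts for the extra factor $\hfcc^{(\un)}(f)$ in the bound. I will perform a single induction on $\hfcc^{(0)}(f) + \hfcc^{(1)}(f)$.

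Two degenerate cases handle the base. If $\hfcc^{(0)}(f) = \hfcc^{(1)}(f) = 0$, monotonicity of a weakly hazard-free function with no stable output forces $f$ to be the constant $\un$ function (since $\un$ is the unique minimum of the instability order), and both sides of the bound are $0$. Similarly, $\hfcc^{(\un)}(f) = 0$ forces $f$ to be a stable constant, again giving $0$ on both sides. In the non-degenerate case, pick a $\un$-input $x^\star$ with a minimum $\un$-certificate $S^\star$ of size $c_{\un} := \hfcc^{(\un)}(f) \ge 1$ and have the algorithm query every bit of $S^\star$ up front. If every response agrees with $x^\star|_{S^\star}$, output $\un$ (justified by $S^\star$ being a $\un$-certificate). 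Otherwise the observed response $\pi \ne x^\star|_{S^\star}$ is observed, and the algorithm recursively evaluates the restriction $f_\pi$ on $[n] \setminus S^\star$.

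The correctness and the depth bound rest on a three-part lemma about $f_\pi$ for $\pi \ne x^\star|_{S^\star}$: (i) $f_\pi$ is still weakly hazard-free; (ii) $\hfcc^{(\un)}(f_\pi) \le \hfcc^{(\un)}(f)$; and (iii) for each stable $b$, $\hfcc^{(b)}(f_\pi) \le \max\{0, \hfcc^{(b)}(f) - 1\}$. Parts (ii) and (iii) follow from the classical disjointness trick: a $b$-certificate and a $\un$-certificate of $f$ must share at least one index, for otherwise one could merge the two partial assignments into a single input whose $f$-value must be simultaneously $b$ and $\un$. Hence projecting any $b$-certificate to $[n] \setminus S^\star$ gives a valid $b$-certificate of $f_\pi$ that is at least one literal shorter, giving (iii); and projecting any $\un$-certificate yields a $\un$-certificate of $f_\pi$ of no larger size, giving (ii). Plugging into the inductive hypothesis yields $\hfdt(f_\pi) \le c_{\un} \cdot \hfcc^{(0)}(f) \cdot \hfcc^{(1)}(f)$, and hence
\[
\hfdt(f) \le c_{\un} + c_{\un}\,\hfcc^{(0)}(f)\,\hfcc^{(1)}(f) \le c_{\un} (\hfcc^{(0)}(f) + 1)(\hfcc^{(1)}(f) + 1),
\]
matching the claimed bound.

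The main obstacle I anticipate is part (i): the restriction coordinates $\pi$ may themselves contain $\un$'s, and closure of the weakly hazard-free class under such restrictions is not immediate from the definition. Monotonicity transfers at once; the subtle part is clause (2), because a strict resolution of $(w,\pi)$ in $\zuon$ may also resolve some $\un$-coordinates of $\pi$, whereas the hypothesis ``$f_\pi(v) = b$ for all $v > w$'' directly controls $f$ only at inputs whose $S^\star$-part equals $\pi$ exactly. I plan to handle this by a nested induction on the number of $\un$-coordinates of $\pi$: when $b$ is stable, I first apply clause~(2) of $f$ at the fully stable resolutions of $\pi$ (the inner base case) and then propagate upward through the resolution lattice of $\pi$ using monotonicity together with clause~(2) at each stage; when $b = \un$, the conclusion is forced directly, since any stable value of $f$ at $(w,\pi)$ would propagate upward by monotonicity and contradict the $\un$-value of $f$ at some $(v,\pi)$ with $v > w$. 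Once closure of the weakly hazard-free class under restriction is established, the rest of the argument is a routine three-valued adaptation of the classical Boolean proof.
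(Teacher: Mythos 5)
Your proposal is correct and takes essentially the same route as the paper's proof: query a minimum certificate of a $\un$-input, recurse on the restrictions, and drive the recursion with the key lemma that each restriction stays weakly hazard-free with $\hfcc^{(\un)}$ non-increasing and each stable $\hfcc^{(b)}$ decremented because every stable-value certificate must intersect the queried $\un$-certificate (the paper uses the all-$\un$ input $\mathbf{u}$, whose certificate reveals only $\un$'s and makes this intersection immediate, whereas you use an arbitrary worst-case $\un$-input together with the merging argument, and you induct on $\hfcc^{(0)}(f)+\hfcc^{(1)}(f)$ rather than on the number of variables — inessential variations). Your explicit nested induction on the number of $\un$-coordinates of $\pi$ for closure under restriction is a careful write-up of exactly the subtlety that the paper's Claim~\ref{claim:fa} treats tersely, and your concluding arithmetic matches the paper's case analysis, modulo the easily patched edge case where $\hfcc^{(0)}(f)=0$ or $\hfcc^{(1)}(f)=0$: there the intermediate bound $\hfdt(f_\pi)\le \hfcc^{(\un)}(f)\,\hfcc^{(0)}(f)\,\hfcc^{(1)}(f)$ overshoots, but the final inequality still goes through just as in the paper's Cases~1 and~2.
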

\begin{proof}
The proof is by induction on the number of variables.
For $f$ on a single variable we split into two cases. When $f(x)$ is a constant, $\hfdt(f) = 0$ and the inequality is trivially true. If $f(x)$ is not a constant. Here $\hfdt(f) = 1$ and $\hfcc^{(\un)}(f) = 1$ since there exists at least one pair of inputs $(x, y)$ such that $f(x) = \un$ and $f(y) \ne \un$.

For the induction step, let $f$ be a weakly hazard-free function on $k+1$ variables. If $f$ is a constant, again the inequality is true similar to base case.
Otherwise, we have $f(\mathbf{u}) = u$ and a minimum certificate for this input which we call $Y$. Let $k=|Y|$.

We will now construct a ternary decision tree that computes $f$. We first construct a full ternary decision tree on the variables in $Y$ and replace the leaf $L_a$ for $a \in \{0, 1, u\}^k$ by a ternary decision tree $T_a$ representing $f_a = f(a_1, a_2, ..., a_k, x_{k+1}, .., x_n)$ given by the inductive hypothesis (We assume that $Y$ contains the first $k$ variables wlog). For this, we must prove that $f_a$ is weakly hazard-free. The following claim proves this and other properties that will finally give us the required bound.

\begin{claim}\label{claim:fa}
    For all $a\in \zuo^k$, the function $f_a$ satisfies:
    \begin{enumerate}
        \item It is weakly hazard-free.
        \item $\cc_\un^{(\un)}(f_a) \leq \cc_\un^{(\un)}(f)$.
        \item $\cc_\un^{(0)}(f_a)$ is either 0 or at most $\cc_\un^{(0)}(f)-1$.
        \item $\cc_\un^{(1)}(f_a)$ is either $0$ or at most $\cc_\un^{(1)}(f)-1$.
    \end{enumerate}
\end{claim}
\begin{proof}
    \begin{enumerate}
        \item Since $f$ is monotone and $f_a$ is just a restriction, it is also monotone. Suppose $x$ is such that there is a $b$ and $f_a(y) = b$ for all proper resolutions $y$  of $x$. Then, we have $f(a, y) = b$ for all $y$. Since    $f$ is weakly hazard-free, this implies $f(a, x) = f_a(x) = b$. This proves that $f_a$ is weakly hazard-free.

        \item We argue that for each $x$ that is a $\un$-input, $\cc_\un^{(\un)}(f_a, x) \leq \cc_\un^{(\un)}(f, ax)$. Let $C$ be a smallest certificate for $x$ for $f_a$. Let $C'$ be a smallest certificate for $ax$ for $f$. Notice that $C'$ after removing the variables in $x_1,\dotsc,x_k$ is a certificate for $x$ for $f_a$. So $|C| \leq |C'|$.

        \item If $f_a$ has no $0$-input, then $\cc_\un^{(0)}(f_a) = 0$ by definition. So consider a $0$-input $x$ for $f_a$. Thus, $ax$ is a $0$-input for $f$. Consider a smallest certificate $C$ for $ax$ for $f$. We argue that $C$ must include at least one of the values $x_1, \dotsc, x_k$. Suppose not. Then, since the prover has not revealed any of those bits, it could be that all of those bits are $\un$. But in this case the output must be a $\un$, not a $0$, as $Y$, which is just $x_1 = \dotsm x_k = \un$ is a $\un$-certificate for $f$. As before, $C$ with variables in $x_1,\dotsc,x_k$ removed is a $0$-certificate for $x$ for $f_a$. So the $0$-certificates for $f_a$ are at least one less than these smallest $0$-certificates for $f$ proving this part of the claim.

        \item This is similar to the previous case.
    \end{enumerate}
\end{proof}

Excluding constant $f$, we have the following remaining cases for $f$:

\begin{description}
\item{\textbf{Case 1} $\hfcc^{(0)}(f) = 0, \hfcc^{(1)}(f) \ge 1$: } We know by Claim ~\ref{claim:fa} that $f_a$ is weakly hazard-free and that $\hfcc^{(0)}(f_a) = 0$ and $\hfcc^{(1)}(f_a) \le \hfcc^{(1)}(f)-1$. We can now replace $L_a$ by a tree $T_a$ obtained from the inductive hypothesis. So our final tree has depth at most $|Y| + \max_a \depth(T_a)$. Applying the inequality in the inductive hypothesis, we get:
\begin{eqnarray*}
\hfdt(f) & \le & 
\hfcc^{(u)}(f) + \max_a \hfcc^{(u)}(f_a) (\hfcc^{(0)}(f_a) + 1) (\hfcc^{(1)}(f_a) + 1) \\
& \le & \hfcc^{(u)}(f) + \hfcc^{(u)}(f) (1) (\hfcc^{(1)}(f)) \\ 
& \le & \hfcc^{(u)}(f)(1 + \hfcc^{(1)}(f))(1)  \\
& \le & \hfcc^{(u)}(f) (\hfcc^{(0)}(f) + 1) (\hfcc^{(1)}(f) + 1)
\end{eqnarray*}
\item{\textbf{Case 2} $\hfcc^{(1)}(f) = 0, \hfcc^{(0)}(f) \ge 1$:} This case is similar to the previous one.
\item{\textbf{Case 3}  $\hfcc^{(0)}(f) \ge 1, \hfcc^{(1)}(f) \ge 1$:} We know by Claim ~\ref{claim:fa} that $f_a$ is a natural function and that 
$\hfcc^{(0)}(f_a) \le  \hfcc^{(0)}(f)-1$ and $\hfcc^{(1)}(f_a) \le \hfcc^{(1)}(f)-1$. Replacing $L_a$ with $T_a$ gives us a tree of depth at most $|Y| + \max_a \depth(T_a)$. By applying the inequality in the inductive hypothesis, we get:
\begin{eqnarray*}
\hfdt(f) & \le & \hfcc^{(u)}(f) + \max_a \hfcc^{(u)}(f_a) (\hfcc^{(0)}(f_a) + 1) (\hfcc^{(1)}(f_a) + 1) \\
& \le & \hfcc^{(u)}(f) + \hfcc^{(u)}(f) (\hfcc^{(0)}(f)) (\hfcc^{(1)}(f))\\
& \le & \hfcc^{(u)}(f)(1 + \hfcc^{(1)}(f) \hfcc^{(0)}(f)) \\
& \le & \hfcc^{(u)}(f) (\hfcc^{(0)}(f) + 1) (\hfcc^{(1)}(f) + 1)
\end{eqnarray*}
\end{description}
\end{proof}
\end{proof}

\section{Improved Bounds for Sensitivity}

In this section, we explore sensitivity and prove the following characterization for sensitivity of hazard-free functions. We use $\sen_\un^{(b)}$ where $b\in\zuo$ to denote sensitivity for inputs that yield a $b$ output.
\begin{theorem}\label{thm:senstr}
    Let $f$ be the hazard-free extension of some Boolean function. $\sen_\un^{(1)}(f, x)$ is maximized when $x$ is a prime implicant, $\sen_\un^{(0)}(f, x)$ is maximized when $x$ is a prime implicate, and $\sen_\un^{(\un)}(f, x)$ is maximized when $x$ contains exactly one $\un$.
\end{theorem}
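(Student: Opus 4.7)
The plan is to prove each of the three extremal statements via a one-step monotonicity of $\sen_\un$ along the instability order, and then iterate to the claimed canonical shape. For the stable parts ($b \in \{0,1\}$) the movement is \emph{downward} (more $\un$'s), while for $b = \un$ it is \emph{upward} (more stable bits). The key stable lemma I would prove is: if $y \leq x$ and $f(y) = f(x) = b \in \{0,1\}$, then $\sen_\un(f, x) \leq \sen_\un(f, y)$. Given a sensitive coordinate $i$ of $x$ witnessed by $z$ (with $z$ differing from $x$ only at $i$ and $f(z) \neq b$), the natural witness at $y$ is $z' := y[i \to z[i]]$. In most subcases $z' \leq z$, so monotonicity gives $f(z') \leq f(z)$, forcing $f(z') \neq b$ by maximality of $b$ in the instability order. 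Once the lemma is in place, the $b = 1$ statement follows because any $1$-input $x$ is a resolution of some prime implicant $x_P$ (the cube of Boolean resolutions of $x$ lies inside $f^{-1}(1)$, hence inside some prime implicant's cube); applying the lemma to $x_P \leq x$ yields $\sen_\un(f, x) \leq \sen_\un(f, x_P)$. The $b = 0$ case is dual via Boolean complementation, which preserves the hazard-free structure, swaps implicants with implicates, and preserves sensitivity.

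For the unstable case, I would show that any $\un$-input $x$ with two or more unstable coordinates admits a strict resolution to a smaller $\un$-input with at least as much sensitivity. First, there exist $j$ with $x[j] = \un$ and $b \in \{0,1\}$ such that $f(x[j \to b]) = \un$: if not, then for every unstable $j$ both $f(x[j \to 0])$ and $f(x[j \to 1])$ are stable and must take opposite values (otherwise $f(x)$ would be stable); choosing any two unstable coordinates $j_1, j_2$ and computing the doubly-resolved values via monotonicity from each single resolution yields two incompatible stable values at the same input, a contradiction. Setting $x' := x[j \to b]$, any sensitivity witness $z$ for a coordinate $i$ at $x$ lifts to $z' := z[j \to b] \geq z$ at $x'$ with $f(z') = f(z) \neq \un$; the edge case $i = j$ forces $z[j] = \bnot{b}$, so $z$ itself serves as the witness at $x'$. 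Iterating drives the number of $\un$'s in $x$ down to $1$.

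The principal obstacle is a single subcase of the stable lemma: $x[i]$ is stable, $y[i] = \un$, and the witness $z$ at $x$ has $z[i] = \un$; then $z' = y[i \to z[i]] = y$ fails to differ from $y$ at $i$. I would resolve this by taking $z' := y[i \to \bnot{x[i]}]$ instead and arguing via hazard-free resolutions. Since $f(z) = \un$ with $z$ agreeing with $x$ outside $i$ and $z[i] = \un$, there is a Boolean resolution $r \geq z$ of value $\bnot{b}$. Such an $r$ must satisfy $r[i] = \bnot{x[i]}$, because any $r$ with $r[i] = x[i]$ would also resolve $x$ and $f(x) = b$ would force $f(r) = b$. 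Hence $r$ is a Boolean resolution of $z'$, giving $f(z') \neq b$ as required.
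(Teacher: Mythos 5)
Your proof is correct, but it is organized around different pivots than the paper's. For the stable cases the paper compares a non-prime implicant $x$ \emph{only} against a prime implicant $y \le x$, and it leans on the fact (established inside the proof of Theorem~\ref{thm:sbscc}) that at a prime implicant exactly the stable bits are sensitive: it shows that bits stable in $x$ but unstable in $y$ cannot be sensitive for $x$, so the sensitive set of $x$ embeds into that of $y$. Your key lemma is strictly more general --- monotonicity of $\sen_\un(f,\cdot)$ along the instability order within a fixed stable fiber --- and it is self-contained: you transfer each witness $z$ to $z' = y[i \to z_i]$ (note $z' \le z$ holds in \emph{all} subcases; the degenerate one fails only because $z'=y$), and your patch for that degenerate subcase ($x_i$ stable, $y_i = \un$, $z_i = \un$) is sound: there $f(z)=\un$ by naturality, so some Boolean resolution $r \ge z$ has value $\bnot{b}$; if $r_i = x_i$ then $r$ would resolve $x$ and have value $b$, so $r_i = \bnot{x_i}$ and $r$ resolves $y[i\to\bnot{x_i}]$, giving the witness. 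For the $\un$ case the paper jumps in one step to a bichromatic edge $y > x$ with exactly one $\un$ (existence via the subcube of resolutions of $x$ containing both a $0$- and a $1$-vertex) and transfers all sensitive bits to it in a single argument; you instead peel one $\un$ at a time, and your existence step --- if every single resolution of an unstable bit were stable, then two unstable coordinates would force contradictory stable values at a doubly resolved point via naturality --- is a genuinely different and correct replacement for the edge-existence fact, with the $i=j$ edge case handled exactly right ($z_j=\bnot{b}$, so $z$ itself witnesses sensitivity at $x'$). What each approach buys: the paper's route is shorter because it reuses the prime-implicant characterization and never meets your degenerate subcase; yours yields the stronger structural statement that $\sen_\un$ is monotone along $\le$ on each stable fiber (which immediately re-proves the paper's containment of sensitive sets) together with a purely local, iterable step for the $\un$ case that avoids any appeal to connectivity of the hypercube.
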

\begin{proof}
    Consider $x$ that is an implicant but not a prime implicant. Then, there is a prime implicant $y < x$. We have already seen that the stable bits in $y$ are sensitive. So we just have to argue that if bit $i$ is stable in $x$ and unstable in $y$, then $i$ is not sensitive for $x$. Suppose $x_i=0$ and $y_i=\un$. Changing $x_i=\un$ cannot change the output as $y$ has the same output as $x$. Changing $x_i=1$ cannot change the output as then $y$ would not be a prime implicant.

    A similar argument as above holds for prime implicates.

    For the last part, consider an $x$ that has more than one $\un$ such that $f(x)=\un$. Then, there is a $y > x$ that has exactly one $\un$ such that $f(y)=\un$ ($y$ is an edge in the hypercube). Let the strings $w$ and $v$ in $\zon$ be the two proper resolutions of $y$ and $j$ be the unique position where $w$ and $v$ differ. We argue that the sensitive bits for $x$ are also sensitive bits for $y$. Bit $j$ is sensitive for $y$. So consider an $i\neq j$ such that bit $i$ is sensitive for $x$. Note that $y_i$ is a stable value. If $x_i=\un$, then it must be that setting $x_i=y_i$ or $x_i=\bnot{y_i}$ changes the output to a stable value. Setting $x_i=y_i$ cannot yield a stable output as the resulting subcube will still contain the edge $y$. So setting $x_i=\bnot{y_i}$ yields a stable output. This implies that changing bit $i$ of $y$ to $\bnot{y_i}$ will yield a stable output, proving that bit $i$ is stable for $y$ as well. Now, we consider the cases where $x_i$ is stable. It must be that $x_i=y_i$ ($x_i$ cannot be $\bnot{y_i}$ as $y>x$) and say $i$ is sensitive for $x$. Since $f$ is natural, it must be that setting $x_i$ to $\bnot{y_i}$ yields a stable output. But then, changing bit $i$ in $y$ to $\bnot{y_i}$ will also yield a stable output proving that bit $i$ is sensitive for $y$ as well.
\end{proof}

The following theorem bounds the hazard-free sensitivity for $\un$-inputs using Boolean sensitivity, denoted $\sen(f)$ for $f$.

\begin{theorem}
    For any Boolean function $f$, we have $\sen_\un^{(\un)}(\widetilde{f}) \leq 2\sen(f)-1$.
\end{theorem}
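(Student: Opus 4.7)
By Theorem~\ref{thm:senstr}, $\sen_\un^{(\un)}(\widetilde{f})$ is attained on an input $x$ with exactly one $\un$. Let $j$ be that position and let $w, v \in \zon$ denote the two resolutions of $x$, which differ only at coordinate $j$. Since $\widetilde{f}(x)=\un$ we have $f(w)\ne f(v)$, so bit $j$ is Boolean-sensitive at both $w$ and $v$. The plan is to charge every hazard-free sensitive bit of $x$ (other than $j$) to a Boolean-sensitive bit of either $w$ or $v$, and then use that $j$ already occupies one of the sensitive slots of each of $\sen(f,w)$ and $\sen(f,v)$.

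First I would observe that for $i\ne j$, the coordinate $x_i$ is stable, so the only non-trivial flips are $x_i\mapsto\bnot{x_i}$ or $x_i\mapsto\un$. The second option cannot produce sensitivity: if $x'$ is obtained from $x$ by replacing a stable $x_i$ with $\un$, then $x'\le x$ in the instability order, so by naturality $\widetilde{f}(x')\le \widetilde{f}(x)=\un$, forcing $\widetilde{f}(x')=\un$. Hence a sensitive bit $i\ne j$ of $x$ must satisfy $\widetilde{f}(x^{\oplus i})\in\zo$, where $x^{\oplus i}$ denotes $x$ with bit $i$ flipped. Writing $w^{\oplus i}, v^{\oplus i}$ for the corresponding flips of the resolutions, this is equivalent to $f(w^{\oplus i})=f(v^{\oplus i})$.

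Next I would classify these sensitive $i\ne j$. Since $f(w)\ne f(v)$, the common value $f(w^{\oplus i})=f(v^{\oplus i})$ equals exactly one of $f(w)$ or $f(v)$. If it equals $f(w)$, then $f(v^{\oplus i})\ne f(v)$, so $i$ is Boolean-sensitive at $v$ (and not at $w$); if it equals $f(v)$, then $i$ is Boolean-sensitive at $w$ (and not at $v$). Thus every sensitive $i\ne j$ of $x$ for $\widetilde{f}$ is Boolean-sensitive at exactly one of $w$ or $v$.

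Finally I would combine the counts. The number of sensitive $i\ne j$ that are Boolean-sensitive at $w$ is at most $\sen(f,w)-1\le \sen(f)-1$, since $j$ is already a sensitive bit of $w$ and is excluded. Symmetrically, at most $\sen(f)-1$ sensitive $i\ne j$ come from $v$. Adding the contribution of $j$ itself yields
\begin{equation*}
\sen_\un^{(\un)}(\widetilde{f}, x)\;\le\;1+(\sen(f)-1)+(\sen(f)-1)\;=\;2\sen(f)-1,
\end{equation*}
as required. The main subtlety is the classification step: one must verify that the two cases $f(w^{\oplus i})=f(w)$ and $f(w^{\oplus i})=f(v)$ really are exhaustive and disjoint, which is guaranteed by $f(w)\ne f(v)$, and that the $-1$ adjustment in each Boolean-sensitivity count is legitimate because $j$ is itself sensitive at both $w$ and $v$.
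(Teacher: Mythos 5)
Your proof is correct and follows essentially the same route as the paper: both reduce via Theorem~\ref{thm:senstr} to an input with exactly one $\un$, show that every hazard-free sensitive bit of $x$ is Boolean-sensitive at one of the two resolutions $w,v$, and save the $-1$ because the coordinate where $w$ and $v$ differ is sensitive at both. If anything, your write-up is slightly more explicit than the paper's --- you justify via naturality that flipping a stable bit to $\un$ cannot create sensitivity, and you observe that each sensitive $i\ne j$ charges to \emph{exactly} one of $w,v$, where the paper only needs containment in the union.
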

\begin{proof}
    From Theorem~\ref{thm:senstr}, we know that $\sen_\un^{(\un)}$ is maximized at an input $x\in\zuon$ that has exactly one $\un$. Let $v$ and $w$ be the two proper resolutions of $x$ such that $f(v) = 0$ and $f(w) = 1$ and they differ only in bit $i$. We will prove that all sensitive bits of $x$ for $\widetilde{f}$ are sensitive for either $v$ or $w$ for $f$. Clearly, bit $i$ is sensitive for $v$ and $w$ for $f$. Consider a $j\neq i$ that is sensitive for $x$ for $\widetilde{f}$. Notice $x_j$ must be stable and $x_j=v_j=w_j$. Say, $x_j=0$. Since the output on $x$ of $\widetilde{f}$ is already $\un$, the reason bit $j$ must be sensitive is because setting $x_j=1$ causes $\widetilde{f}$ to output a stable value, say $0$. This implies that if we set $w_j=1$ in $w$, then the output of $f$ changes from $1$ to $0$. So bit $j$ must be stable for $w$ for $f$. The rest of the cases are similar. The sensitive bits of $x$ for $\widetilde{f}$ is contained in the union of sensitive bits of $v$ and $w$ for $f$. The union has at least $i$ as the common bit. The theorem follows.
\end{proof}

\section{Sensitivity and Hazard-free Functions}

As an analogue of the analysis of structure of low-sensitivity functions in \cite{GNSTW16}, we establish similar properties for hazard free extensions.

\cite{GNSTW16} also constructs circuits of size $n^{O(s)}$ for sensitivity-$s$ functions. This was proved before sensitivity theorem was proved in the Boolean world. Note that since sensitivity conjecture is true in the hazard-free setting, functions having hazard-free sensitivity $s$ already has hazard-free decision trees of depth $s^3$ and this implies hazard-free circuits of size $2^{s^3}$ that computes the function $f$.

Nevertheless, these structural property derived in \cite{GNSTW16} could be of independent interest in the context of hazard-free setting as well.
In this section,
the Hamming distance between two inputs $x, y$ in the 3 valued logic is defined as the number of indices $i$ on which the inputs differ, in other words, $x_i \ne y_i$, $D(x, y) = |\{i \mid x_i \ne y_i \}|$. A Hamming sphere of radius $r$ centered at $x$ is defined as $S(x, r) = \{y \mid D(x, y) = r\}$. A Hamming ball of radius $r$ centered at $x$ is defined as $B(x, r) = \{y \mid D(x, y) \leq r\}$.

\senball*

\begin{proof}
For $f : \zuon \rightarrow \zuo$ which is a hazard free extension of a Boolean function we define {\em neighborhood} of $x$ as 
$N(x) = \{y \mid \exists i~ x_{i} \neq y_{i}, \textrm{ and } \forall j \in [n] \setminus \{i\}, x_{j} = y_{j} \}$. Similarly we define {\em i-neighborhood} $N_{i}(x) = \{y \mid  x_{i} \neq y_{i}, x_{j} = y_{j} \forall j \in \{1, 2, ..., n\} \setminus \{i\} \}$. Note that for any $x \in \zuon$ and $i \in [n]$, we have that $|N(x)| = 2n$ and $|N_{i}(x)| = 2$. 
We first prove the following Lemma.
\begin{lemma} 
\label{thm:neighbors}
For any function $f$ such that $\sen_u(f) = s$, if $S \subseteq N(x)$ where $|S| \ge 4s+1$ then $f(x) =\fplu_{y \in S}(f(y))$ where $\fplu$ outputs the most frequently occurring element in $\zuo$ in the input to it.
\end{lemma}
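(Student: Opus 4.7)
The plan is to directly bound the number of neighbors of $x$ whose $f$-value differs from $f(x)$ and then apply a short pigeonhole argument on $\zuo$. First I set $c := f(x)$. For each coordinate $i \in [n]$, the $i$-neighborhood $N_i(x)$ has exactly $|N_i(x)| = 2$ elements, namely the two values in $\zuo \setminus \{x_i\}$. If $i$ is not a sensitive coordinate of $x$, then by the definition of hazard-free sensitivity every $y \in N_i(x)$ satisfies $f(y) = c$, so $N_i(x)$ contributes $0$ elements to the ``bad'' count; if $i$ is sensitive, then $N_i(x)$ contributes at most $|N_i(x)| = 2$ bad neighbors. Summing over the at most $s$ sensitive coordinates of $x$, the total number of $y \in N(x)$ with $f(y) \ne c$ is at most $2s$.

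Next I would invoke the size hypothesis $|S| \ge 4s+1$: among any such $S \subseteq N(x)$, at least $(4s+1) - 2s = 2s+1$ elements satisfy $f(y) = c$, while the remaining at most $2s$ elements take values in the two-element set $\zuo \setminus \{c\}$. By pigeonhole each of the two alternative values occurs at most $2s$ times in the multiset $\{f(y) : y \in S\}$, which is strictly less than $2s+1$. Hence $c$ is the strict plurality, giving $\fplu_{y \in S}(f(y)) = c = f(x)$ as required.

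The only real subtlety --- and the reason the threshold is $4s+1$ rather than $2s+1$ --- is the two factors of $2$ appearing: one from $|N_i(x)| = 2$ (each sensitive coordinate can contribute up to two differing neighbors, not just one as in the Boolean case), and a second so that the $c$-votes strictly exceed each of the two alternative tallies after the at most $2s$ bad neighbors are split between the two alternatives in $\zuo \setminus \{c\}$. Notably, the argument uses only the sensitivity hypothesis and the counting structure of $N(x)$; no naturality or monotonicity of $f$ is needed for this lemma. Downstream, I expect this lemma to drive the proof of Theorem~\ref{thm:senball} by induction on distance from the center: at any $x$ with $D(x, x_0) = d+1 \ge 4s+1$, exactly $d+1 \ge 4s+1$ of its neighbors lie at distance $d$ from $x_0$ (namely those obtained by changing one disagreeing coordinate of $x$ to match $x_0$), so the values already determined on $B(x_0, d)$ furnish a subset $S \subseteq N(x)$ of the required size, and Lemma~\ref{thm:neighbors} forces $f(x)$ to be the plurality of those values.
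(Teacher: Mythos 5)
Your proof is correct and takes essentially the same route as the paper: both arguments bound the number of neighbors with value differing from $f(x)$ by $2s$ (at most two per sensitive coordinate, at most $s$ sensitive coordinates) and conclude that $f(x)$ receives at least $2s+1$ votes in $S$, strictly more than the at most $2s$ votes available to either alternative value. Your write-up is in fact a streamlined version, since it dispenses with the paper's intermediate claim about choosing one neighbor from each of $2s+1$ distinct coordinate directions, which is not actually needed for the final count.
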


\begin{proof}
We shall prove this lemma by showing that a set of neighbors with certain properties specifies $f$. Consider $S \subseteq [n]$ such that $|S| = k \ge 2s+1$. Define
$N_{S}(x) = \{y_{i_1}, y_{i_2}, ..., y_{i_k}\}$ such that $\forall i_j \in S$, $y_{i_j} \in N_{i_j}(x)$, we claim that $f(x) = \fplu_{y \in N_{S}}(f(y))$.

To see this claim, note that, in $N_{S}(x)$, $x$ has neighbors in at least $2s+1$ different indices. Let the frequency of occurrences of $0$, $1$ and $u$ over these neighbors be $f_{v_1}, f_{v_2}$, and $f_{v_3}$. Let $f(x) = v_{i}$, we know that it will have the remaining $f_{v_{j}}+f_{v_{k}}$ to be the number of sensitive indices, which can at most be $s$. 
Since $x$ has at least $2s+1$ neighbors in $N_S(x)$, $v_{i}$ must occur at least $s+1$ times among the neighbors in $N_S(x)$. Hence the claim follows.

We will now use the claim to prove Lemma~\ref{thm:neighbors}.
Suppose we have $S \subseteq N(x)$ where $|S| \ge 4s+1$, then by PHP, $S$ contains elements from at least $2s+1$ different $N_{i}$'s. If not, as we know $|N_{i}(x)| = 2$, $N_{i}(x) \cap N_{j}(x) = \phi$ and $N(x) = N_{1}(x) \cup N_{2}(x) .... \cup N_{n}(x)$. If $S \subseteq N(x)$ and $S$ does contains elements from at most $2s$ different $N_{i}$'s then $|S| \le 4s$.

    Hence from the $2s+1$ different $N_{i}$'s we get the index set $S' \subseteq [n]$ and neighbors in each of these index sets $N_{S'}(x) = \{y_{i_1}, y_{i_2}, \ldots, y_{i_k}\}$ such that $|S'| \ge 2s+1$ and $y_{s_i} \in N_{s_{i}}(x) \forall s_{i} \in S'$. We know that $f(x) = \fplu_{y \in N_{S'}}(f(y))$. Suppose $f(x) = v_{i}$, we will now prove that $v_{i} = \fplu_{y \in S}(f(y))$. Let the other values be $v_{j}, v_{k}$. We claim that $f_{v_j} + f_{v_k} \leq 2s$. Suppose not. Then the number of indices that are sensitive are at least $s+1$ by pigeon hole principle, which is a contradiction. Hence $f_{v_j} + f_{v_k} \leq 2s$, $v_{i}$ occurs at least $2s+1$ times in $S$ and hence $v_{i} = \fplu_{y \in S}(f(y))$. This completes the proof of the \cref{thm:neighbors}.
\end{proof}



Now we complete the proof of the theorem. Suppose the values of a hazard-free extension $f$ with $\sen_u(f) \le s$ is known over a Hamming ball $B(x,4s)$ centered at input $x$ and of radius $4s$ in $\zuon$. We will prove through induction that if the values in any ball $B(x,r)$ where $r \ge 4s$ is known, the values at $S(x,r+1)$ is fixed.

For the induction step, suppose $B(x, r)$ where $r \ge 4s$ is fixed and $y \in S(x, r+1)$. We claim that $N(y) \cap S(x, r) = r+1$. Let $\mathsf{Diff}(x, y) = \{i \mid x_i \neq y_i\}$. Note that for every $i \in \mathsf{Diff}(x, y)$ such that $x_{i} \neq y_{i}$, we have a corresponding $z^i$ such that $z^{i}_{j} = y_{j} \forall j \neq i$ and $z^{i}_{i} = x_{i}$. $z^i \in S(x, r)$ and $z^i \in N(y)$. It is easy to see that none of the other neighbors of $y$ lie in $S(x,r)$. 
Hence from Theorem ~\ref{thm:neighbors} we can fix $f(y)$ since the value of $f$ at at least $4s+1$ neighbors are known. Similarly, we can fix all $f(y)$ for every $y \in S(x, r+1)$. This completes the proof.
\end{proof}

\begin{remark}
    Note that in the above arguments, the plurality function $\fplu : \{0, 1, u\}^n \rightarrow \zuo$ can be replaced by the hazard free extension of the $\fmaj$. In all the contexts that the function $\fplu$ has been used, the value ($v \in \zuo$) occurring most frequently in the input occurs strictly more than half the times. Suppose that for an input $x \in \zuon$, if $v = b \in \{0,1\}$ such that it occurs more than $\frac{n}{2}$ times in $x$, we have $\widetilde{\fmaj}(x) = \fplu(x) = b$. When $v=u$, this means that strictly more than $\frac{n}{2}$ bits in the input are $u$ and the two resolutions of $x$ with all the $u$'s replaced by 0's and 1's respectively have 0 and 1 as their outputs. Hence, $\widetilde{\fmaj}(x) = \fplu(x) = u$. Hence, the two functions are equivalent in the above contexts.
\end{remark}

\section{Discussion and Open Problems}

In this work, we looked at the natural extension of decision tree complexity in a setting where input bits could be unknown. We saw that in this setting the depth and size of decision trees could be exponentially larger compared to the Boolean decision tree where all bits have known, stable values. Is it possible to contain this exponential blowup as in fixed-parameter algorithms? i.e., are there natural parameters $\phi$ for Boolean functions such that $\size_\un(f) \leq g(\phi(f)) \size(f)^{O(1)}$ where $g$ is an arbitrary function?

We also showed that an analogue of the celebrated sensitivity theorem holds in the presence of uncertainty too. This result shows that decision tree complexity, a computational measure, is closely related to many structural parameters of Boolean functions. The proof of sensitivity theorem in this setting is considerably different from the proof of sensitivity theorem in the Boolean world. We can parameterize sensitivity and hazard-free decision tree complexity by restricting our attention to inputs that have at most $k$ unstable bits. The setting $k = 0$ gives us the Boolean sensitivity theorem and the setting $k = n$, the hazard-free sensitivity theorem. Can we unify these two proofs using this parameterization? i.e., is there a single proof for the sensitivity theorem that works for all $k$?

\section{Acknowledgements}
We thank anonymous reviewers for their many helpful suggestions that improved this paper.

\bibliographystyle{alpha}
\bibliography{refs}

\appendix

\section{Other Notions of Sensitivity for Hazard-free Extensions}

In this section we explore two other notions of sensitivity for hazard free extensions and also end up proving that these alternative notions are polynomially related to our choice of sensitivity in the 3 valued logic. We know that for a Boolean function, every input $x$ has one neighbor in a dimension and that edge is 'sensitive' when there is a change in value along that edge. In the 3 valued logic, that every dimension now has 2 edges leaves us with a few choices for the notion of Sensitivity for hazard free functions, in Definition ~\ref{def:hfsens} any of the edges being sensitive makes the dimension sensitive.

As an alternative, for an input with a stable output let us consider only stale bits that can be made unstable/ uncertain leading to uncertainty in the output. Note that this means that an edge can only be sensitive to at most one of the vertices.
\begin{definition}[{\bf Stable Sensitivity}]
Let $f$ be the hazard-free extension of some $n$-input Boolean function. For an $x\in \{0, \un, 1\}^n$ such that $f(x)$ is stable, we define the stable sensitivity $\stabs(f,x)$ of $f$ at $x$ as the number of bits that can be made unstable (individually) so that $f(x)$ is unstable. 
$$\stabs(f) = \max_{x : \textrm{$f(x)$ is stable}} \stabs(f,x)$$
\end{definition}


Another natural notion is sensitivity of `stability' denoting the changes from stable to unstable bits (and vice versa) that lead to an output change from a stable bit to an unstable bit (and vice versa).
 For $B \subseteq [n]$, $x \in \{0,1,u\}^n$, we define $x \oplus B = \{ y \in \{0,1,\un\}^n \mid \forall i \in B, x_i = $u$ \iff y_i \neq u , x_i \neq $u$ \iff y_i = u\}$.
\begin{definition}[{\bf Stability Sensitivity}]
Let $f$ be the hazard-free extension of some $n$-input Boolean function. For an $x\in \{0, \un, 1\}^n$, the Stability Sensitivity at the input $x$, denoted by $\slys(f,x)$, is defined as

$$\slys(f,x) = \left| \left\{ i : \begin{array}{l} \exists y \in x \oplus \{i\} \\
 f(x) = u \iff f(y) \neq u , f(x) \neq u \iff f(y) = u
\end{array}
\right\} \right|$$
The Stability Sensitivity of $f$, $\slys(f)$ is then defined as :
$$\slys(f) = \max_{x \in \{0,1,u\}^n} \slys(f,x)$$
\end{definition}



We shall prove that these measures of sensitivity $\stabs(f), \slys(f)$ are at polynomially related to $\sen(f)$ for any hazard free extension $f$. In fact, similar to the relation between $\sen(f), \hfbs(f)$ and $\hfcc(f)$ they are also away from $\sen(f)$ by a factor of at most 2. We use the notation $\stabs_\un^{(b)}, \slys_\un^{(b)}$ where $b\in \{0, 1\}$ to denote sensitivities for inputs that yield a $b$ output and $\slys_\un^{(u)}$ to denote Stability Sensitivity for inputs that yield a $\un$ output.

\begin{claim}
    Let $f$ be the hazard -free extension of some $n$-bit Boolean function. Then, $ \stabs^{(b)}(f) = \slys^{(b)}(f) = \hfs^{(b)}(f)$ for $b \in \{0, 1\}$ and $\slys^{(u)}(f) \le \hfs^{(u)}(f)$
\end{claim}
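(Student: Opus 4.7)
The plan is to separate the analysis by the value of $f(x)$ and by the stability of $x_i$, and then observe that under each combination the three notions either collapse to the same count or are related by an obvious inclusion of witness sets. The key enabler is monotonicity of the hazard-free extension with respect to the instability order, together with the characterization that $\widetilde{f}(y)=\un$ precisely when $y$ has two Boolean resolutions on which the underlying $f$ disagrees.

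First I would handle the case $f(x)=b\in\{0,1\}$. By monotonicity, if $x_i=\un$ then both $x|_{x_i\to 0}$ and $x|_{x_i\to 1}$ resolve to $b$, so such an $i$ contributes nothing to any of $\hfs^{(b)}(f,x)$, $\stabs^{(b)}(f,x)$, $\slys^{(b)}(f,x)$. So we may restrict attention to stable $x_i$, say $x_i=0$. The two neighbors along coordinate $i$ are $x'=x|_{x_i\to \un}$ and $x''=x|_{x_i\to 1}$. I would then prove the crucial equivalence
\[
\widetilde{f}(x') = \un \iff \widetilde{f}(x'')\neq b,
\]
which holds because the Boolean resolutions of $x'$ are exactly the union of those of $x$ and those of $x''$; since all resolutions of $x$ give $b$, instability of $\widetilde{f}(x')$ is equivalent to some resolution of $x''$ giving a value different from $b$. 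This equivalence, together with the observation that any sensitivity witness along coordinate $i$ for $\hfs^{(b)}$ must come from $x'$ or $x''$, forces $\hfs^{(b)}(f,x)$, $\slys^{(b)}(f,x)$, and $\stabs^{(b)}(f,x)$ to count the same set of indices $i$, namely the stable indices $i$ with $\widetilde{f}(x|_{x_i\to\un})=\un$. Taking the maximum over $x$ with $\widetilde{f}(x)=b$ yields the three equalities for $b\in\{0,1\}$.

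For the remaining inequality $\slys^{(\un)}(f)\le \hfs^{(\un)}(f)$, I would simply note that when $\widetilde{f}(x)=\un$, any coordinate $i$ witnessing stability-sensitivity produces a neighbor $y\in x\oplus\{i\}$ that differs from $x$ at only position $i$ and has $\widetilde{f}(y)$ stable, hence $\widetilde{f}(y)\neq \widetilde{f}(x)$; this is already a witness for hazard-free sensitivity. So the set of stability-sensitive coordinates at $x$ is contained in the set of hazard-free sensitive coordinates at $x$, and taking the maximum gives the inequality.

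The main obstacle is purely notational: keeping the three definitions straight in the stable-output case and verifying that the neighbor-flip witness for $\hfs^{(b)}$ (replacing $x_i=0$ by $x_i=1$) is really equivalent to the stability-toggle witness (replacing $x_i=0$ by $x_i=\un$). Once the equivalence $\widetilde{f}(x')=\un\iff \widetilde{f}(x'')\neq b$ is established via the resolution characterization of $\widetilde{f}$, everything else falls out by direct set inclusion/equality of the witness sets, and no quantitative estimate is needed.
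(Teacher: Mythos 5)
Your proposal is correct, and it takes a genuinely different route from the paper's. The paper only establishes the pointwise inequality $\stabs(f,x)=\slys(f,x)\le \hfs(f,x)$ at stable-output inputs, and then closes the gap at the level of the \emph{global maxima}: it invokes Theorem~\ref{thm:senstr} to locate an extremal input for $\hfs^{(1)}$ (resp.\ $\hfs^{(0)}$) at a largest prime implicant (resp.\ implicate), where every stable bit is sensitive even in the make-unstable sense, so that $\stabs^{(b)}(f,x)=\hfs^{(b)}(f,x)=\hfs^{(b)}(f)$ at that particular $x$. You instead prove the strictly stronger \emph{pointwise} statement that the three witness sets coincide at every $x$ with $\widetilde{f}(x)=b$, via the local equivalence
\[
\widetilde{f}\bigl(x|_{x_i\to\un}\bigr)=\un \iff \widetilde{f}\bigl(x|_{x_i\to\overline{x_i}}\bigr)\neq b,
\]
which is valid exactly as you argue: the Boolean resolutions of $x|_{x_i\to\un}$ are the union of those of $x$ and those of $x|_{x_i\to\overline{x_i}}$, and all resolutions of $x$ output $b$; combined with monotonicity of $\widetilde{f}$ (which rules out unstable coordinates contributing, and forces $\widetilde{f}(x|_{x_i\to\un})\in\{b,\un\}$), this shows the two possible $\hfs$-witnesses along coordinate $i$ are both equivalent to the single stability-toggle witness counted by $\stabs$ and $\slys$. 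What each approach buys: yours is elementary and self-contained, requiring no structural theorem about where maximum sensitivity occurs, and it yields pointwise equality rather than mere equality of maxima; the paper's route is weaker pointwise but comes packaged with the extremal characterization, directly identifying the common value $\hfs^{(b)}(f)$ with the largest prime implicant/implicate size, which the surrounding appendix then exploits. Your treatment of the $\un$-case, namely that every $\slys$-witness at a $\un$-input is already an $\hfs$-witness since a stability toggle at coordinate $i$ changes $x_i$ and produces a stable (hence different) output, is identical to the paper's one-line observation.
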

\begin{proof}
    It is easy to see that for any $x \in \zuon, b \in \{0, 1\}$ such that $f(x) = b, \stabs(f, x) = \slys(f, x) \le \hfs(f, x)$ and for any input $x$ such that $f(x)  = \un, \slys(f, x) \le \hfs(f, x)$.
    From ~\ref{thm:senstr} we know that $\hfs^{(1)}(f)$ occurs at the input $x$ that corresponds to the largest sized prime implicant (or implicate for $\hfs^{(0)}$) or the minimum sized maximal monochromatic subcube where all the stable bits are sensitive. Hence at this input $\stabs^{(1)}(f, x) = \hfs^{(1)}(f, x) = \hfs^{(1)}(f)$ and by a similar argument for $\stabs^{(0)}$ we have $ \stabs^{(b)}(f) = \slys^{(b)}(f) = \hfs^{(b)}(f)$ for $b \in \{0, 1\}$
\end{proof}
Since we already know that $\hfs^{(u)}(f) \le \hfs^{(0)}(f) + \hfs^{(1)}(f) - 1$, we know that $\frac{\hfs(f)}{2} \le \stabs(f) \le \slys(f) \le \hfs(f)$ thus proving that these notions of sensitivity are at most a factor of 2 away from our choice of sensitivity and thus they would all satisfy the Sensitivity Conjecture.

\section{Combinatorial Proof of the Sensitivity Theorem} \label{sec:CombProof}

In this section we will give an alternate proof for theorem ~\ref{thm:sbscc} by using combinatorial arguments on the sub-structures of the hypercube. This involves associating to every input $x \in \zuon$, a natural subcube $C(x)$ that contains all the resolutions of $x$. 
This argument throws different insights compared to the proof that we presented in the main text.

Recall that implicant (or implicate) inputs correspond to the monochromatic 1 (or 0) subcubes and the prime implicants (or implicates) correspond to those monochromatic 1 (or 0) cubes are maximal since they cannot be increased in any dimension without violating the property of monochromaticity. 

\subsection{Sensitivity and Colorful Subcubes}
\begin{theorem}\label{thm:sens}
    Let $f$ be the hazard-free extension of some Boolean function and $k_1, k_2$ are the sizes of the largest prime implicant and implicate respectively. Then 
    \begin{enumerate}
        \item $\sen_\un^{(1)}(f) = k_1, \sen_\un^{(0)}(f) = k_2$
        \item $\sen_\un^{(\un)}(f) \le k_1 + k_2 - 1$
    \end{enumerate}
    
\end{theorem}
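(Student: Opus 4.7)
The plan is to leverage Theorem~\ref{thm:senstr}, which already pins down where each of $\sen_\un^{(1)}$, $\sen_\un^{(0)}$, and $\sen_\un^{(\un)}$ attains its maximum: at a prime implicant, at a prime implicate, and at an input with exactly one $\un$, respectively. This reduces the job to a direct calculation at those special inputs, which the subcube viewpoint makes transparent: I identify each $x \in \zuon$ with its subcube $C(x)$ of resolutions, so an implicant is a monochromatic $1$-subcube and a prime implicant is a maximal one, and dually for implicates.

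For Part~(1), I would fix any prime implicant $p$ and show $\sen_\un(f,p)$ equals the number of stable bits of $p$. Every stable bit $i$ is sensitive: relaxing $p_i$ to $\un$ enlarges $C(p)$ in direction $i$, and by primality that enlargement cannot remain all-$1$, so the output drops to $\un$. Every unstable bit is insensitive: stabilizing it shrinks $C(p)$ and preserves the all-$1$ property. Maximizing over prime implicants gives $\sen_\un^{(1)}(f) = k_1$, and the dual argument yields $\sen_\un^{(0)}(f) = k_2$.

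Part~(2) is the main step. Let $y$ be a maximizing $\un$-input whose unique $\un$ sits at position $j$, and let $v,w$ be its two resolutions with $f(v)=1$, $f(w)=0$. The key structural observation is that any prime implicant $p \le v$ must have $p_j$ stable (otherwise $p \le w$ forces $f(w)=1$), and dually there is a prime implicate $q \le w$ with $q_j$ stable. I would then show that every sensitive bit of $y$ other than $j$ lies in $\mathrm{supp}(p) \cup \mathrm{supp}(q)$: flipping a stable bit $y_i$ to $\un$ keeps $v$ and $w$ as resolutions, so the output stays $\un$; sensitivity at $i$ therefore demands that flipping $y_i$ to $\overline{y_i}$ produces a stable output, which forces $f(v\oplus i)$ and $f(w\oplus i)$ to agree on that stable value. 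A standard primality argument (if $p_i = \un$ then $p \le v \oplus i$, making $f(v \oplus i)=1$) then places $i$ in $\mathrm{supp}(p)$ when both flips give $0$ and in $\mathrm{supp}(q)$ when both give $1$. Since $j$ itself is sensitive and lies in both supports, a union count gives $\sen_\un(f,y) \le 1 + (k_1-1) + (k_2-1) = k_1 + k_2 - 1$.

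The main obstacle is precisely this Part~(2) localization: converting hazard-free sensitivity at $y$ into Boolean sensitivity at $v$ or $w$ and then trapping those Boolean-sensitive bits inside $\mathrm{supp}(p)$, $\mathrm{supp}(q)$. The subtle gain is that $j$ is forced to lie in both supports simultaneously, which is what permits the overlap that sharpens the naive bound $k_1+k_2$ to $k_1+k_2-1$; without this shared coordinate the argument degrades. Part~(1) I expect to be essentially formal once the subcube-as-maximal-implicant framing is in place.
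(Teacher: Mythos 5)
Your proof is correct, but for Part~(2) it takes a genuinely different route from the paper's proof of this theorem. Part~(1) is essentially the paper's Lemma~\ref{lem:MonoCube}: at a prime implicant exactly the stable bits are sensitive (relaxing one to $\un$ breaks maximality of the monochromatic subcube; stabilizing an $\un$ shrinks it), combined with the localization to prime implicants/implicates. For Part~(2), however, the paper works with \emph{minimum colorful subcubes}: it proves (Lemma~\ref{lem:ColorfulCube}) that any such cube has a unique minority-colored vertex, shows by a descent argument that $\sen_\un^{(\un)}$ is maximized at an input $x$ representing such a cube, and then bounds $\Delta + j \le \hfs(f,y) + \hfs(f,z) - 1$ by counting sensitive directions at two \emph{stable} vertices $y, z$ inside $C(x)$, finally invoking Part~(1); the $-1$ there comes from $z$ having $y$ as a sensitive neighbor. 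You instead localize to inputs with exactly one $\un$ via Theorem~\ref{thm:senstr} from the main text (an edge is the degenerate, dimension-one colorful cube), and then trap every sensitive coordinate of $y$ inside $\mathrm{supp}(p) \cup \mathrm{supp}(q)$ for a prime implicant $p \le v$ and prime implicate $q \le w$, with the forced shared coordinate $j$ supplying the $-1$ — the same overlap device the paper deploys, but for certificate complexity ($D_0 \cap D_1 \neq \emptyset$ in the appendix proof of $\hfcc^{(\un)}(f) \le k_1 + k_2 - 1$), not for sensitivity. I verified your case analysis: flipping a stable bit of $y$ to $\un$ keeps both $v$ and $w$ as resolutions so the output stays $\un$, hence sensitivity at $i \neq j$ forces $f(v \oplus i) = f(w \oplus i) = b$, and primality then pins $i$ in $\mathrm{supp}(p)$ when $b = 0$ and in $\mathrm{supp}(q)$ when $b = 1$; this is sound, and there is no circularity since Theorem~\ref{thm:senstr} is proved independently of this theorem. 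What each approach buys: yours is shorter, gives a sharper pointwise statement (the sensitive set of a one-$\un$ input literally sits inside two supports meeting at $j$), and makes transparent the kinship with the main text's bound $\sen_\un^{(\un)}(\widetilde{f}) \le 2\sen(f) - 1$; the paper's colorful-cube argument is heavier but self-contained within the appendix and yields extra structural information (the unique-minority-vertex property of minimal colorful cubes) that the edge-based argument forgoes.
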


We firstly prove the characterization in the first part of the theorem as follows:
\begin{lemma}\label{lem:MonoCube}
    Let $f$ be the hazard-free extension of some Boolean function $\sen_\un^{(1)}(f, x)$ and $\sen_\un^{(0)}(f, x)$ are maximized when $x$ is a largest prime implicant (or implicate respectively),  and $\sen_\un^{(1)}(f) = k_1, \sen_\un^{(0)}(f) = k_2$
\end{lemma}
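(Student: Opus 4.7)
The plan is to argue entirely via the subcube encoding $x \mapsto C(x) := \{z \in \zon : z \ge x\}$, under which a $1$-input $x$ is an implicant precisely when $C(x) \subseteq f^{-1}(1)$, and $x$ is a prime implicant precisely when $C(x)$ is a maximal monochromatic $1$-subcube. I will handle the $\sen_\un^{(1)}$ half in detail; the $\sen_\un^{(0)}$ half is identical with the roles of $0$ and $1$ swapped (equivalently, by applying the same argument to the hazard-free extension of $\overline{f}$, whose prime implicants are exactly the prime implicates of $f$).

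For the lower bound, I would show that any prime implicant $y$ with $|y|$ stable coordinates has sensitivity exactly $|y|$. Fix a stable coordinate $i$ with $y_i = b$, let $y^\star$ denote the input obtained from $y$ by flipping bit $i$ to $\overline{b}$, and let $y'$ denote $y$ with bit $i$ set to $\un$, so that one has the decomposition $C(y') = C(y) \cup C(y^\star)$. Primality of $y$ forces $y'$ not to be an implicant, so $C(y^\star)$ must contain some $0$-input; hence setting $y_i = \un$ changes $f$ from $1$ to $\un$, certifying coordinate $i$ as sensitive. Conversely, refining any $\un$-coordinate of $y$ to a stable value produces an input in $C(y) \subseteq f^{-1}(1)$, so unstable coordinates of $y$ are never sensitive. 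Thus $\sen_\un^{(1)}(f, y) = |y|$, and choosing $y$ of maximum size $k_1$ gives $\sen_\un^{(1)}(f) \ge k_1$.

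For the matching upper bound at an arbitrary $1$-input $x$, I would pick any prime implicant $y \le x$ (so that $C(y) \supseteq C(x)$) and observe that every bit that is stable in $x$ but unstable in $y$ is non-sensitive: perturbing such a coordinate either to $\overline{x_i}$ or to $\un$ yields an input still inside $C(y)$, and hence still mapped to $1$. Consequently the sensitive coordinates of $x$ are contained in the stable coordinates of $y$, giving $\sen_\un^{(1)}(f, x) \le |y| \le k_1$. Combining this with the first step yields $\sen_\un^{(1)}(f) = k_1$, achieved precisely when $x$ is a largest prime implicant, and by duality $\sen_\un^{(0)}(f) = k_2$, achieved precisely when $x$ is a largest prime implicate.

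The main (and only genuinely delicate) step is the primality-to-sensitivity translation in the lower bound: identifying "$y$ is a prime implicant" with the fact that relaxing any stable bit of $y$ to $\un$ enlarges $C(y)$ by a sibling half-subcube $C(y^\star)$ that must meet $f^{-1}(0)$. Once this half-subcube decomposition is in hand, the rest reduces to elementary subcube containments, and the two halves of the lemma fall out symmetrically.
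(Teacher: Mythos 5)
Your proof is correct and follows essentially the same route as the paper's: both identify prime implicants with maximal monochromatic $1$-subcubes, show that exactly the stable bits of a prime implicant are sensitive (your half-subcube decomposition $C(y') = C(y) \cup C(y^\star)$ forcing $C(y^\star)$ to meet $f^{-1}(0)$ is the paper's maximality argument in different clothing), and handle an arbitrary $1$-input $x$ by passing to a prime implicant $y \le x$ beneath it. The only cosmetic difference is in the upper bound, where the paper transfers each sensitive index of $x$ to a sensitive index of $y$, while you directly contain the sensitive set of $x$ inside the stable coordinates of $y$ and count $|y| \le k_1$.
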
 

\begin{proof}
    Firstly we shall show that the maximum value of $\sen_\un^{(1)}(f, x)$ occurs at a prime implicant. Consider an $x$ such that $C(x)$ is a monochromatic 1 cube which is not maximal (i.e $x$ corresponds to an implicant but not a prime implicant). Suppose of the sake of contradiction that, maximum value of $\sen_\un^{(1)}(f)$ occurs at a $x$, we will now show that maximal 1 monochromatic cube that $C(x)$ belongs to which is represented by $C(y)$ (where $y$ corresponds to a prime implicant) has all the sensitive indices of $x$ to be sensitive. Note that none of the unstable bits in $x$ are sensitive. Observe that:

    \begin{description}
        \item [1.] Suppose $x_i$ is stable and $y_i$ is unstable then $x_i$ is not sensitive. One replacing $x_i$ by anything else to get $x', C(x')$ is still a subcube of $C(y)$. Hence $f(x') = 1$
        \item [2.] If $x_i = y_i = b'$ and $x_i$ is sensitive then $y_i$ would also be sensitive since the same replacement (on $x$ gives $x'$) on $y$ gives a $y'$ such that $C(x')$ is a subcube of $C(y')$\\Therefore $\sen_\un^{(1)}(f, x) \le \sen_\un^{(1)}(f, y)$
    \end{description}

Now we shall show that at input $x$ representing a maximal monochromatic 1 cube ($x$ corresponds to a prime implicant) exactly the stable bits are sensitive. Note that if any unstable bit is made stable, the resulting subcube is a subcube of $C(x)$ and hence the output is still 1. If any of the stable bits are made $u$ we get a non monochromatic subcube as $C(x)$ is a maximal monochromatic subcube and hence the output is $\un$. This implies that the $\sen_\un^{(1)}$ occurs at the maximum sized prime implicant and it's value is the size of that implicant. A similar argument holds for $\sen_\un^{(0)}$.
\end{proof}

We now upper bound $\sen_\un^{(\un)}$ as in the second part of ~\ref{thm:sens} using the notion of a minimal colorful subcube of a Boolean hypercube of $f$, which is  defined as a subcube that has at least one 0 and 1 output over its inputs and has at least one of its two sub cubes is monochromatic in every dimension. It is, in a sense, a dual of the maximal monochromatic subcube which was used to characterize $\sen_\un^{(b)}$. 

If it is represented by $C(x)$ where $x$ is the input whose resolutions exactly correspond to the sub-cube, then note that $f(x) = \un$ and every $\un$ in $x$ is sensitive (can be replaced with $b$ to get $x'$ that represents one of the monochromatic sub cubes in that dimension). We prove the following properties of a minimal colorful cube:

\begin{lemma}\label{lem:ColorfulCube}
    Let $f$ be the hazard-free extension of some Boolean function, then

    \begin{enumerate}
        \item Any minimum colorful cube $C(x)$ has exactly one input whose output is of value $b$ and the rest have an output value of $\Bar{b}$
        \item For any $x$ corresponding to a minimum colorful cube $\hfs(f, x) \le k_1 + k_2 -1$
        \item $\hfs^{(\un)}(f)$ occurs at an input $x$ corresponding to a a minimum colorful cube
    \end{enumerate}
\end{lemma}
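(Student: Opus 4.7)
The plan is to prove the three parts in sequence, with Part~1 supplying the structural fact used in Part~2, and Part~3 reducing immediately to Theorem~\ref{thm:senstr}.

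For Part~1, the defining property of a minimum colorful cube $C(x)$ is that for each unstable coordinate $j$ of $x$, one of the two sub-cubes obtained by fixing bit $j$ is monochromatic; I would let $s_j \in \{0,1\}$ denote the side that is monochromatic and $v_j$ its common value. The 1-dimensional case (a single Boolean edge) is immediate. Otherwise, for any two unstable coordinates $j \ne k$ the intersection of the two monochromatic half-cubes is non-empty, so its elements simultaneously take the value $v_j$ and the value $v_k$, forcing a single common value $v$. The union of the monochromatic half-cubes covers every element of $C(x)$ except the unique element whose unstable coordinates all take the opposite value $\bnot{s_j}$, so every element but one has value $v$; since the cube is colorful, that remaining element must have value $\bnot{v}$. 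Setting $b = \bnot{v}$ completes Part~1.

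For Part~3, Theorem~\ref{thm:senstr} already shows that $\hfs^{(\un)}(f,x)$ is maximized at an input $x$ containing exactly one $\un$. Such an $x$ represents a single Boolean edge whose endpoints have different $f$-values; the edge is colorful, and its two sub-cubes along the one unstable dimension are singletons and hence trivially monochromatic, so it is a minimum colorful cube.

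The main work is Part~2. By Part~1, fix the unique element $y^* \in C(x)$ of value $b$, WLOG $b=1$, so the remaining $2^u - 1$ elements of $C(x)$ are $0$-inputs, where $u \ge 1$. Every unstable bit of $x$ is sensitive, since flipping it to $s_j$ yields a monochromatic sub-cube. A stable bit $i$ can be sensitive only if flipping it in $x$ yields a monochromatic sub-cube; I would classify $i$ as type~A if that sub-cube is monochromatic~$1$ and type~B if monochromatic~$0$. Next I would fix a Hamming-neighbor $z^0 \in C(x)$ of $y^*$, which exists because $u \ge 1$ and any one-bit flip of $y^*$ along an unstable coordinate stays in $C(x)$ and has value $0$. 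Flipping any type~A bit of $z^0$ moves it into the corresponding monochromatic~$1$ sub-cube, so every type~A bit is Boolean-sensitive at $z^0$, and so is the unique coordinate $j^*$ where $z^0$ differs from $y^*$. Using the standard fact that Boolean sensitivity at a $0$-input is bounded by the size of any prime implicate containing it, this yields $\#A + 1 \le \sen(f, z^0) \le k_2$. Symmetrically, every type~B bit and every unstable bit of $x$ is Boolean-sensitive at $y^*$, giving $u + \#B \le \sen(f, y^*) \le k_1$. Adding, $\hfs(f,x) = u + \#A + \#B \le k_1 + k_2 - 1$.

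The main obstacle will be the book-keeping in Part~2: separating the sensitive bits of $x$ into unstable, type~A, and type~B categories, and carefully routing each category to the correct Boolean input so that no bit is double-counted. The asymmetric use of $y^*$ versus a chosen neighbor $z^0$ is what produces the tight $-1$ in $k_1 + k_2 - 1$, coming from the single unstable coordinate $j^*$ that is shared between the two accounts.
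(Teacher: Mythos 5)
Your proposal is correct, and Parts~1 and~2 are essentially the paper's own argument. Part~1 uses the same two observations (pairwise intersection of the monochromatic half-cubes forces a single common color $v$, and their union misses exactly one vertex, which colorfulness forces to $\bnot{v}$). Part~2 picks the same pair of Boolean witnesses as the paper --- the unique $b$-colored vertex of $C(x)$ (your $y^*$, the paper's $y$) and a Hamming neighbor of it inside $C(x)$ (your $z^0$, the paper's $z$) --- and distributes the sensitive directions identically: your inequalities $u + \lvert B\rvert \le \sen(f, y^*) \le k_1$ and $\lvert A\rvert + 1 \le \sen(f, z^0) \le k_2$ are, in the paper's notation, $\Delta + j_2 \le \hfs(f,y) \le k_1$ and $j_1 + 1 \le \hfs(f,z) \le k_2$, with the shared edge direction between $y^*$ and $z^0$ supplying the $-1$ in both accounts; your ``standard fact'' bounding Boolean sensitivity at a stable input by the prime implicant/implicate size is exactly Lemma~\ref{lem:MonoCube}. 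The only genuine divergence is Part~3: the paper proves it in place by a descent argument --- if some $\un$ of a maximizing $\un$-input is not sensitive, fix it to a stable value, check that every sensitive coordinate survives the restriction, and iterate until all $\un$s are sensitive, i.e., until a minimum colorful cube is reached --- whereas you outsource it to Theorem~\ref{thm:senstr}. Your shortcut is logically sound (Theorem~\ref{thm:senstr} is proved in the main text without reference to this appendix, and a one-$\un$ input with output $\un$ is indeed a degenerate minimum colorful cube, both singleton halves being trivially monochromatic), but it makes the appendix's advertised ``alternate combinatorial proof'' lean on the main text, whose proof of Theorem~\ref{thm:senstr} is in substance the very descent you are bypassing; the paper's in-place recursion keeps the appendix self-contained. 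One small point to state explicitly in your Part~2 classification: changing a \emph{stable} bit of $x$ to $\un$ only enlarges $C(x)$, so the output stays $\un$; hence the Boolean flip to $\bnot{x_i}$, which must then yield a monochromatic subcube, is the only way a stable bit can be sensitive --- this is what legitimizes splitting the sensitive stable bits into your types A and B.
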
 

\begin{proof}

    \begin{enumerate}
        \item Since we know that every $u$ in $x$ is sensitive, suppose $C(x_1), C(x_2), .., C(x_{\Delta})$ be the different monochromatic cubes in the dimensions $1, 2, .., \Delta$ by replacing $\un$ in $x$ with $b_1, b_2, ...$(wlog assume that the first $\Delta$ dimensions are $\un$'s for $x$). Observe that
            \begin{description}
            \item [a.] For any $ (i, j), C(x_i) \cap C(x_j) \neq \Phi$. Note that an input obtained by replacing $j^{th} \un$ with $b_j$ and $i^{th} \un$ by $b_i$ in $x$ gives $x'' \in C(x_i) \cap C(x_j)$. This implies that all the $C(x_i)$ have the same color (or the same value $b$)
            \item [b.] Consider resolutions of $x \notin C(x_1) \cup C(x_2) \cup .., C(x_{\Delta})$. For $y = y_1....y_{\Delta}x_{\Delta+1}...x_n$ to not be in $C_{x_i}$, $y_i$ has to be set to $\Bar{b_i}$ and hence there can only be 1 such resolution possible. Since $C(x)$ is colorful, this resolution is the only one which has a different color.
            \end{description}
        \item Wlog let $x = \un \un \un..\un s_1...s_it_1..t_j$ where all the $\un$'s and $t_i$'s are sensitive. Let us define inputs $x_i = \un \un \un..\un s_1 \cdots s_it_1. \Bar{t_{i}}\cdots t_j$ for all $i \in [j]$. We know that $x_1, x_2 \cdots x_j$ all correspond to monochromatic cubes (say $j_1$ cubes of color $b$ and $j_2$ of color $\Bar{b}$) and $\hfs(f, x) = \Delta + j$ where $\Delta$ is the dimension of $C(x)$. From the first part of the theorem, we know that there exists a $y \in C(x)$ such that $y$ is of color $b$ and the rest of the resolutions are colored $\Bar{b}$.  Consider the sensitivity of the inputs $y$ and a neighbor $z$ of $y$ in $C(x)$.
            \begin{description}
            \item [$\hfs(y)$] We know that for $y = b_1...b_{\Delta}s_1...s_it_1..t_j$, all the $b_i$'s are sensitive as it is the only resolution of $x$ that has the color $b$. Considering the $t_i$'s note that neighbors on flipping $t_i$ to $\Bar{t_i}$, say $y_i \in C(x_i)$ and hence at least $j_2$ dimensions are sensitive. $\hfs(f, y) \ge \Delta + j_2$
            \item [$\hfs(z)$] Let $z = z_1...z_{\Delta}s_1...s_it_1..t_j$ has an output $\Bar{b}$. Considering the $t_i$'s by a similar argument at least $j_1$ dimensions are sensitive. We have $\hfs(f, z) - 1 \ge j_1$ since $z$ has $y$ as a sensitive neighbor in that direction.
            \end{description}

        From the above we have $$\Delta + j_1 + j_2 = \Delta + j \le \hfs(f, y) + \hfs(f, z) - 1$$
    
        Since $y, z$ are stable inputs we have $\hfs^{(\un)}(f) \le k_1 + k_2 -1$
        \item Suppose for the sake of contradiction, $x$ which does not correspond to a minimum colorful cube gives $\hfs^{(\un)}(f)$. Wlog $x = \un \un \un..\un s_1 \cdots s_it_1. \Bar{t_{i}}\cdots t_j$ where $f(x)= \un$, $t_i$ is a sensitive stable bit, $s_i$ is a non sensitive stable bit. If a $\un$ at $i^{th}$ position is not sensitive consider a $y$ where wlog it is set to $0$. Since it was not sensitive we know that $f(y) = \un$. We will now prove that $\hfs(f, y) \ge \hfs(f, x)$ as follows:
            \begin{description}
                \item Suppose a $x_j = \un$ was sensitive for $x$, we know that setting $x_j$ to some $b$ gives a $x'$ where $f(x') \neq \un$. Note that the same bit is also sensitive for $y$ as changing it to $b$ gives $y'$ such that $C(y') \subset C(x')$ and hence $f(y') \neq \un$
                 \item Suppose $t_j$ was sensitive for $x$ then it has to be that replacing it by $\Bar{t_j}$ gives a $x'$ such that $f(x') \ne \un$, and a similar argument follows.
            \end{description}

        We can recursively keep choose any of the 2 colorful sub cubes of $x$ (in non sensitive dimension) and the $\hfs^{(\un)}$ does not reduce for that input. Hence we can finally get to an input where all $\un$'s are sensitive and that would correspond to a minimum colorful subcube.
    \end{enumerate}

\end{proof}

The lemmas ~\ref{lem:MonoCube} and ~\ref{lem:ColorfulCube} prove the theorem ~\ref{thm:sens}

\subsection{Certificate Complexity in Hazard free Setting}

\begin{theorem}
     Let $f$ be the hazard-free extension of some Boolean function and $k_1, k_2$ are the sizes of the largest prime implicant and implicate respectively, then

     \begin{enumerate}
         \item $\hfcc^{(1)}(f) = k_1, \hfcc^{(0)}(f) = k_2$
         \item $\hfcc^{(\un)}(f) \le k_1 + k_2 -1$
     \end{enumerate}
     
\end{theorem}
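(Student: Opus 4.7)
The plan is to characterize $\hfcc^{(b)}(f)$ for each $b \in \{0, 1, \un\}$ directly in terms of prime implicants and prime implicates, using \cref{thm:sens} for the lower-bound halves. For part~1, the lower bound follows from the basic inequality $\sen_\un^{(1)}(f, x) \le \hfcc^{(1)}(f, x)$ combined with \cref{thm:sens}: every sensitive bit of $x$ must appear in every certificate for $x$, for otherwise flipping the missing sensitive bit would change the output while the verifier sees no change. Taking the maximum over 1-inputs gives $\hfcc^{(1)}(f) \ge k_1$, and symmetrically $\hfcc^{(0)}(f) \ge k_2$. For the matching upper bound on $\hfcc^{(1)}(f)$, fix a 1-input $x$ and produce a prime implicant $p \le x$ (in the instability order) by starting at $x$ itself, which is already an implicant since $f(x) = 1$, and iteratively turning stable bits into $\un$ while preserving the implicant property; this terminates at a prime implicant with at most $k_1$ literals. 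The set $L_p$ of literal positions of $p$ is then a valid certificate for $x$: every $y \in \zuon$ agreeing with $x$ on $L_p$ satisfies $y \ge p$, so naturalness gives $f(y) \ge f(p) = 1$, i.e., $f(y) = 1$. The case $\hfcc^{(0)}(f) = k_2$ is symmetric via the CNF picture.

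For part~2, let $x$ be a $\un$-input and write $f$ as both the DNF of its prime implicants and the CNF of its prime implicates. Since $f(x) = \un$, the DNF must evaluate to $\un$ on $x$, which forces some prime implicant $p$ with $p(x) = \un$ (no term evaluates to $1$, yet not all terms can be $0$); symmetrically, some prime implicate $q$ satisfies $q(x) = \un$. The prover reveals $x_i$ for every $i \in L_p \cup L_q$. For any $y$ agreeing with $x$ on $L_p \cup L_q$ we have $p(y) = p(x) = \un$ and $q(y) = q(x) = \un$. The first identity says the DNF of $f$ at $y$ has a $\un$-valued term, so not all terms equal $0$ and $f(y) \ne 0$; the second says the CNF has a $\un$-valued clause, so not all clauses equal $1$ and $f(y) \ne 1$. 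Hence $f(y) = \un$ throughout the revealed subcube, and $|L_p \cup L_q|$ is a valid certificate size for $x$.

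The main obstacle is bounding $|L_p \cup L_q| \le k_1 + k_2 - 1$, which reduces to the combinatorial fact that every prime implicant and every prime implicate must share at least one variable. I would prove this by contradiction: if $L_p \cap L_q = \emptyset$, then one can simultaneously set the literals of $p$ to satisfy $p$ and the literals of $q$ to falsify $q$, and extend arbitrarily to a Boolean input $z \in \zon$; $p$ being an implicant forces $f(z) = 1$ while $q$ being an implicate forces $f(z) = 0$, a contradiction. With this, $|L_p \cup L_q| = |L_p| + |L_q| - |L_p \cap L_q| \le k_1 + k_2 - 1$, completing part~2.
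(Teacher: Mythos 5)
Your proof is correct, and the certificates you construct coincide with the paper's: for a stable input, the literal positions of a prime implicant (resp.\ implicate) lying below it in the instability order, and for a $\un$-input, the union of literal positions of a compatible prime implicant and implicate, with the shared-variable fact supplying the $-1$. The difference is in the verification machinery, and it is a genuine one given the stated purpose of the appendix. The paper's proof of this statement is deliberately formula-free: it picks a $0$-resolution $x_0$ and a $1$-resolution $x_1$ inside the subcube $C(x)$, covers them by maximal monochromatic subcubes $p_0, p_1$, validates the certificate $D_0 \cup D_1$ via a ``no contradiction'' lemma ($y$ and $p$ do not contradict iff $C(y)\cap C(p)\neq\emptyset$), and gets $D_0\cap D_1\neq\emptyset$ because disjointness would force two oppositely colored monochromatic cubes to intersect. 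You instead route everything through Huffman's hazard-free DNF of prime implicants and CNF of prime implicates --- extracting a $\un$-valued term $p$ and clause $q$ at $x$, and propagating $p(y)=\un$, $q(y)=\un$ to rule out outputs $0$ and $1$ --- which is essentially the main text's proof of the last inequality of \cref{thm:sbscc}; likewise your part-1 lower bound invokes $\sen_\un \le \cc_\un$ together with \cref{thm:sens} rather than re-deriving sensitivity of the stable bits of a largest prime implicant. The trade-off: your route is shorter but imports Huffman's theorem that these normal forms are hazard-free, whereas the point of \cref{sec:CombProof} is to get the bound purely from hypercube structure. On the plus side, you actually \emph{prove} the shared-variable lemma via a clean Boolean witness (the main text asserts it, and the appendix's corresponding step contains a typo, assuming $D_0 \cap D_1 \neq \emptyset$ instead of $D_0 \cap D_1 = \emptyset$ for contradiction), and your part-1 upper bound --- shrink the $1$-input $x$ to a prime implicant $p\le x$ and reveal $L_p$, using naturalness to get $f(y)\ge f(p)=1$ on the revealed subcube --- is a correct repair of the paper's garbled sentence claiming that ``any input with a stable output cannot have more stable bits than in $x$,'' which is false as literally written.
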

\begin{proof}
    \begin{enumerate}
        \item We firstly observe that for any input $x$ giving a stable output $b$, the minimum certificate does not have any unstable bits in it. Note that a subcube corresponding to a certificate (the remaining bits are set to $\un$) is monochromatic, if the minimum certificate contains a $\un$, that can be removed to get $C'$ and any input that matches with $C'$ would be a subcube of the monochromatic subcube and hence $C'$ is a certificate.\\
        Now consider the largest prime implicant $x$, we know from the above observation that $\hfcc^{(1)}$ is at most the number of stable bits in $x$ since any input with a stable output cannot have more stable bits than in $x$. However we know that for $x$ all stable bits are sensitive and hence a certificate should contain all of them. Hence we have $\hfcc^{(1)}(f) = \hfs^{(1)}(f, x) =  k_1$. A similar argument follows for $ \hfcc^{(0)}$

    \item To prove this, we shall use the notion of 'no contradiction' to talk about the intersection of subcubes in the hypercube.  We say $x, y\in \zuon$ do not contradict iff $\nexists i, x_i = b, y_i = \Bar{b}$. We observe that $x, y$ do not contradict iff $C(x) \cap C(y) \ne \phi$. 

    Consider an input $x$ such that $f(x) = \un$, then we know that $C(x)$ has inputs $x_0, x_1$ such that $f(x_0) = 0, f(x_1) = 1$. Suppose $p_0, p_1$ are the inputs in $\zuon$ that represent the maximal monochromatic sub cubes containing $x_0, x_1$, and $D_0, D_1$ are the stable dimensions in $p_0, p_1$ respectively.

    We shall now show that the indices $D_0 \cup D_1$ is a $\un$ certificate for $x$ using the following lemma:

    \begin{lemma}
        Any $y$ such that $y_i = x_i \forall i \in D_0 \cup D_1$ does not contradict with $p_0, p_1$
    \end{lemma}
    Wlog suppose for the sake of contradiction, $y_i = b, {p_0}_i = \Bar{b}$. We know that $i \in D_0$ and hence $y_i = x_i = b$. But we know that $x, p_0$ cannot contradict because $x_0 \in C(x) \cap C(p_0)$. Hence we have a contradiction.

    Hence from the above claim, we know that any $y$ such that $y_i = x_i \forall i \in D_0 \cup D_1$,  $C(y)$ contains an input whose $f$ value is 0 and also an input whose value is 1. We now have,

    $$\hfcc^{(\un)} \le |D_0 \cup D_1| = |D_0| + |D_1| - |D_0 \cap D_1|$$

    We show that $D_0 \cap D_1 \neq \phi$ by contradiction. Assume that $D_0 \cap D_1 \neq \phi$, this would mean that $p_0$ and $p_1$ do not contradict and hence $C(p_0) \cap C(p_1) \ne \phi$ which is a contradiction since they are monochromatic cubes of different colors. Hence we have 

    $$\hfcc^{(\un)} \le  |D_0| + |D_1| - |D_0 \cap D_1| \le k_2 + k_1  - 1$$
    
    \end{enumerate}

\end{proof}




\end{document}